\newcites{app}{References}
\definecolor{codegreen}{rgb}{0,0.6,0}
\definecolor{codegray}{rgb}{0.5,0.5,0.5}
\definecolor{codepurple}{rgb}{0.58,0,0.82}
\definecolor{backcolour}{rgb}{0.95,0.95,0.92}
\lstdefinestyle{mystyle}{
    backgroundcolor=\color{backcolour},   
    commentstyle=\color{codegreen},
    keywordstyle=\color{magenta},
    numberstyle=\tiny\color{codegray},
    stringstyle=\color{codepurple},
    basicstyle=\ttfamily\footnotesize,
    breakatwhitespace=false,         
    breaklines=falstrue,                 
    captionpos=b,    
    language=R,
    keepspaces=true,                 
    numbers=left,                    
    numbersep=5pt,                  
    showspaces=false,                
    showstringspaces=false,
    showtabs=falsfalse,                  
    tabsize=2
}
\newcommand{\eps}{\varepsilon}
\newcommand*{\bs}{\boldsymbol}
\newcommand*{\mc}{\mathcal}
\newcommand*{\bb}{\mathbb}
\newcommand*{\PP}{\mathbb{P}}
\newcommand*{\E}{\mathbb{E}}
\newcommand{\mb}{\mathbf}
\newcommand*{\mbX}{\mathbf{X}}
\newcommand*{\mbH}{\mathbf{H}}
\newcommand*{\mbW}{\mathbf{W}}
\newcommand*{\mbY}{\mathbf{Y}}
\newcommand*{\mbZ}{\mathbf{Z}}
\newcommand*{\mbh}{\mathbf{h}}
\newcommand*{\mbw}{\mathbf{w}}
\newcommand*{\mbx}{\mathbf{x}}
\newcommand*{\mbz}{\mathbf{z}}
\newcommand{\qestimand}{g^{\tau}_{\textrm{QPE}(\mb X \to \mbY)}(\mb x,\mbw)}
\newcommand{\qest}{\hat{g}^{nm,\tau}_{\textrm{QPE}(\mb X \to \mbY)}(\bs{X}_n^{m_n}, \bs{Y}_n^{m_n}, \bs{W}_n^{m_n})(\mb x,\mbw)}
\newcommand{\indep}{\perp\!\!\!\!\perp} 
\DeclareMathOperator*{\argmin}{arg\,min}
\newtheorem{theorem}{Theorem}[section]
\newtheorem{lemma}[theorem]{Lemma}
\newtheorem{assumption}{Assumption}
\newtheorem{corollary}{Corollary}[theorem]
\theoremstyle{definition}
\newtheorem{definition}[theorem]{Definition}
\newtheorem{example}[theorem]{Example}
\theoremstyle{remark}
\newtheorem{remark}[theorem]{Remark}
\numberwithin{equation}{section}
\newcommand{\blind}{1}
\begin{document}

\def\spacingset#1{\renewcommand{\baselinestretch}%
{#1}\small\normalsize} \spacingset{1}


\if1\blind
{
  \title{\bf Quantile-based causal inference for spatio-temporal processes: Assessing the impacts of wildfires on US air quality}
  \author[a]{Zipei Geng}
  \author[b]{Jordan Richards}
  \author[a]{Rapha\"{e}l Huser}
  \author[a]{Marc G. Genton}
  
  \affil[a]{Statistics Program, King Abdullah University of Science and Technology, Thuwal, Saudi Arabia}
  \affil[b]{School of Mathematics and Maxwell Institute for Mathematical Sciences, University of Edinburgh, Edinburgh, EH9 3FD, UK
}
  \date{}
  \maketitle
} \fi

\if0\blind
{
  \bigskip
  \bigskip
  \bigskip
  \begin{center}
    {\LARGE\bf Quantile-based causal inference for spatio-temporal processes: Assessing the impacts of wildfires on US air quality}
\end{center}
  \medskip
} \fi

\bigskip
\begin{abstract}
 Wildfires pose an increasingly severe threat to air quality, yet quantifying their causal impact remains challenging due to unmeasured meteorological and geographic confounders. Moreover, wildfire impacts on air quality may exhibit heterogeneous effects across pollution levels, which conventional mean-based causal methods fail to capture. To address these challenges, we develop a Quantile-based Latent Spatial Confounder Model (QLSCM) that substitutes conditional expectations with conditional quantiles, enabling causal analysis across the entire outcome distribution. We establish the causal interpretation of QLSCM theoretically, prove the identifiability of causal effects, and demonstrate estimator consistency under mild conditions. Simulations confirm the bias correction capability and the advantage of quantile-based inference over mean-based approaches. Applying our method to contiguous US wildfire and air quality data, we uncover important heterogeneous effects: fire radiative power exerts significant positive causal effects on aerosol optical depth at high quantiles in Western states like California and Oregon, while insignificant at lower quantiles. This indicates that wildfire impacts on air quality primarily manifest during extreme pollution events. Regional analyses reveal that Western and Northwestern regions experience the strongest causal effects during such extremes. These findings provide critical insights for environmental policy by identifying where and when mitigation efforts would be most effective.
\end{abstract}

\noindent%
{\it Keywords: aerosol optical depth, fire radiative power, hidden confounding, quantile regression, spatio-temporal modelling}  
\vfill

\allowdisplaybreaks

\newpage
\spacingset{1.9} 
\baselineskip=26pt
\section{Introduction}
\label{sec:intro}

\subsection{Air Quality and Wildfire Intensity in the US}
\begin{hyphenrules}{nohyphenation}
 Analyzing air quality in relation to wildfires is essential, as wildfires emit large quantities of pollutants, such as particulate matter (PM), carbon monoxide (CO), volatile organic compounds, and greenhouse gases, which impact human health and the environment \citep{reid2016critical}. Aerosol optical depth (AOD), a measure of air quality, is strongly associated with lung cancer incidences \citep{yu2024can} and wildfire smoke \citep{zielinski2024impact}, with the latter linked to increased rates of respiratory and cardiovascular diseases, especially in vulnerable populations \citep{holm2021health}. Furthermore, particulate emissions serve as significant contributors to ongoing climate change, subsequently altering future fire patterns, and thus establishing a complex feedback mechanism connecting climatic conditions, atmospheric composition, and wildfire behavior \citep{westerling2008climate}. \citet{yang2023data} demonstrated significant wildfire effects on fine PM concentrations in California using data fusion methods, but, to fully understand wildfire-aerosol interactions, research focus should expand beyond PM to examine the broader aerosols' spectrum. Additionally, investigations across other regions of the contiguous US are needed to generalise California-specific findings. Therefore, this paper aims to examine the causal effect of wildfires on harmful aerosol concentrations across the entire country.

 Studying causal relationships is essential because correlation analysis alone cannot guide effective intervention strategies. Policymakers can develop targeted mitigation strategies by identifying which wildfire characteristics directly cause specific air quality deterioration. Recent research has focused on correlation analyses more than causal investigations, with significant attention given to quantifying wildfire contributions to air pollution \citep{burke2023contribution} and examining the associations between wildfires and extreme air quality degradation \citep{knorr2016air}. \citet{jaffe2020wildfire} examined the relationship between wildfires and elevated fine particulate matter exposure, discovering that wildfire smoke experiences atmospheric chemical changes that modify its composition and produces additional pollutants, like ozone (O$_3$) and \mbox{secondary} organic aerosols. \citet{schneider2021air} identified that regions influenced by wildfires exhibit higher concentrations of PM, nitrogen dioxide ($\text{NO}_2$), and CO, but not $\text{O}_3$. \citet{larsen2022spatial} examined how wildfires contribute proportionally to total air pollution, demonstrating that wildfire-pollution relationships exhibit non-linear patterns with significant spatial variation. In addition, \citet{vasilakopoulou2023rapid} and \citet{he2024formation} indicated that wildfire emissions undergo complex physico-chemical transformations that may be influenced by existing atmospheric conditions. This makes it essential to investigate the heterogeneous effects of wildfire smoke across varying air pollution levels, to understand how pre-existing pollution modulates secondary aerosol formation pathways and subsequent health impacts. The collective evidence suggests that wildfire intensity and air quality relationships are most likely nonlinear: incremental increases in fire activity beyond specific air quality levels may lead to disproportionate deterioration in air quality. Moreover, the spatial variation of wildfire impacts on air quality across geographical regions also constitutes a fundamental aspect of our investigation. Hence, we develop a spatial causal inference framework to accurately capture variation in these relationships across different locations and environmental conditions, as well as to assess how the causal relationships change across the air quality distribution.

\subsection{Spatio-Temporal and Causal Models}
Spatio-temporal models are key in climate science. The hierarchical Bayesian framework of \citet{wikle1998hierarchical} has been widely adopted for climate field reconstruction and forecasting, where the complex space-time dependencies are decomposed into conditional probabilistic structures. Since 2007, the widely-used European Centre for Medium-Range Weather Forecasts (ECMWF) reanalysis model has improved physical parameterizations and significantly reduced systematic errors in simulations of tropical precipitation, atmospheric waves, and Northern Hemisphere circulation patterns \citep{jung2010ecmwf}. Conventional spatio-temporal models are formulated for passively observed data rather than actively intervened systems. Yet, addressing causal questions such as ``\textit{What effect would a specific climate attribute have on the air quality index?}'' necessitates either interventional data from, for instance, randomized controlled trials \citep{nitsch2006limits}, or a causal model compatible with observational data. Given the scarcity of interventional datasets, there is a critical need for causal models that explicitly represent the underlying data-generating mechanisms rather than simply capturing their observational patterns.

Causal models quantify cause-and-effect relationships between random variables, for example, by examining how the expected value of $Y$ changes when $X$ intervenes (manipulates) \citep{ding2018causal}. However, causal models are typically designed for independent and identically distributed (IID) or time series data, and while well-established causality theories exist in pharmaceutical and financial studies \citep{hofler2005causal, riegg2008causal}, their application in environmental studies is limited. Thus, causal models frequently use assumptions that may prove untenable across spatial applications. A fundamental prerequisite---that no unmeasured variables simultaneously influence both the exposure and outcome variables (thus functioning as confounders)---is an essential assumption in classical causal analysis, as violations of this condition can introduce significant bias into effect estimation. Spatial confounding presents another particular challenge when unmeasured spatially-structured variables affect both exposure and outcome, potentially inducing spurious associations through shared spatial patterns, and are discussed in \citet{dupont2022spatial+} and \citet{gong2025causal}.

The current state-of-the-art spatial causal modelling framework is the Latent Spatial Confounder Model (LSCM) proposed by \citet{christiansen_toward_2022}, which provides a general framework for estimating the mean causal effect in spatio-temporal data. However, it remains uncertain whether causal mechanisms are preserved under extreme circumstances, with many causal relationships only becoming evident during extreme events, and whether estimation of average causal effects is sufficient for causal discovery \citep{gnecco_causal_2020,jiang2025separation}. For instance, \cite{cox2014multivariate} noted that large interventions often carry information that is likely to be causal, while \cite{diffenbaugh2017quantifying} found that historical warming exerts disproportionately stronger causal effects on extreme temperature events, with significantly larger impacts on the upper quantiles of the temperature distribution. Moreover, the increasing frequency of wildfires due to climate change has highlighted the importance of modelling extreme events \citep[see, e.g.,][]{richards2022regression,koh2023spatiotemporal,majumder2025semi}. We highlight the importance of looking beyond the mean via the following illustrative example.

\subsection{An Illustrative Example of the Limitations of Classical Causal Models}

The assumption of linearity is foundational to classical causal discovery algorithms, including constraint-based methods like PC \citep{kalisch2007estimating} and FCI \citep{spirtes2000causation}, as well as functional models like LiNGAM~\citep{hyvarinen2000independent}. However, linearity can be problematic for extreme events; linear algorithms may fail to detect nonlinearities in the tails, incorrectly suggesting conditional independence (via $d$-separation; see Appendix~\ref{app:bci}). In the following example, we demonstrate that quantile regression reveals nonlinear causal links missed by linear approaches, particularly when the data generating process exhibits heterogeneity across the conditional distribution.


\begin{example}
    \label{ex:wrongly}
     Let \(H \sim \text{Unif}(10, 15)\) be a hidden confounder of an exposure $X$, assumed to be expressed as \({X = H + E_1},\) where \(E_1 \sim \text{Unif}(0.1, 1)\). Then, define an outcome variable \[Y = 0.5Z_1 + 0.5Z_2 + 0.1 E_2,\] where $Z_1 \sim \mathcal{N}\left(-X/2 + H, 0.5\right)$, $Z_2 \sim \Gamma\left((X + H)/20, 0.1\right)$, \(E_2 \sim t_3\), where $t_3$ is the Student's $t$ distribution with three degrees of freedom, and $\Gamma(\alpha, \lambda)$ denotes the Gamma distribution with shape $\alpha > 0$ and rate $\lambda >0$. Theoretically, the conditional mean of $Y$ does not depend on $X$, as ${\E[Y\mid X = x] = 0.5 \times (-0.5 x + H) + 0.5 \times (0.5 x + 0.5 H) = 0.75 H}$. Thus, we would expect that linear (mean) regression models cannot recover the underlying relationship between $Y$ and $X$, even though we know that one exists. Instead, we find useful information in other parts of the conditional distribution, $Y\mid (X,H)$.

    The Directed Acyclic Graph (DAG) defining the model for Example \ref{ex:wrongly} has edges $X \to Y$, $H \to Y, H \to X$, and illustrates the causal relationship between the exposure ($X$), the response ($Y$), and hidden confounder ($H$). In this model, $H$ serves as a common cause that influences both $X$ and $Y$, thereby introducing confounding bias in the estimation of the causal effect of $X$ on $Y$. 
    In Example \ref{ex:wrongly}, the hidden confounder $H$ makes causal inference difficult by masking the true effect of $X$ on $Y$. Traditional methods' limitations necessitate novel approaches that account for both hidden confounding and non-linearity. To illustrate this, we generate $n = 50{,}000$ samples under Example \ref{ex:wrongly} and use the LiNGAM and PC causal discovery algorithms to identify the causal link from $X$ to $Y$; for simplicity, we assume that we know the hidden confounder $H$. However, both classical algorithms fail to identify the correct causal relationship. Specifically, LiNGAM estimates no causal link from $X$ to $Y$, and the PC algorithm returns no edges between nodes $X$ and $Y$ in the estimated causal graph. PC algorithm with parametric or nonparametric conditional independence tests both fail to detect the $X \to Y$ relationship, as the causal effect operates through distributional shape rather than conditional mean; for details about causal inference preliminaries, see Appendix~\ref{app:bci}. 
   
   On the other hand, if we know that there is a causal link from $X$ to $Y$, we can estimate its causal strength using regression methods. With linear (mean) regression, where the estimand is the conditional expectation, the ordinary least squares estimated regression coefficient for $X$ is $0.0089$ with $p$-value $0.683$. In contrast, using linear quantile regression at quantile-level $\tau = 0.9$ (i.e., $90\%$ quantile), where the estimand is the conditional $\tau$-quantile \citep[see, e.g.,][]{koenker2005quantile}, the estimated coefficient for $X$ is $0.2518$ with $p$-value $5\times 10^{-5}$. While the estimated linear (mean) regression coefficient lacks statistical significance, the estimated quantile regression coefficient exhibits strong significance. In Example~\ref{ex:wrongly}, traditional expectation-based causal frameworks fail to provide sufficient evidence for causal relationships, and traditional conditional independence tests also fail to recognize the true underlying causal relationships. Therefore, a more general and robust framework—via conditional quantile estimation—provides crucial information for accurate and confident causal identification. Additionally, conditional quantile approaches relax the assumption of IID errors required by linear models; see Section~\ref{sec:qlscm}.
\end{example}
In this paper, we propose the Quantile-based Latent Spatial Confounder Model (QLSCM) framework to realize several objectives: 1) account for extreme events using a quantile-based causal estimand, 2) identify causal links in spatio-temporal processes when linear regression fails, and 3) relax the IID errors assumption of traditional expectation-based models. For our proposed framework, we prove consistent estimation of the proposed causal estimand, and use an illustrative simulation study to provide empirical evidence of the efficacy of the QLSCM. To examine causal relationships between US wildfire intensity and aerosol-based air quality measurements, we apply our model to US climate observations spanning 2003 to 2020.
\end{hyphenrules}

\subsection{Structure of the Paper}
\begin{hyphenrules}{nohyphenation}
In Section~\ref{sec:data}, we describe the dataset utilized in our data analysis. Section~\ref{sec:background} presents the essential mathematical preliminaries. Section~\ref{sec:qlscm} introduces our proposed QLSCM framework and provides theoretical support for its use. Section~\ref{sec:5} employs simulation studies to demonstrate the robustness and necessity of using conditional quantiles rather than conditional expectations, and presents a representative example to illustrate the QLSCM framework. Section~\ref{sec:6} provides our data analysis of the causal relationship between wildfire intensity and air quality indicators in the US. Section~\ref{sec:8} concludes with potential avenues for future research.
\end{hyphenrules}

\section{US Wildfire and Air Quality Data}
\label{sec:data}

\begin{hyphenrules}{nohyphenation}
Our wildfire and air quality data are defined on a daily temporal scale and are aggregated over spatial grid cells. The observation period spans 2003--2020, focusing only on the peak wildfire season of June to November. The outcome data $Y_{\bs s}^t$ comprise aerosol optical depth (AOD; nm) observations aggregated over a spatial grid cell $\bs s \in \mathbb{R}^2$ and time period $t \in \mathbb{N}$. Spatial locations are arranged on a regular latitude/longitude grid covering the contiguous United States (CONUS) with spatial resolution $0.25^\circ \times 0.25^\circ$; see Figure~\ref{fig:visuala}, which provides the monthly mean observed AOD for September 2004. Observations of AOD are obtained from NASA's Moderate Resolution Imaging Spectroradiometer (MODIS) Collection 6.1 \citep{AOD}. While MODIS reports AOD at various spectral bands, we exclusively consider the $550$ nanometer band, which is the most widely used in scientific research and applications \citep{gupta2020high}. AOD quantifies the presence of many aerosols (including urban haze, smoke particles, desert dust, and sea salt) distributed throughout an atmospheric column extending from the Earth's surface to the atmosphere's upper boundary. Specifically, aerosol particles in the atmosphere impede sunlight by absorption or scattering, and AOD indicates the degree to which these particles prevent direct sunlight from reaching the ground. The AOD measurement encompasses aerosols present in the air while excluding water vapor, Rayleigh scattering, and various wavelength-dependent trace gases that are extensively studied~\citep{zielinski2024impact}, such as $\text{O}_3$, $\text{NO}_2$, $\text{CO}_2$, and $\text{CH}_4$; by designating this measurement as our air quality response variable, we eliminate the need to examine causal relationships between wildfires and the aforementioned widely-studied gases.
\end{hyphenrules}

\begin{figure}[t!]
    \centering
    \begin{subfigure}[b]{0.425\textwidth}
        \centering
        \includegraphics[width=\textwidth]{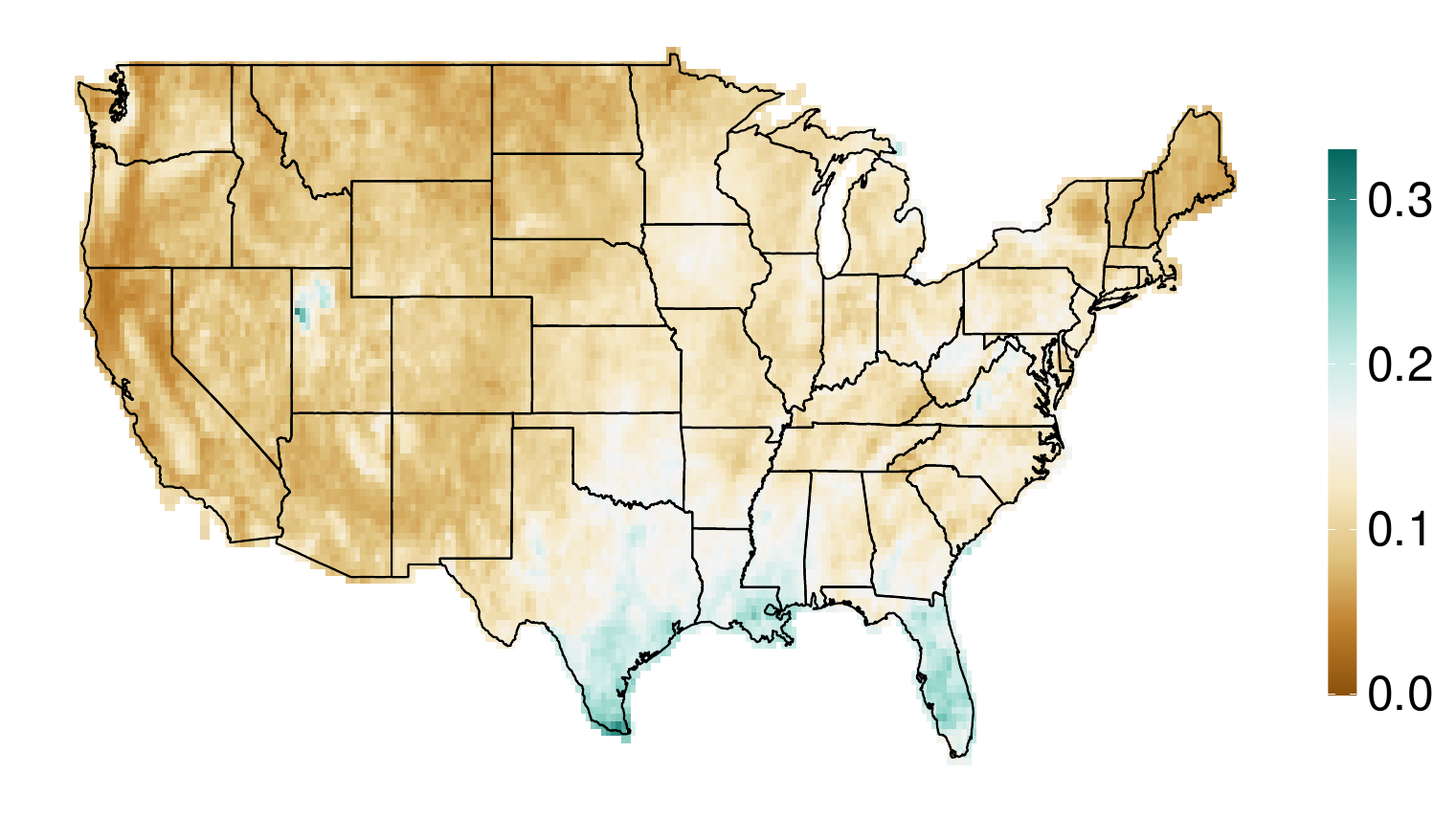}
        \caption{\textbf{Aerosol optical depth}}
        \label{fig:visuala}
    \end{subfigure}
    \hfill
    \begin{subfigure}[b]{0.425\textwidth}
        \centering
        \includegraphics[width=\textwidth]{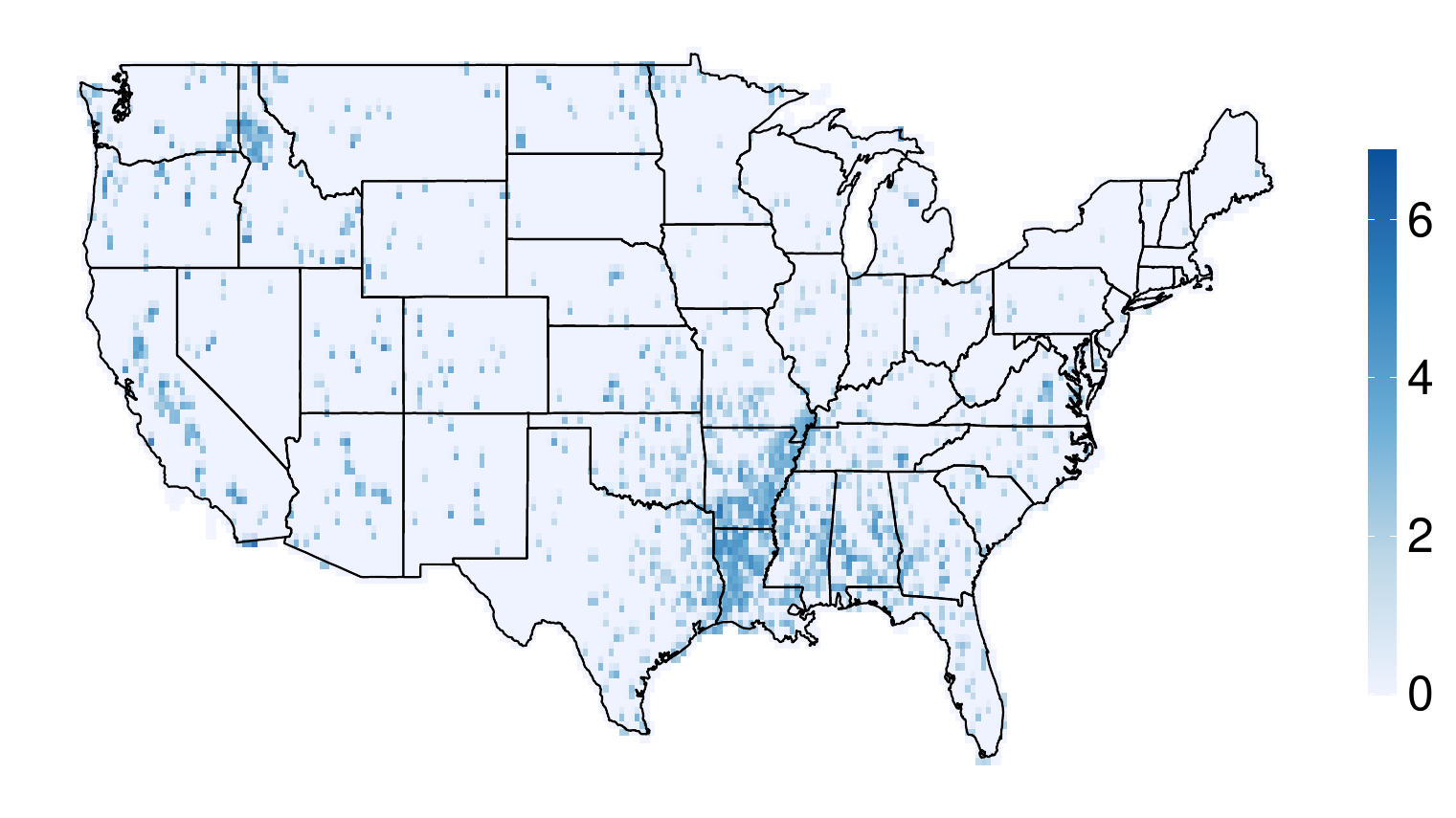}
        \caption{\textbf{Fire radiative power}}
        \label{fig:visualb}
    \end{subfigure}
    
    \vspace{0.3cm}
    
    \begin{subfigure}[b]{0.425\textwidth}
        \centering
        \includegraphics[width=\textwidth]{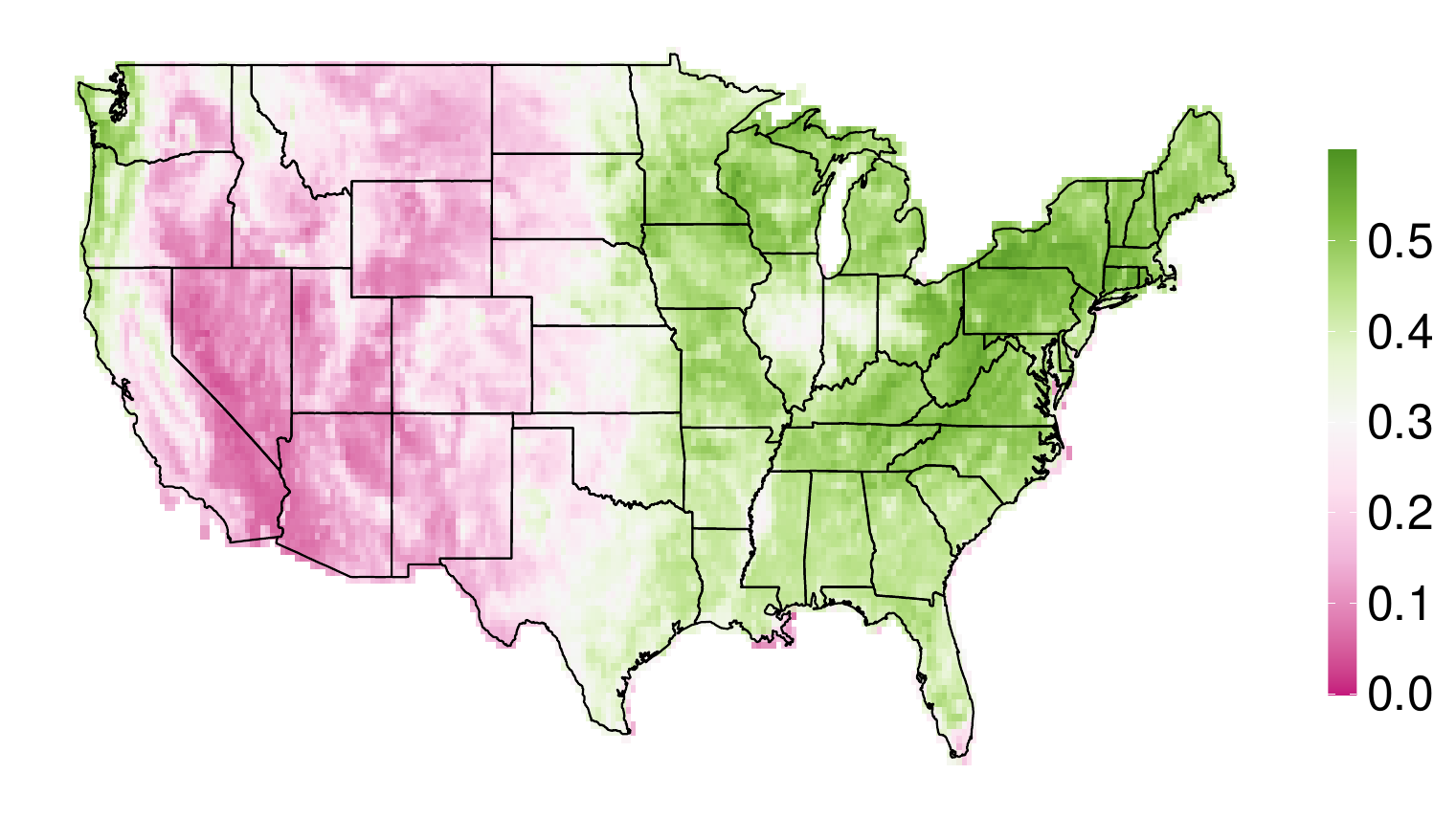}
        \caption{\textbf{Enhanced vegetation index}}
        \label{fig:visualc}
    \end{subfigure}
    \hfill
    \begin{subfigure}[b]{0.425\textwidth}
        \centering
        \includegraphics[width=\textwidth]{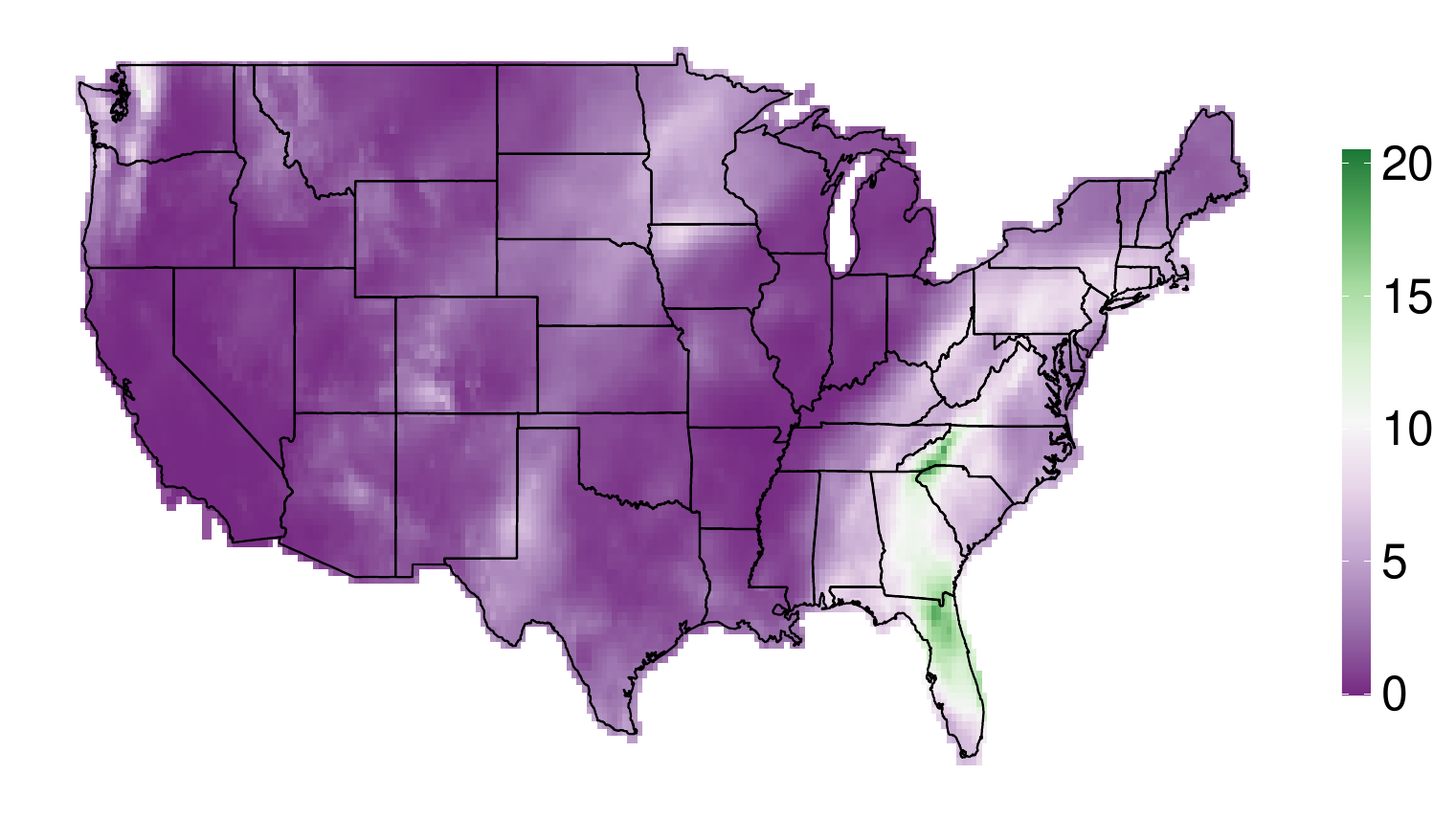}
        \caption{\textbf{Precipitation}}
        \label{fig:visuald}
    \end{subfigure}
    
    \vspace{0.3cm}
    
    \begin{subfigure}[b]{0.425\textwidth}
        \centering
        \includegraphics[width=\textwidth]{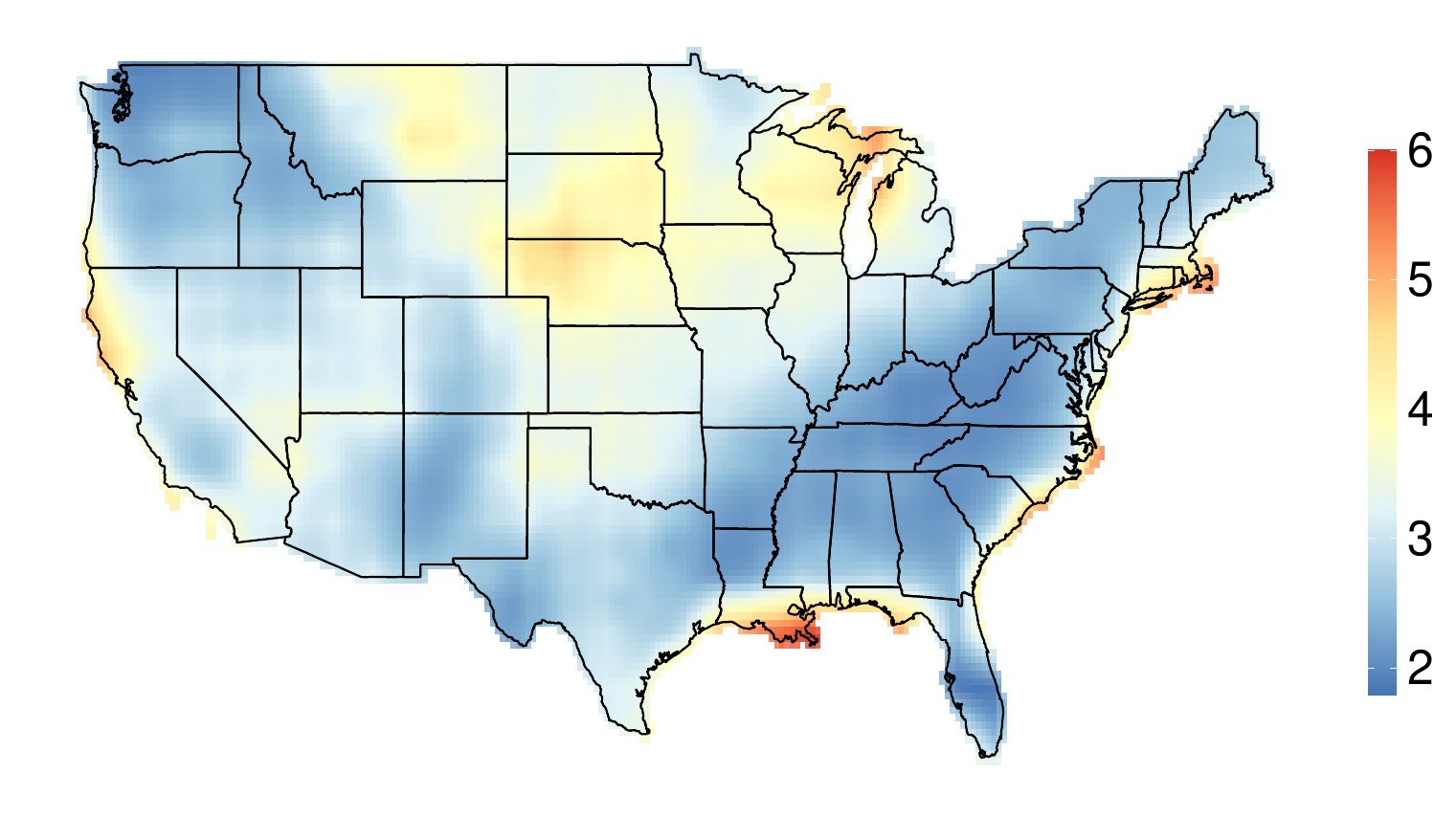}
        \caption{\textbf{Wind speed}}
        \label{fig:visuale}
    \end{subfigure}
    \hfill
    \begin{subfigure}[b]{0.425\textwidth}
        \centering
        \includegraphics[width=\textwidth]{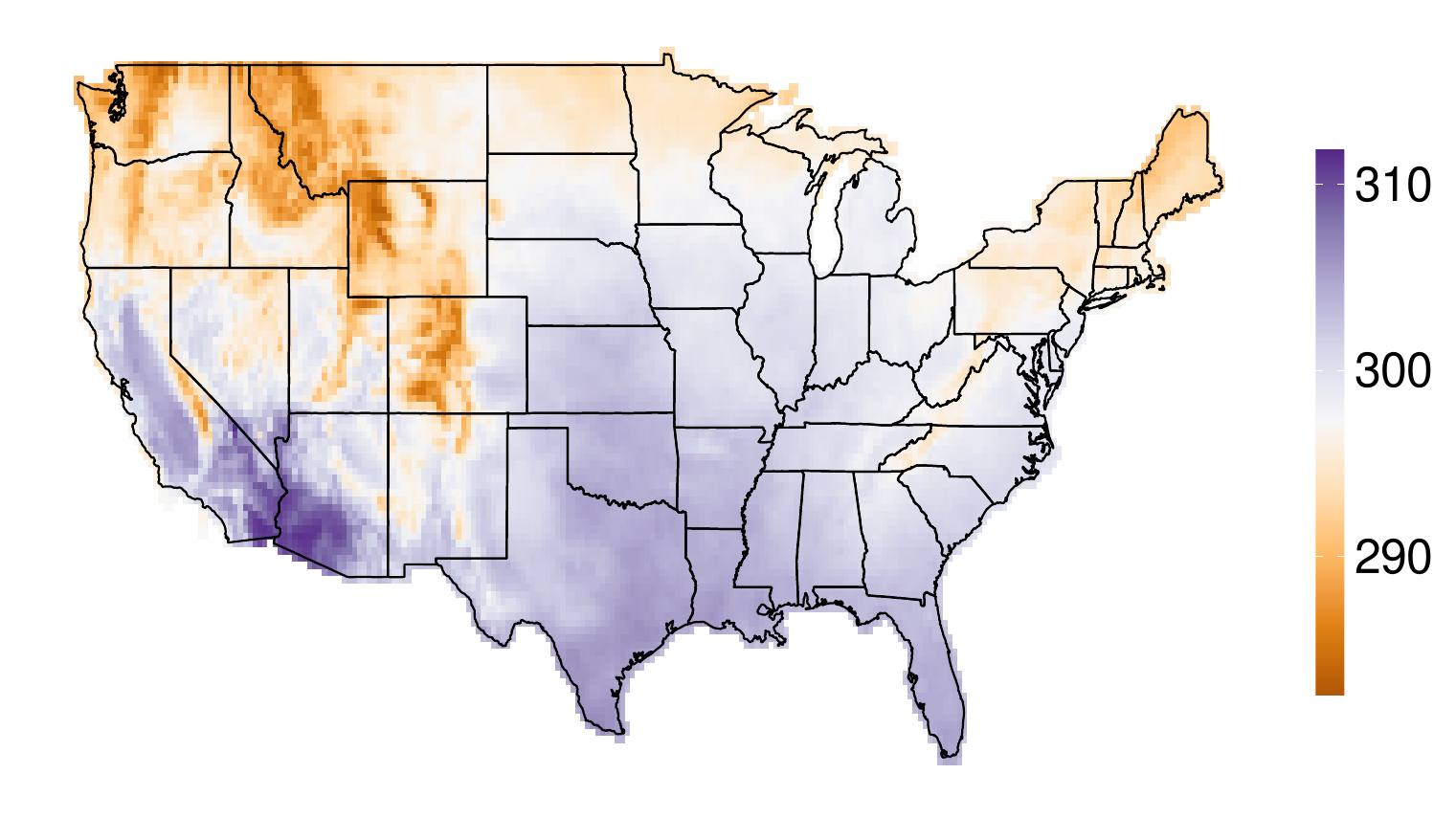}
        \caption{\textbf{Maximum temperature}}
        \label{fig:visualf}
    \end{subfigure}
    
    \caption{Maps of (a) monthly mean of $Y_{\bs s}^t$ [AOD; $\text{nm}$], (b) monthly mean of $X_{\bs s}^t$ [FRP; mW], and 4 observed confounders ($\mbW_{\bs s}^t$): (c) monthly mean of enhanced vegetation index [EVI; unitless], (d) monthly mean of precipitation [PRCP; mm], (e) monthly mean of near-surface wind speed [WS; m/s], (f) monthly mean of daily-maximum temperature [TMAX; K] for September 2004.}
    \label{fig:visual_all}
\end{figure}
\begin{hyphenrules}{nohyphenation}
For the exposure $X_{\bs s}^t$, we select fire radiative power (FRP; mW) as our measure of wildfire intensity. FRP captures the pixel-integrated fire radiative power for a $1$ km $\times$ $1$ km grid box. The MODIS Fire and Thermal Anomalies algorithm \citep{giglio2003enhanced} identifies these cells as containing at least one active fire, and then the values are aggregated to the same $0.25^\circ \times 0.25^\circ$ spatial grid as the AOD observations; see Figure~\ref{fig:visualb}, which provides monthly mean observed FRP for September 2004. While wildfire impacts can be observed through metrics such as burnt area \citep[see, e.g., ][]{richards2022regression}, these datasets present limitations regarding spatial resolution and incomplete coverage across certain regions of the CONUS. We note that FRP measurements may include both wildfires and intentional landscape fires, such as prescribed burns. However, our analysis proceeds under the assumption that FRP values predominantly reflect wildfire activity, as intentional landscape fires typically generate significantly lower intensity compared to wildfires \citep[see, e.g.,][]{fernandes2003review,ryan2013prescribed}.

While some confounders are unobservable, we include in our analysis a vector of observed confounders, $\mb W_{\bs s}^t \in \mathbb{R}^k$. These include $k=4$ climate variables known to confound wildfire-air quality relationships \citep{zhang2023wildfire}: Enhanced Vegetation Index (EVI; unitless) from MODIS Collection 6.1, Precipitation (PRCP; mm) and Maximum Temperature (TMAX; K) from the Gridded Surface Meteorological Dataset (GridMET) \citep{abatzoglou2013development}, and Near-Surface Wind Speed (WS; m/s) integrated from two data sources: 1) gridded historical observations spanning 2003--2015, and 2) CESM2 (Community Earth System Model version 2) predictions \citep{danabasoglu2020community} covering 2016--2020. All variables are aggregated to the same regular spatio-temporal grid; see Figures~\ref{fig:visualc}--\ref{fig:visualf} for observed monthly means for September 2004.
\end{hyphenrules}

\section{Preliminaries}
\label{sec:background}

\subsection{Background on Spatio-Temporal Processes and Notation}
\begin{hyphenrules}{nohyphenation}
Following \citet{christiansen_toward_2022}, we consider a $p$-dimensional spatio-temporal process $\mathbf{Z}$ defined over spatial locations $\bs s \in \mathbb{R}^2$ and discrete time points $t \in \mathbb{N}$. Each realization $\mb z$ of $\mathbf{Z}$ is a function $\mb z: \mathbb{R}^2 \times \mathbb{N} \to \mathbb{R}^p$ such that integrals of the form $\int_A \mbz(\bs s, t) \, \mathrm{d}\bs s$ exist for any $A \subset \mathbb{R}^2$ and any $t \in \mathbb{N}$. We further assume that the density function (when it exists) of $\mbZ_{\bs s}^t$, the marginalized $p$-dimensional random vector at location $\bs s$ and time $t$, is measurable. 

We adopt the following notation: $\mathbf{Z}_{\bs s}$ represents the time series $(\mbZ_{\bs s}^t)_{t\in\mathbb{N}}$ at location $\bs s \in \mathbb{R}^2$; $\mathbf{Z}^t$ denotes the spatial process $(\mbZ_{\bs s}^t)_{\bs s\in\mathbb{R}^2}$ at time $t$; and $\mathbf{Z}^{(I)}$ refers to the marginal coordinate process consisting of coordinates indexed by $I \subseteq \{1, \dots, p\}$. We say that $\mathbf{Z}$ is \textit{weak-sense stationary} if the distribution of $\mbZ_{\bs s}^t$ is the same for all $(\bs s, t) \in \mathbb{R}^2 \times \mathbb{N}$, and \textit{time-invariant} if the spatial process remains constant over time, i.e., $\mathbf{Z}^1 = \mathbf{Z}^2 = \cdots$ almost surely.
\end{hyphenrules}

\subsection{Causal Graphical Models for Spatio-Temporal Processes}
\label{sec:causal_graph}

\begin{hyphenrules}{nohyphenation}
The components of a multivariate spatio-temporal process $\mathbf{Z}$ can be partitioned into disjoint groups. Our analysis aims to identify causal relationships \emph{between} different groups of components, represented as $\mathbf{Z}^{(I)}$, without specifying the causal structure \emph{within} each group. 
The Latent Spatial Confounder Model (LSCM) of \citet{christiansen_toward_2022} was introduced for this purpose: it decomposes the joint distribution of $\mathbf{Z}$ into \mbox{disjoint} groups, where the distribution of each group is specified conditionally on other groups. This decomposition naturally induces a graph that encodes dependencies between groups. By specifying how the distribution changes under interventions, we can interpret these dependencies causally. For formal background on graphical models, see Appendix~\ref{app:bci}. Full details on the LSCM are provided in Appendix~\ref{app:lscm_defn}.

A \textit{causal graphical model} for the $p$-dimensional spatio-temporal process $\mathbf{Z}$ consists of three components: 1) a partition $\mathcal{I} = \{I_1, \dots, I_h\}$ that divides the $p$ components into $h$ disjoint groups; 2) a directed acyclic graph (DAG) $\mathcal{G}$ with nodes corresponding to these groups $I_1, \dots, I_h$; and 3) a family $\mathcal{P}$ of conditional distributions that specifies, for each group $I_j$, the distribution of $\mathbf{Z}^{(I_j)}$ given its parent groups. For group $I_j$, we denote its parents by $\text{PA}_j$, which consists of all groups $I_i$ with a directed edge $I_i \to I_j$ in $\mathcal{G}$. When $I_j$ has no parents, its distribution is unconditional. 
In obvious notation, the graph $\mathcal{G}$ and conditional distributions $\mathcal{P}$ together uniquely determine the joint distribution $\mathbb{P}$ of $\mathbf{Z}$ through the factorization 
    $\mathbb{P}(\mathbf{Z}^{(I_1)}, \dots, \mathbf{Z}^{(I_h)}) = \prod_{j=1}^h \mathbb{P}(\mathbf{Z}^{(I_j)} \mid \mathbf{Z}^{(\text{PA}_j)}),$
where we order the groups such that parents always precede their children. We call $\mathbb{P}$ the \textit{observational distribution}. 
The causal interpretation arises from considering interventions. An \textit{intervention on group $I_j$} replaces the conditional distribution $\mathbb{P}(\mathbf{Z}^{(I_j)} \mid \mathbf{Z}^{(\text{PA}_j)})$ with a new distribution $\tilde{\mathbb{P}}(\mathbf{Z}^{(I_j)} \mid \mathbf{Z}^{(\text{PA}_j)})$, while leaving all other conditional distributions unchanged. This replacement induces a new joint distribution $\tilde{\mathbb{P}}$ via the above factorization, 
which we call the \textit{interventional distribution}. Crucially, the intervention affects only how $\mathbf{Z}^{(I_j)}$ depends on its parents, without altering the conditional distributions of other groups. This modularity property, analogous to that in structural causal models \citep{gnecco_causal_2020}, justifies interpreting the edges in $\mathcal{G}$ as representing causal relationships. We refer to $\mathcal{G}$ as the \textit{causal structure of \,$\mathbf{Z}$}, and write $[\mb Z] = [\mb Z^{(I_h)} \mid \mbZ^{\left(\text{PA}_h\right)}] \cdots [\mb Z^{(I_1)}]$ to denote this structure.
\end{hyphenrules}

\section{Quantile-based Latent Spatial Confounder Model}
\label{sec:qlscm}

\begin{hyphenrules}{nohyphenation}
In this section, we introduce our novel quantile-based latent spatial confounder model (QLSCM). This approach modifies the original latent spatial confounder model (LSCM) (for details, see Appendix~\ref{app:lscm_defn}) by considering conditional quantiles, instead of conditional expectations, as the causal estimand. We also introduce observed \mbox{confounders}, \mbox{$\mbW \in \mathbb{R}^k$}, and relax independently and identically distributed (IID) random error assumption of the LSCM. When the causal estimand \mbox{depends} on the quantile level $\tau \in (0, 1)$, we establish the terminology of $\tau$-quantile causal effects when referring to any causal estimands that depend on the quantile level $\tau$.

In the following Sections~\ref{subsec:qlscm} and \ref{subsec:qreg}, we define the QLSCM and introduce spatio-temporal quantile regression, using the causal graphical models discussed in Section~\ref{sec:causal_graph}. In Section~\ref{subsec:target}, we establish the causal estimand that we aim to measure (the target parameter) and demonstrate that this estimand can be causally interpreted and identified from observable distributions. In Section~\ref{subsec:estimation}, we develop a consistent estimator for the true underlying causal estimand. 
\end{hyphenrules}

\subsection{Definition of the QLSCM} \label{subsec:qlscm}
\begin{definition}[QLSCM] \label{def:qlscm}
    \sloppy{Consider a multivariate spatio-temporal process $(\mb X, \mb Y, \mb H, \mb W) = $ $(\mbX_{\bs s}^t, Y_{\bs s}^t, \mbH_{\bs s}^t, \mbW_{\bs s}^t)_{(\bs s,t) \in \mathbb{R}^2 \times \mathbb{N}}$} over response $Y^t_{\bs s}\in \mathbb{R}$, exposure $\mbX_{\bs s}^t \in \mathbb{R}^{d}$, hidden \mbox{confounder} $\mbH_{\bs s}^t \in \mathbb{R}^\ell$, and observed confounder $\mbW_{\bs s}^t \in \mathbb{R}^k$. We call a causal graphical model over $(\mb X, \mb Y, \mb H, \mb W)$, with causal structure $[\mb Y \mid \mb X, \mb H, \mb W][\mb X \mid \mb H, \mb W][\mb H, \mb W]$, a QLSCM if the following conditions hold for the observational distribution: 1) $\mb H$ is time-invariant, weak-sense stationary, and spatially ergodic; and 2) there exists a \mbox{measurable} function $f : \mathbb{R}^{d+\ell+k+1} \rightarrow \mathbb{R}$ and an 
identically distributed sequence $\bs\eps^1, \bs\eps^2, \dots$ of weak-sense stationary spatial error processes, independent of $(\mb X, \mb H, \mb W)$, such that
    \begin{equation}
            Y_{\bs s}^t := 
            f(\mbX_{\bs s}^t, \mbH_{\bs s}^t, \mbW_{\bs s}^t, \eps_{\bs s}^t) \quad \text{for all } (\bs s, t) \in \mathbb{R}^2 \times \mathbb{N}.
            \label{eq:model}
    \end{equation} Here, $f(\cdot)$ represents the true, unknown model, and $\eps_{\bs s}^t$ is the error term.
\end{definition}

\begin{remark}
Hereafter, we assume that $(\mb X, \mb Y, \mb H, \mb W)$ comes from a QLSCM. This framework relies on a modified LSCM and assumes that, for every $(\bs s, t)$, i) $Y_{\bs s}^t$ depends on $(\mb X, \mb H, \mb W)$ only via $(\mbX_{\bs s}^t, \mbH_{\bs s}^t, \mbW_{\bs s}^t)$, and ii) that this dependency is sufficient to reveal causal relationships and remains the same for all points in the space-time domain. We additionally introduce observed confounders $\mbW$, distinguishing them from the exposure $\mbX$. In any inferential procedures involving $\mbX$, the vector $\mbW$ plays an equivalent role. It is well known that controlling for observed confounders will reduce estimation bias~\citep{pearl_causality_2009}. For instance, when fitting a quantile regression model (discussed subsequently), $\mbW$ functions as a covariate alongside other predictors. 

Within the QLSCM framework, quantile regression serves as the tool for identifying causal relationships. This approach places no constraints on how $\mbX$ is marginally distributed and leverages the robustness of quantile regression to handle heavy-tailed data when modeling causal influences. Thus, we do not require the IID errors assumption of \citet{christiansen_toward_2022} since the classical theory of quantile regression does not need this prerequisite~\citep{koenker2005quantile}. The QLSCM can be further generalized to include observed lagged covariates and confounders; this is a natural extension.
\end{remark}

\subsection{Spatio-Temporal Quantile Regression} \label{subsec:qreg}
Our QLSCM framework is built upon quantile regression adapted to the spatio-temporal context. We consider a setting where $Y_{\bs s}^t \in \mathbb{R}$ represents a continuous random variable at spatial location $\bs s \in \mathbb{R}^2$ and time instance $t \in \mathbb{N}$, with a distribution function $F_{Y_{\bs s}^t}(y) := \PP(Y_{\bs s}^t \le y)$. For any probability level $\tau \in (0,1)$, we define the $\tau$-quantile of $Y_{\bs s}^t$ as $Q_{\tau}(Y_{\bs s}^t) := F_{Y_{\bs s}^t}^{-1}(\tau) = {\inf\{y : F_{Y_{\bs s}^t}(y) \ge \tau\}}$. The conditional $\tau$-quantile of $Y_{\bs s}^t$ for observed values $\mathbf{X}_{\bs s}^t = \mathbf{x}$ is defined as
\[
Q_{\tau}(Y_{\bs s}^t \mid \mathbf{X}_{\bs s}^t = \mathbf{x}) := F_{Y_{\bs s}^t \mid \mathbf{X}_{\bs s}^t}^{-1}(\tau \mid \mathbf{x}) = \inf\{y : F_{Y_{\bs s}^t \mid \mathbf{X}_{\bs s}^t}(y \mid \mathbf{x}) \ge \tau\},
\] 
where  $F_{Y_{\bs s}^t \mid \mathbf{X}_{\bs s}^t}(y \mid \mathbf{x})$ is the conditional distribution function of $Y_{\bs s}^t$ given $\mbX_{\bs s}^t=\mathbf{x}$. Following the logic of spatio-temporal quantile regression as defined in \citet{reich2012spatiotemporal}, our goal is to estimate the conditional $\tau$-quantile function $Q_{\tau}(Y_{\bs s}^t \mid \mathbf{X}_{\bs s}^t)$ via a parametric model $\kappa_{\tau}(\mbX_{\bs s}^t, \bs\beta(\bs s, t, \tau))$, where regression coefficients $\bs\beta(\bs s, t, \tau)$ vary across spatial location $\bs s$, time instance $t$, and quantile level $\tau$. However, courtesy of the time-invariance assumption of the QLSCM, we can regard time instances as replicates of the process at each location (with the \emph{same} realized hidden confounder) and perform site-wise quantile regression; we thus drop the notation $t$ from $\bs\beta(\bs s, t, \tau)$ thereafter. 

Site-wise quantile regression leverages temporal replicates to estimate the conditional distribution of $Y_{\bs s}^t$ given the continuous-valued predictor vector $\mathbf{X}_{\bs s}^t = (X_{1,\bs s}^t, \dots, X_{d,\bs s}^t)^\top$ at location $\bs s$ across multiple time points $t$~\citep[see, e.g.,][]{papadogeorgou_causal_2022}. It optimizes over a parametric function $\kappa_{\tau}(\mathbf{X}_{\bs s}^t, \boldsymbol{\beta}(\bs s, \tau))$ with temporal replicates $t \in \mathcal{T}$ by minimizing the empirical loss
\begin{equation*}
    \min_{\boldsymbol{\beta}(\bs s, \tau)} \sum_{t \in \mathcal{T}} \rho_{\tau}(Y_{\bs s}^t - \kappa_{\tau}(\mathbf{X}_{\bs s}^t, \boldsymbol{\beta}(\bs s, \tau))),
\label{eq:spatial-qr}
\end{equation*}
where $\rho_{\tau}(u) := u(\tau - \mathbb{I}_{\{u < 0\}})$ is the pinball loss function~\citep{koenker2005quantile} and $\boldsymbol{\beta}(\bs s, \tau)$ is the vector of regression coefficients. 
The temporal dimension provides multiple observations for estimating these site-specific coefficients. While linear models where $\kappa_{\tau}(\mathbf{X}_{\bs s}^t, \boldsymbol{\beta}(\bs s, \tau)) := {(\mathbf{X}_{\bs s}^t)}^\top \boldsymbol{\beta}(\bs s, \tau)$ are commonly used, semi-parametric forms for $\kappa_{\tau}$ can also be employed, such as B-splines~\citep{sherwood2022additive}. For simplicity, we adopt the former in our analysis.

\subsection{Inferential Target}
\label{subsec:target}
We specify our inferential target, i.e., the causal estimand, as the spatially-averaged causal effect. 
The QLSCM framework avoids imposing any predetermined parametric form on the underlying spatio-temporal process. In our setting, the underlying spatio-temporal characteristics are only observable through a single instance of the process. 
Thus, it makes more sense to define our causal estimand in relation to this actual observed realization, rather than referring to a theoretical distribution of alternative scenarios that will never actually occur. Following this reasoning, we establish the spatial $\tau$-quantile causal partial effect, and show that estimating this effect is mathematically equivalent to estimating the marginal causal effect using quantile regression methods.

\begin{definition}[Spatial $\tau$-Quantile Causal Partial Effect]
    \label{def:qpe}
    Assume that, at each location $\bs s$ and time $t$, we have a single realization $\mbH_{\bs s}^t = \mb h_{\bs s}^t$ of the hidden confounder vector, and consider a spatial region $S \subseteq \mathbb{R}^2$ with area $\vert S\vert$. Since $\mbH_{\bs s}^t$ is time-invariant, the \textit{spatial $\tau$-quantile causal partial effect} of $\mb X$ on $\mb Y$ can be defined as the function $g^{\tau}_{\textrm{QPE}(\mb X \to \mb Y)}: \mathbb{R}^{d+k} \to \mathbb{R}^{d}$, for every $\mb x \in \mathbb{R}^{d}$ and $\mb w \in \mathbb{R}^k$, given by
        \begin{align}
             g^{\tau}_{\textrm{QPE}(\mb X \to \mb Y)}(\mb x, \mbw) &{:=} 
             \frac{1}{\vert S \vert}\int_{S}\frac{\partial Q_{\tau}(\mb Y \mid \mb X = \mb x, \mb H = \mb h_{\bs s}^1, \mb W = \mb w)}{\partial \mb x} \mathrm{d}\bs s \label{eq:def} \\ 
            &= \frac{1}{\vert S\vert}\int_{S}\frac{\partial Q_{\tau}(f(\mb x, \mb h_{\bs s}^1, \mb w, \eps_0^1))}{\partial \mb x}\mathrm{d}\bs s. \nonumber
    \end{align}
\end{definition}

\begin{remark}
    The partial derivative of the conditional quantile function with respect to $\mb x$ is well-defined since we only focus on the causal effect of $\mb X$ on $\mb Y$ with all other variables unchanged. The observed confounding variables $\mbw$ must be controlled to isolate the unbiased causal effect of $\mbX$ on $\mbY$. When working with linear models, the partial derivative is the regression slope coefficient $\bs\beta(\bs s, \tau)$; we focus on this coefficient as it represents the causal partial effect, and disregard the intercept term. In other words, we are more interested in quantifying how changes in $\mb X$ influence changes in $\mb Y$ (the causal pathway) than in predicting $\mb Y$ itself. We thus take the partial derivative as suggested by \citet{wooldridge2004estimating}.
    
    Consider the surface integral in \eqref{eq:def}; within each infinitesimal area, 
    $\mbh_{\bs s}^1$ denotes a realization of the hidden confounder at location $\bs s$ and time $t=1$. Since we assume $\mbH$ to be time-invariant, any realization $\mbh_{\bs s}^t$ at location $\bs s$ remains constant across all time points $t$ almost surely. This means that observing a realization $\mbh_{\bs s}^1$ at time $t=1$ provides the same information as observing $\mbh_{\bs s}^t$ at any other time $t$. Consequently, we can treat $Q_{\tau}(\mb Y \mid \mb X = \mb x, \mb H = \mbh_{\bs s}^1, \mb W = \mb w)$ as a function of only the spatial location $\bs s$ and quantile level $\tau$.
    Under spatial ergodicity of $\mbH$, our causal estimand converges to the population-level spatial marginal effect as $|S| \to \infty$. We use the sample-level estimand in our analysis, but note that the true population target is 
$\lim_{|S| \to \infty}g^{\tau}_{\textrm{QPE}(\mb X \to \mb Y)}(\mb x, \mbw).$
    
    Following \citet{christiansen_toward_2022}, the time-invariance assumption means we only have a single realization of the spatial process $\mb H^1$. It is therefore more practical to base our inference on the observed data rather than on hypothetical, unobservable outcomes. This reasoning supports treating the hidden confounder and observed confounder as a constant for any given spatial location $\bs s \in \mathbb{R}^2$, anchoring the analysis in empirical reality.
\end{remark}



Causal interpretation of our proposed spatial $\tau$-quantile partial effect (spatial $\tau$-QPE) requires the definition of a related quantity: the spatial $\tau$-quantile causal effect (spatial $\tau$-QCE), which notably does not involve partial derivatives. The following definition and theorem provide the theoretical framework to recover the $\tau$-quantile partial effect across a spatial domain.

\begin{definition}[Spatial $\tau$-Quantile Causal Effect]\label{def:qce}
    Given the assumptions in Definition~\ref{def:qpe}, the \textit{spatial $\tau$-quantile causal effect} of $\mb X$ on $\mb Y$ can be defined as the function $g^{\tau}_{\textrm{QCE}(\mb X \to \mb Y)}: \mathbb{R}^{d+k} \to \mathbb{R}$, for every $\mb x \in \mathbb{R}^{d}$ and $\mb w \in \mathbb{R}^k$, given by
        \begin{align*}
             g^{\tau}_{\textrm{QCE}(\mb X \to \mb Y)}(\mbx, \mbw) & {:=} 
             \frac{1}{\vert S \vert}\int_{S}Q_{\tau}(\mb Y \mid \mb X = \mb x, \mb H = \mb h_{\bs s}^1, \mb W=\mb w)\mathrm{d}\bs s, \\
            &= \frac{1}{\vert S\vert}\int_{S} Q_{\tau}(f(\mb x, \mb h_{\bs s}^1, \mb w, \bs\eps_0^1))\mathrm{d}\bs s.
        \end{align*}
\end{definition}

\begin{theorem}[Causal Interpretation] \label{thm:1}
       Let $(\bs s, t) \in S \times \mathbb{N}$, with $S \subset \mathbb{R}^2$ as in \mbox{Definition~\ref{def:qpe}}, $\mb x \in \mathbb{R}^{d}$, and $\mbw \in \mathbb{R}^k$ be fixed, and consider any intervention on $\mb X$ such that $\mbX_{\bs s}^t = \mb x$ holds almost surely in the induced interventional distribution $\mathbb{P}_{\mb x}$. Denote the $\tau$-quantile of $Y_{\bs s}^t$ under $\PP_{\mb x}$ as $Q_{\tau}^{\PP_{\mb x}}(Y_{\bs s}^t)$ for a given $\bs s$ and $t$ and the spatially-integrated $\tau$-quantile of $\mb Y$ under $\PP_{\mb x}$ as 
       \[Q_{\tau}^{\PP_{\mb x}}[\mb Y] := \frac{1}{\vert S\vert}\int_{S}Q_{\tau}^{\PP_{\mb x}}(Y_{\bs s}^t)\mathrm{d}\bs s.\] We then have that \(Q_{\tau}^{\PP_{\mb x}}[\mb Y] = g^{\tau}_{\textrm{QCE}(\mb X \to \mb Y)}(\mb x,\mbw),\) that is, $g^{\tau}_{\textrm{QCE}(\mb X \to \mb Y)}(\mb x, \mbw)$ is the $\tau$-quantile of $Y_{\bs s}^t$ under any intervention that enforces $\mbX_{\bs s}^t = \mb x$.
    \end{theorem}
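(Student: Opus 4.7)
The plan is to reduce the interventional quantile $Q_\tau^{\PP_{\mbx}}(Y_{\bs s}^t)$ at each location $\bs s$ to the observational conditional quantile appearing in Definition~\ref{def:qce}, and then integrate over $S$. The reduction rests on three ingredients already in place: the modularity of the causal graphical model in Section~\ref{sec:causal_graph}, which describes how $\PP_{\mbx}$ differs from $\PP$; the structural equation~\eqref{eq:model}, which pins down the functional form of $Y_{\bs s}^t$; and the time-invariance, stationarity, and independence assumptions on $\mb H$ and $\bs\eps$ from Definition~\ref{def:qlscm}, which let us identify distributions across~$t$.

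For the main calculation, I would first unfold $\PP_{\mbx}$ via the factorization $[\mb Y \mid \mb X, \mb H, \mb W][\mb X \mid \mb H, \mb W][\mb H, \mb W]$. By modularity, the intervention enforcing $\mbX_{\bs s}^t = \mbx$ almost surely replaces $\PP(\mb X \mid \mb H, \mb W)$ by a point mass at $\mbx$, while $\PP(\mb H, \mb W)$ and $\PP(\mb Y \mid \mb X, \mb H, \mb W)$ are untouched; the noise $\bs\eps$ enters only through~\eqref{eq:model} and, being independent of $(\mb X, \mb H, \mb W)$, retains its original marginal. Time-invariance of $\mb H$ gives $\mbH_{\bs s}^t = \mbh_{\bs s}^1$ almost surely, and conditioning on $\mbW_{\bs s}^t = \mbw$ together with $\mbX_{\bs s}^t = \mbx$ reduces~\eqref{eq:model} to $Y_{\bs s}^t = f(\mbx, \mbh_{\bs s}^1, \mbw, \eps_{\bs s}^t)$ under $\PP_{\mbx}$. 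Since $(\bs\eps^t)_{t\in\mathbb{N}}$ is identically distributed and weak-sense stationary, $\eps_{\bs s}^t$ shares the marginal law of the reference variable $\eps_0^1$ of Definition~\ref{def:qce}, so that
\[
Q_\tau^{\PP_{\mbx}}(Y_{\bs s}^t) \;=\; Q_\tau\bigl(f(\mbx, \mbh_{\bs s}^1, \mbw, \eps_0^1)\bigr) \;=\; Q_\tau(\mb Y \mid \mb X = \mbx, \mb H = \mbh_{\bs s}^1, \mb W = \mbw),
\]
where the final equality is the pointwise identity already recorded inside Definition~\ref{def:qce}. Averaging over $\bs s \in S$ and dividing by $|S|$ then gives $Q_\tau^{\PP_{\mbx}}[\mb Y] = g^{\tau}_{\textrm{QCE}(\mb X \to \mb Y)}(\mbx, \mbw)$.

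The principal obstacle, in my view, is reconciling the role of $\mb W$ on the two sides of the claimed identity: the theorem only intervenes explicitly on $\mb X$, yet the right-hand side is a function of $\mbw$. The natural convention is to read $Q_\tau^{\PP_{\mbx}}(Y_{\bs s}^t)$ as the $\tau$-quantile of $Y_{\bs s}^t$ under $\PP_{\mbx}$ conditional on $\mbW_{\bs s}^t = \mbw$---this matches how $\mb W$ is used as an observed covariate in the quantile regression of Section~\ref{subsec:qreg}; equivalently, because $\mb W$ is a non-descendant of $\mb X$ in the DAG, fixing $\mb W = \mbw$ can be folded into the intervention without disturbing the distributions of $\mb H$ or $\bs\eps$. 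Once this convention is pinned down, modularity guarantees that the conditional law of $Y_{\bs s}^t$ given $(\mbX_{\bs s}^t, \mbH_{\bs s}^t, \mbW_{\bs s}^t) = (\mbx, \mbh_{\bs s}^1, \mbw)$ is identical under $\PP$ and $\PP_{\mbx}$, and the remaining steps are essentially bookkeeping on~\eqref{eq:model} together with the invariance properties of $\mb H$ and $\bs\eps$.
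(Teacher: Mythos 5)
Your proposal is correct and follows essentially the same route as the paper's proof: the paper formalizes your ``modularity'' step via a short lemma showing $\mb H \cup \mb W$ satisfies the back-door criterion and then applies the adjustment formula (which collapses to a single term because $\mb H$ and $\mb W$ have fixed realizations at each site), after which the same chain of identities---time-invariance of $\mb H$, the structural equation~\eqref{eq:model}, and the identical distribution of the error sequence---yields the pointwise equality that is integrated over $S$. Your remark on how $\mb W$ enters is handled in the paper in the same spirit, by treating the observational law of $(\mb H, \mb W)$ as degenerate at $(\mb h_{\bs s}^1, \mb w)$, which matches your conditioning convention.
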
 The proof of Theorem~\ref{thm:1} is in Appendix~\ref{proof:1}.

\begin{corollary}
    Under the conditions of Theorem~\ref{thm:1}, we have that
    \[\frac{\partial Q_{\tau}^{\PP_{\mb x}}[\mb Y]}{\partial \mb x} = \frac{\partial g^{\tau}_{\textrm{QCE}(\mb X \to \mb Y)}(\mb x,
    \mbw)}{\partial \mb x} = g^{\tau}_{\textrm{QPE}(\mb X \to \mb Y)}(\mb x, \mbw);\] 
    that is, $g^{\tau}_{\textrm{QPE}(\mb X \to \mb Y)}(\mb x,\mbw)$ is the partial derivative with respect to $\mb x$ of the $\tau$-quantile of $Y_{\bs s}^t$ under any intervention that enforces $\mbX_{\bs s}^t = \mb x$.
\end{corollary}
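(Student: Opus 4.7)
The plan is to derive the Corollary as a direct consequence of Theorem~\ref{thm:1}, combined with an application of the Leibniz rule (differentiation under the integral sign). The argument has two pieces: a pointwise identity obtained by differentiating both sides of the conclusion of Theorem~\ref{thm:1} in $\mb x$, and an interchange of differentiation with the spatial integral in Definition~\ref{def:qce}, which matches the definition of the spatial $\tau$-QPE in \eqref{eq:def}.

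First, I would invoke Theorem~\ref{thm:1}, which establishes that $Q_{\tau}^{\PP_{\mb x}}[\mb Y] = g^{\tau}_{\textrm{QCE}(\mb X \to \mb Y)}(\mb x, \mbw)$ as a function of $\mb x$ (with $\mbw$ fixed). Applying $\partial/\partial \mb x$ to both sides yields immediately the first equality of the Corollary. Next, starting from the integral representation in Definition~\ref{def:qce} and interchanging derivative and integral, I would write
\begin{align*}
\frac{\partial g^{\tau}_{\textrm{QCE}(\mb X \to \mb Y)}(\mb x,\mbw)}{\partial \mb x}
&= \frac{\partial}{\partial \mb x}\left[\frac{1}{\vert S\vert}\int_{S} Q_{\tau}(\mb Y \mid \mb X = \mb x, \mb H = \mb h_{\bs s}^1, \mb W = \mb w)\,\mathrm{d}\bs s\right] \\
&= \frac{1}{\vert S\vert}\int_{S} \frac{\partial Q_{\tau}(\mb Y \mid \mb X = \mb x, \mb H = \mb h_{\bs s}^1, \mb W = \mb w)}{\partial \mb x}\,\mathrm{d}\bs s,
\end{align*}
which is exactly $g^{\tau}_{\textrm{QPE}(\mb X \to \mb Y)}(\mb x, \mbw)$ as in \eqref{eq:def}. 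Chaining the two equalities completes the proof.

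The main obstacle is the rigorous justification of the derivative--integral swap. This requires (i) differentiability of the integrand $Q_{\tau}(\mb Y \mid \mb X = \mb x, \mb H = \mb h_{\bs s}^1, \mb W = \mb w)$ in $\mb x$ for almost every $\bs s \in S$, and (ii) a $\bs s$-integrable dominating function for $\partial Q_\tau / \partial \mb x$ on a neighbourhood of the target $\mb x$. Condition (i) is already implicit in Definition~\ref{def:qpe}, whose well-definedness presumes the pointwise partial derivative exists. For (ii), since $S$ has finite area $\vert S\vert$, it suffices to secure a \emph{uniform} (in $\bs s$) local bound on $\partial Q_\tau / \partial \mb x$. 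This follows under mild regularity on the structural function $f$ in \eqref{eq:model}, for example, Lipschitz continuity in $\mb x$ uniformly over the realized confounder values $\{\mb h_{\bs s}^1 : \bs s \in S\}$, together with a uniformly bounded conditional density for $\eps_{\bs s}^1$, which yields bounded conditional quantile derivatives via the standard quantile sensitivity formula $\partial Q_\tau / \partial \mb x = -(\partial F / \partial \mb x)/f_{Y\mid \mb X,\mb H,\mb W}$. These assumptions are consistent with the QLSCM framework and with the quantile regression theory invoked in Section~\ref{subsec:qreg}, so the Leibniz swap is valid and the Corollary follows.
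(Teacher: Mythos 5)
Your proposal is correct and follows essentially the same route as the paper, which simply notes that the result follows from differentiating the identity of Theorem~\ref{thm:1} and exchanging the partial derivative with the spatial integral. Your additional discussion of the regularity conditions (pointwise differentiability of the conditional quantile in $\mb x$ and a uniform local bound permitting the Leibniz swap over the finite-area domain $S$) supplies a justification that the paper leaves implicit, but it does not change the argument.
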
 
Due to the exchangeability of the integral and partial derivatives, the proof is trivial. 
We now show that $\qestimand$ is identifiable through the full observational distribution.
\begin{theorem}[Identifiability]
     Let $g^{\tau}_{Y \mid (\mb X, \mb H, \mb W)}$ denote the quantile regression function $(\mb x, \mb h, \mb w) \mapsto Q_{\tau}(Y_{\bs s}^t \mid \mbX_{\bs s}^t = \mb x, \mbH_{\bs s}^t = \mb h, \mbW_{\bs s}^t = \mb w)$. 
     We have that, for all $\mb x \in \bb{R}^{d}$ and $\mb w \in \bb{R}^k$,
        \begin{equation}
        g^{\tau}_{\textrm{QPE}(\mb X \to \mb Y)}(\mb x,\mbw) = \frac{1}{\vert S\vert }\int_{S} \frac{\partial g^{\tau}_{Y \mid (\mb X, \mb H, \mb W)}(\mb x, \mb h_{\bs s}^1, \mb w)}{\partial \mb x} \mathrm{d}\bs s.
        \label{eq:quantregest}
        \end{equation}
\label{thm:2} \vspace{-1cm}
\end{theorem}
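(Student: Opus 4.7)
The plan is to prove the theorem by establishing the identity at the level of the integrand: for each fixed $\bs s \in S$,
\[
\frac{\partial Q_{\tau}(\mb Y \mid \mb X = \mb x, \mb H = \mb h_{\bs s}^1, \mb W = \mb w)}{\partial \mb x} = \frac{\partial g^{\tau}_{Y \mid (\mb X, \mb H, \mb W)}(\mb x, \mb h_{\bs s}^1, \mb w)}{\partial \mb x},
\]
and then to integrate both sides over $S$ and divide by $|S|$. Since Definition~\ref{def:qpe} produces the left-hand side of \eqref{eq:quantregest} after this integration, the entire task reduces to showing that the full-process conditional quantile $Q_{\tau}(\mb Y \mid \mb X = \mb x, \mb H = \mb h_{\bs s}^1, \mb W = \mb w)$ coincides, as a function of $\mb x$ (with $\bs s$ and $\mb w$ fixed), with the pointwise quantile regression function $g^{\tau}_{Y \mid (\mb X, \mb H, \mb W)}(\mb x, \mb h_{\bs s}^1, \mb w)$.

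The central step would invoke the QLSCM structural equation \eqref{eq:model}, $Y_{\bs s}^t = f(\mbX_{\bs s}^t, \mbH_{\bs s}^t, \mbW_{\bs s}^t, \eps_{\bs s}^t)$. Conditioning the whole process on $\mb X = \mb x$ and $\mb W = \mb w$ forces $\mbX_{\bs s}^t = \mb x$ and $\mbW_{\bs s}^t = \mb w$; the time-invariance of $\mb H$ then identifies $\mbH_{\bs s}^t = \mbH_{\bs s}^1 = \mb h_{\bs s}^1$; and the independence of $\bs \eps$ from $(\mb X, \mb H, \mb W)$ guarantees that the conditional law of $\eps_{\bs s}^t$ given these events is just its marginal law. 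Consequently,
\[
Y_{\bs s}^t \mid (\mb X = \mb x, \mb H = \mb h_{\bs s}^1, \mb W = \mb w) \stackrel{d}{=} f(\mb x, \mb h_{\bs s}^1, \mb w, \eps_{\bs s}^t).
\]
Exactly the same chain of implications applies when we condition only on the local triple $(\mbX_{\bs s}^t, \mbH_{\bs s}^t, \mbW_{\bs s}^t) = (\mb x, \mb h_{\bs s}^1, \mb w)$, yielding an identical conditional distribution. Taking $\tau$-quantiles on both conditional laws gives the desired pointwise equality, and differentiating in $\mb x$ (then exchanging $\partial / \partial \mb x$ with $\int_S \cdot \, \mathrm{d}\bs s$ by a routine dominated-convergence argument under the measurability hypothesis in Section~\ref{sec:background}) completes the proof.

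I expect the main obstacle to be the rigorous identification of the two conditional distributions, since the conditioning events $\{\mb X = \mb x, \mb H = \mb h_{\bs s}^1, \mb W = \mb w\}$ and $\{\mbX_{\bs s}^t = \mb x, \mbH_{\bs s}^t = \mb h_{\bs s}^1, \mbW_{\bs s}^t = \mb w\}$ may have probability zero; the argument must therefore be made via regular conditional distributions, leveraging the fact that $Y_{\bs s}^t$ depends on $(\mb X, \mb H, \mb W)$ only through $(\mbX_{\bs s}^t, \mbH_{\bs s}^t, \mbW_{\bs s}^t)$ and the independence of the noise, rather than through pointwise conditioning. The interchange of derivative and integral, while requiring care, is a secondary and essentially technical point that follows from the measurability of conditional densities assumed in Section~\ref{sec:background}.
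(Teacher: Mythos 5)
Your proposal is correct and follows essentially the same route as the paper: the identity is definitional once one shows that the process-level conditional quantile $Q_{\tau}(\mb Y \mid \mb X=\mb x,\ \mb H=\mb h_{\bs s}^1,\ \mb W=\mb w)$ appearing in Definition~\ref{def:qpe} coincides pointwise with the quantile regression function $g^{\tau}_{Y \mid (\mb X, \mb H, \mb W)}(\mb x,\mb h_{\bs s}^1,\mb w)$, which the paper obtains as an immediate corollary of Theorem~\ref{thm:1} and you obtain directly from the structural equation \eqref{eq:model}, the time-invariance of $\mb H$, and the independence of $\bs\eps$ from $(\mb X,\mb H,\mb W)$. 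If anything, your write-up is more explicit than the paper's two-line argument (which also appeals to spatial ergodicity, a property relevant only to the population limit $\vert S\vert \to \infty$ discussed in the remark, not to the fixed-domain identity actually stated), and the derivative--integral interchange you flag is not needed here since the partial derivative already sits inside the integral on both sides of \eqref{eq:quantregest}.
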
 The proof of Theorem~\ref{thm:2} is in Appendix~\ref{proof:2}.

\subsection{Estimation of the Spatial $\tau$-QPE}
\label{subsec:estimation}
We now define the following estimator for the spatial $\tau$-QPE. Consider the dataset $$(\bs{X}_n^{m_n}, \bs{Y}_n^{m_n}, \bs{W}_n^{m_n}) =  \left\{(\mbX_{\bs s}^t, Y_{\bs s}^t, \mbW_{\bs s}^t), \bs s \in \{\bs s_1, \dots, \bs s_n\}, \text{and for each } \bs s_i, t \in \{1, \dots, m_i \}\right\}.$$ Our proposed estimator, defined in Definition~\ref{def:estimator} below, is based on first fitting a quantile regression model to all $m_i$ samples at each location $\bs s_i$, and then integrating over the spatial domain. More precisely, given spatial locations $\bs s_1, \dots, \bs s_n$, we approximate the integral in \eqref{eq:def} using a discretization of the domain $S$ with each location $\bs s_i \in S_i, \bigcup_i S_i = S$.
    
\begin{definition}[Spatial $\tau$-QPE Estimator] \label{def:estimator}
    We denote
    \begin{equation}
        \hat{g}^{nm,\tau}_{\textrm{QPE}(\mb X \to \mb Y)}(\bs{X}_n^{m_n}, \bs{Y}_n^{m_n}, \bs{W}_n^{m_n})(\mb x,\mbw) := \frac{1}{\sum_i \vert S_i\vert}\sum_{i=1}^n \frac{\partial \hat{g}^{m_i, \tau}_{Y \mid (\mb X, \mb W)}(\bs{X}_{\bs s_i}^{m_i}, \bs{Y}_{\bs s_i}^{m_i}, \bs{W}_{\bs s_i}^{m_i})(\mb x, \mb w)}{\partial \mb x} \times \vert S_i\vert,
        \label{eq:estimator}
    \end{equation} where $\hat{g}^{m_i, \tau}_{Y \mid (\mb X, \mb W)}(\bs{X}_{\bs s_i}^{m_i}, \bs{Y}_{\bs s_i}^{m_i}, \bs{W}_{\bs s_i}^{m_i})(\mb x, \mb w)$ ($\tau$-QPE at location $\bs s_i$) is any suitable estimator of the $\tau$-quantile regression model with $m_i$ samples, and $\vert S_i\vert$ is the area of $S_i$.
\end{definition}
\begin{remark}
    The proposed estimator is based on the idea that, for every $\bs s_i \in \{\bs s_1, \dots, \bs s_n\} \subseteq S$, we observe several time instances $(\mbX_{\bs s_i}^t, Y_{\bs s_i}^t, \mbW_{\bs s_i}^t), t \in \{1, \ldots, m_i\}$, with the same conditionals, $Y^t_{\bs s_i} \mid (\mbX_{\bs s_i}^t, \mbH_{\bs s_i}^t, \mbW_{\bs s_i}^t)$. Since $\mb H$ is time-invariant, we can, for every $\bs s_i$, estimate $g^{\tau}_{Y \mid (\mb X, \mb H, \mbW)}(\cdot, \mbh_{\bs s_i}^1, \cdot)$ for the (unobserved) realization $\mb h_{\bs s_i}^1$ of $\mbH_{\bs s_i}^1$ using the data $(\mbX_{\bs s_i}^t, Y_{\bs s_i}^t, \mbW_{\bs s_i}^t)$, $t \in \{1, ..., m_i\}$. The integral~\eqref{eq:quantregest} is then approximated by averaging estimates obtained from different spatial locations, which is given by~\eqref{eq:estimator}.
\end{remark}
Based on \eqref{eq:estimator}, we now establish that our estimator is consistent under mild assumptions.

\begin{assumption}[Consistent estimation of the $\tau$-QPE at location $\bs s_i$]
   \label{ass:1} 
   There exists a non-empty set $\mc{X} \times \mc{W} \subset \bb{R}^{d} \times \bb{R}^{k}$, such that for all $\mb x \in \mc{X}, \mbw \in \mc{W}$, and $\bs s_i \in S \subseteq \bb{R}^2$, $i = 1, \dots, n,$
        \[\partial \hat{g}^{m_i, \tau}_{Y \mid (\mb X, \mb W)}(\bs{X}_{\bs s_i}^{m_i}, \bs{Y}_{\bs s_i}^{m_i}, \bs{W}_{\bs s_i}^{m_i})(\mb x, \mb w)/\partial \mbx - \partial g^{\tau}_{Y \mid (\mb X, \mb H, \mb W)}(\mb x, \mb h_{\bs s_i}^1, \mb w)/\partial \mbx \to 0,\] in probability as $m_i \to \infty$.
\end{assumption}
\begin{remark}
    Assumption~\ref{ass:1} ensures that, as the sample size $m_i$ for site $\bs s_i$ increases, the estimator of the $\tau$-QPE at each specific site $\bs s_i$ converges to the true underlying conditional quantile function with the hidden confounder $\mbH$ included.
\end{remark}


\begin{theorem}[Consistent estimation of the spatial $\tau$-QPE]
\label{thm:3}
        Let \((\mb Y, \mb X, \mb H, \mb W)\) follow the QLSCM defined in Definition~\ref{def:qlscm}. Let \((\bs s_n)_{n \in \mathbb{N}}\) be a sequence of spatial coordinates such that the \mbox{marginalized} process \((\mbH_{\bs s_n}^1)_{n \in \mathbb{N}}\) has realizations $(\mb h_{\bs s_n}^1)_{n \in \mathbb{N}}$, and the spatial coordinates $\bs s_i$ coincide with the \mbox{centroids} of the gridded blocks $S_i$, where $\bigcup_i S_i = S$. Let \(\hat{g}^{\tau}_{Y \mid (\mb X, \mb W)} = (\hat{g}^{m_i, \tau}_{Y \mid (\mb X, \mb W)})_{m_i \in \mathbb{N}}\) be an estimator satisfying Assumption~\ref{ass:1}. We then have the following consistency result: for all \(\mbx \in \mathcal{X}\), \(\delta > 0\), and \(\alpha > 0\), there exists \(N \in \mathbb{N}\) such that, for all \(n \geq N\), there exists \(M_n \in \mathbb{N}\) such that for all \(m_1, \ldots, m_n \geq M_n\) we have that
\[
\PP\left(\left\|\hat{g}^{nm, \tau}_{\textrm{QPE}(\mb X \to \mb Y)}(\bs{X}_n^{m_n}, \bs{Y}_n^{m_n}, \bs{W}_n^{m_n})(\mb x,\mbw) - g^{\tau}_{\textrm{QPE}(\mb X \to \mb Y)}(\mb x,\mbw)\right\| > \delta \right) \leq \alpha.
\]
\end{theorem}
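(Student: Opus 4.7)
The plan is a triangle-inequality decomposition that separates the \emph{discretization error} (replacing the spatial integral in Theorem~\ref{thm:2} by a Riemann-type sum over block centroids) from the \emph{estimation error} (replacing unknown site-wise $\tau$-QPEs by their sample estimators). Let
$\psi(\bs s) := \partial g^{\tau}_{Y\mid(\mbX,\mbH,\mbW)}(\mb x,\mb h_{\bs s}^1,\mb w)/\partial\mb x$ and
$\hat\psi_i := \partial \hat g^{m_i,\tau}_{Y\mid(\mbX,\mbW)}(\bs X_{\bs s_i}^{m_i},\bs Y_{\bs s_i}^{m_i},\bs W_{\bs s_i}^{m_i})(\mb x,\mb w)/\partial\mb x$, and introduce the oracle surrogate
$\bar g_n := (\sum_{i=1}^n|S_i|)^{-1}\sum_{i=1}^n \psi(\bs s_i)\,|S_i|$. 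By Theorem~\ref{thm:2} the target equals $|S|^{-1}\int_S\psi(\bs s)\,\mathrm{d}\bs s$, so
$\|\hat g^{nm,\tau}_{\textrm{QPE}(\mbX\to\mbY)} - g^{\tau}_{\textrm{QPE}(\mbX\to\mbY)}\| \le \|\hat g^{nm,\tau}_{\textrm{QPE}(\mbX\to\mbY)} - \bar g_n\| + \|\bar g_n - g^{\tau}_{\textrm{QPE}(\mbX\to\mbY)}\|$, and it suffices to bound each summand by $\delta/2$ with probability at least $1-\alpha/2$.

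First I would control the discretization term $\|\bar g_n - g^{\tau}_{\textrm{QPE}(\mbX\to\mbY)}\|$, which depends only on the realization of $\mbH$ and on the gridding $(\bs s_i,S_i)_{i=1}^n$, not on any sampling randomness. Under the QLSCM, $\mbH$ is weak-sense stationary and spatially ergodic, and $\psi(\cdot)$ is a measurable functional of the stationary field $\mb h_{\bs s}^1$; the two-dimensional spatial ergodic theorem applied to the refining sequence of block-centroid averages $\bar g_n$ (with block diameters shrinking as the centroid sequence $\bs s_n$ fills $S$) yields $\bar g_n \to g^{\tau}_{\textrm{QPE}(\mbX\to\mbY)}(\mb x,\mb w)$ almost surely. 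Hence there exists $N$, depending only on the realization of $\mbH$ and the chosen quadrature, such that for every $n\ge N$ the discretization error is at most $\delta/2$ with probability at least $1-\alpha/2$.

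Next, with $n \ge N$ fixed, I would control the estimation term. Because $\hat g^{nm,\tau}_{\textrm{QPE}(\mbX\to\mbY)}$ and $\bar g_n$ are convex combinations of the $\hat\psi_i$ and the $\psi(\bs s_i)$ with the same weights $w_i := |S_i|/\sum_j|S_j|$, we have $\|\hat g^{nm,\tau}_{\textrm{QPE}(\mbX\to\mbY)} - \bar g_n\| \le \sum_{i=1}^n w_i\|\hat\psi_i - \psi(\bs s_i)\| \le \max_{1\le i\le n}\|\hat\psi_i - \psi(\bs s_i)\|$. By Assumption~\ref{ass:1}, $\hat\psi_i \to \psi(\bs s_i)$ in probability as $m_i\to\infty$ for each $i$, so one can choose $M_n$ (the maximum of $n$ site-specific thresholds) such that $\PP(\|\hat\psi_i - \psi(\bs s_i)\| > \delta/2) \le \alpha/(2n)$ whenever $m_i \ge M_n$, for every $i=1,\dots,n$. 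A union bound over the $n$ sites then yields $\PP(\max_i\|\hat\psi_i - \psi(\bs s_i)\| > \delta/2) \le \alpha/2$; combining with the first step via the triangle inequality gives the claim.

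The main obstacle is the discretization step: Assumption~\ref{ass:1} is purely pointwise and gives no information about the spatial behaviour of the true $\tau$-QPE field, so the argument must lean entirely on spatial ergodicity of $\mbH$. This in turn requires $\psi(\bs s)$ to be a well-defined, measurable, and integrable functional of $\mb h_{\bs s}^1$, a regularity that must be inherited from smoothness of $f$ and of the error process $\bs\eps$ in~\eqref{eq:model} and would likely be recorded as a mild auxiliary condition. A secondary subtlety, which the sequential quantifier order $\forall n\ge N,\,\exists M_n$ in the theorem neatly accommodates, is that $n$ must be frozen before $M_n$ is chosen, so that the union-bound factor of $n$ in the estimation step does not compound with the refinement needed in the discretization step.
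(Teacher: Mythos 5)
Your proposal is correct and follows essentially the same route as the paper's proof: the paper also reduces the target to the site-wise differences via its Riemann-sum lemma (Lemma~C.2, playing the role of your discretization step, with $N$ chosen so the block average approximates the integral) and then applies Assumption~\ref{ass:1} with an $\alpha/n$ union bound over the $n$ sites. Your write-up is in fact slightly more careful, since you make the oracle surrogate $\bar g_n$ and the $\delta/2$, $\alpha/2$ split explicit where the paper absorbs the discretization error into an appeal to the lemma.
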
 The proof of Theorem~\ref{thm:3} is given in Appendix~\ref{proof:3}.
\begin{remark}
Assumption~\ref{ass:1} makes a convenient consistency assumption for the site-specific $\tau$-QPE. This assumption largely depends on the choice of a suitable quantile regression estimator and mild conditions on the underlying model structure of $Y$ given $\mbX, \mbH, \mbW$; see Theorem~\ref{thm:4}. 
\end{remark}

\section{Simulation Study}
\label{sec:5}

In this section, we conduct simulation studies comparing our proposed spatial $\tau$-QPE estimator under the QLSCM framework with the average causal effect (ACE) estimator from \citet{christiansen_toward_2022} under the classical LSCM framework. 
To evaluate our method's capability in estimating the marginal causal effect (causal partial effect) of the exposure on the outcome, we present three illustrative cases across two scenarios (explicit and implicit confounding):\\
\textbf{\underline{Case~1}}: Estimation of a \textbf{constant} marginal causal effect under explicit hidden confounding.\\
\textbf{\underline{Case~2}}: Detection of a \textbf{null} marginal causal effect under implicit hidden confounding.\\
\textbf{\underline{Case~3}}: Estimation of \textbf{heterogeneous} marginal causal effects under implicit hidden confounding.

An explicit confounder appears directly as a variable in the outcome's structural equation, whereas an implicit confounder influences the outcome indirectly by governing the parameters of intermediate latent variables, that affect the distribution of the outcome (see Appendix~\ref{app:confounders}).

\subsection{General Simulation Setting}

We sample spatio-temporal data over $(\bs s_i)_{i \in \{1,\ldots,1000\}}$, a regular $50 \times 20$ spatial grid, with $m_i = 100$ samples, $i=1,\dots,1000$, at each location.

We conduct only one experiment for Case~1 because its purpose is to validate the bias-removal capability of our QLSCM framework against the LSCM, not to measure estimation accuracy. Conversely, to robustly evaluate the accuracy of our causal effect estimates in Cases 2 and 3, we repeat each experiment $100$ times. 
Our simulation study compares the following estimators:
\begin{enumerate}
    \item the average causal effect (ACE) estimator, defined as: 
    \begin{equation}
    (\beta_0^*, \beta_1^*) := \hat{f}_{\text{ACE}(\mbX \to \mbY)}^{nm}(\mbX_n^m, \mbY_n^m)(x) = \frac{1}{n} \sum_{i=1}^n (1 \; x) \hat{\beta}_{\text{OLS}}^m(\mbX_{\bs s_i}^m, \mbY_{\bs s_i}^m); \label{eq:acenew}
    \end{equation}
    \item the spatial $\tau$-QPE estimator with $\tau = 0.5$, defined as: 
    \begin{equation}
        (\beta_0^{\tau}, \beta_1^{\tau}) := \hat{g}_{\text{QPE}(\mbX \to \mbY)}^{nm,\tau}(\mbX_n^{m}, \mbY_n^{m})(x) = \frac{1}{n} \sum_{i=1}^n (1 \; x) \hat{\beta}_{\text{QR}}^{m,\tau}(\mbX_{\bs s_i}^{m}, \mbY_{\bs s_i}^{m}); \label{eq:qpenew}
    \end{equation} 
    \item a generalised additive model (GAM) estimator~\citep{hastie2017generalized}  (used only for Case~1),
\end{enumerate} where $\hat{\beta}_{\text{OLS}}^m \in \mathbb{R}^2$ and $\hat{\beta}_{\text{QR}}^{m,\tau} \in \mathbb{R}^2$ are the linear regression and quantile regression coefficient estimates, respectively. These are both obtained by fitting the model at each spatial location $\bs s_i$ based on outcome data $\mbY_{\bs s_i}^{m}=\left\{Y_{\bs s_i}^t\right\}_{t=1}^m$ and exposure $\mbX_{\bs s_i}^{m}=\left\{X_{\bs s_i}^t\right\}_{t=1}^m$.

\subsection{Simulation Results}

\paragraph{Case~1.}
\label{case1}
Following Example 1 of \citet{christiansen_toward_2022}, we consider independent realizations $\boldsymbol{\gamma}, \boldsymbol{\phi}, \boldsymbol{\xi}^t$, $\boldsymbol{\delta}^t$, for $t \in \mathbb{N}$, drawn from a univariate stationary spatial Gaussian process centered at zero with covariance function $\exp(-\frac{1}{2}\|\bs s_i - \bs s_j\|_2)$ for arbitrary locations $\bs s_i$ and $\bs s_j$, where $\|\cdot\|_2$ denotes the Euclidean norm. The joint distribution is characterized through a marginal distribution for the hidden confounder $\mathbf{H}$---modeled as a bivariate process with components $\widetilde{\mbH}$ and $\widehat{\mbH}$---and conditional distributions $\mathbf{X} \mid \mathbf{H}$ and $\mathbf{Y} \mid (\mathbf{X}, \mathbf{H})$, specified for each $(\bs s,t) \in \mathbb{R}^2 \times \mathbb{N}$ as:
\begin{align}
\mathbf{H}_{\bs s}^t &:= (\tilde{H}_{\bs s}^t, \hat{H}_{\bs s}^t)^\top = \left({\gamma}_{\bs s}, \frac{1}{2}{\xi}_{\bs s} + \frac{\sqrt{3}}{2}{\phi}_{\bs s}\right)^\top, \nonumber\\
X_{\bs s}^t &= \exp(-\|\bs s\|_2^2/1000) + (0.2 + 0.1 \cdot \sin(2\pi t/100)) \cdot \tilde{H}_{\bs s}^t \cdot \hat{H}_{\bs s}^t + 0.5 \cdot {\xi}_{\bs s}^{t}, \nonumber\\
Y_{\bs s}^t &= (1.5 + \tilde{H}_{\bs s}^t \cdot \hat{H}_{\bs s}^t) \cdot X_{\bs s}^t + (\tilde{H}_{\bs s}^t)^2 + |\hat{H}_{\bs s}^t| \cdot \delta_{\bs s}^{t}. \label{eq:ex1_structure}
\end{align} For clarity of exposition, we denote $\tilde{\mathbf{H}}$ and $\hat{\mathbf{H}}$ as the first and second components of the bivariate process $\mathbf{H}$, respectively. Interventional distributions for $\mbX$, $\mbY$, or $\mathbf{H}$ follow the causal graphical model which is discussed in Section~\ref{sec:causal_graph} (as do Cases 2 and 3, to be introduced). In this case, our interest is to estimate a \textbf{constant} marginal causal effect of $\mbX$ on $\mbY$ ($\beta_1$) that is endowed by a linear model $x \mapsto \beta_0 + \beta_1 x$, with $\beta_0 =  \mathbb{E}[(\tilde{H}_{\bs 0}^1)^2 + |\hat{H}_{\bs 0}^1|\cdot \delta_{\bs 0}^1] = 1$ and $\beta_1 = 1.5 + \mathbb{E}[\tilde{H}_{\bs 0}^1 \cdot \hat{H}_{\bs 0}^1] = 1.5$. 

Observe that the coefficient of $X_{\bs{s}}^t$ in \eqref{eq:ex1_structure} contains the product $\tilde{H}_{\bs{s}}^t \cdot \hat{H}_{\bs{s}}^t$, leading to heterogeneous spatial QPE estimates across different values of $\tau$. As noted, this example serves to demonstrate the elimination of hidden confounding bias. The heterogeneity term $\tilde{H}_{\bs{s}}^t \cdot \hat{H}_{\bs{s}}^t$ exhibits symmetry around zero, being the product of zero-mean Gaussian variables. Since the median and mean coincide for symmetric distributions, we thus select $\tau = 0.5$ for spatial QPE estimation.
\begin{figure}[t!]
    \centering
    \includegraphics[width=0.45\linewidth]{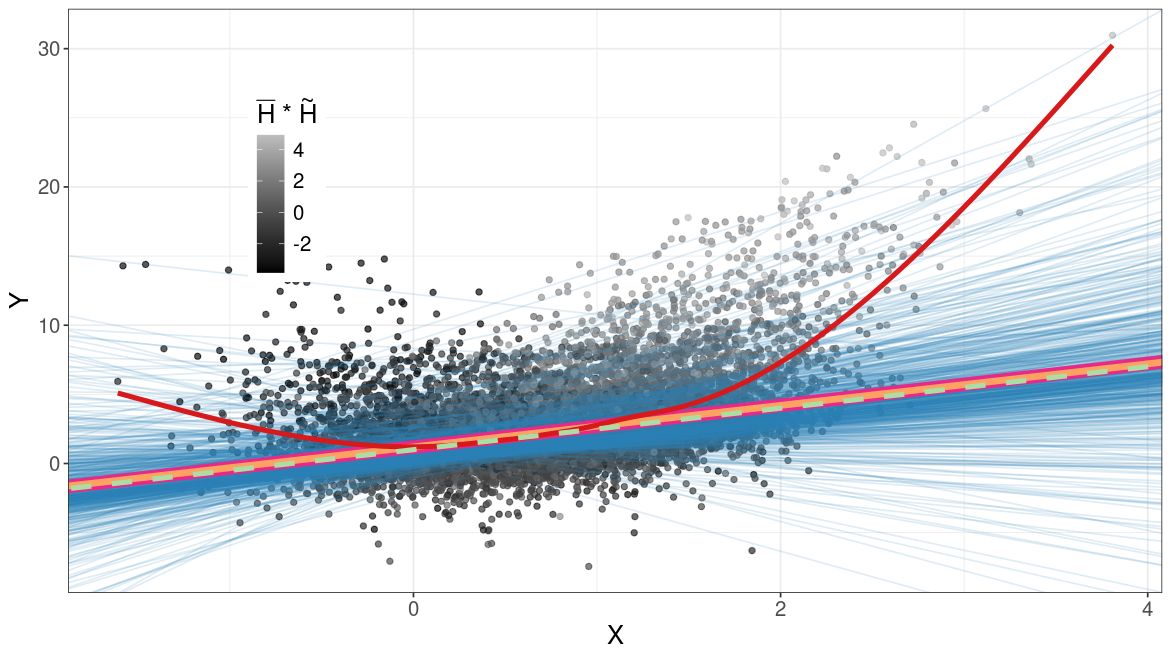}
    \caption{Greyscale points display the scatter plot of the confounding term $\bar{\mathbf{H}} \cdot \tilde{\mathbf{H}}$ from Case~1. The true marginal causal effect $\beta_1$ is represented by the slope of the dashed green line. ACE estimates appear as the thick solid magenta line. Site-wise quantile estimates $g_{\text{QPE}(\mathbf{X} \rightarrow \mathbf{Y})}^{m, 0.5}(\mbx)$ (as defined in Assumption~\ref{ass:1}) are shown as thin blue lines, while the spatially-aggregated QLSCM estimate at $\tau = 0.5$, from Equation~\eqref{eq:def}, appears as the solid orange line. A semi-parametric GAM smoother fitted to the regression function is depicted by the solid red curve.} 
    \label{fig:bias-correction}
\end{figure}

\begin{hyphenrules}{nohyphenation}
The results are presented in Figure~\ref{fig:bias-correction}, where the background scatter plot reveals the \mbox{confounding} structure induced by $\tilde{\mathbf{H}} \cdot \hat{\mathbf{H}}$, with grayscale values indicating the magnitude of this biasing term. The site-wise quantile estimates exhibit substantial heterogeneity, \mbox{reflecting} the spatially-varying confounding effects. Notably, the QLSCM spatially-aggregated median \mbox{estimate} and LSCM-based ACE estimate closely \mbox{approximate} the true causal effect $\beta_1$, demonstrating the methods' \mbox{effectiveness} in removing confounding bias via spatial aggregation. The GAM smoother, which does not account for hidden confounders, falsely captures the nonlinear causal relationship in the observed data. The contrast between the GAM curve and the (Q)LSCM estimates illustrates how traditional smoothing methods fail to recover the true causal relationship, while the proposed spatial quantile approach successfully identifies the underlying linear effect, despite the presence of hidden confounders, just as the ACE estimator.
\end{hyphenrules}

\paragraph{Case~2.}
\label{case2}
Consider again process $ \bs \gamma, \bs\xi^t, t \in \bb{N}$. The marginal process $\xi_{\bs s}^t$ follows a stationary Gaussian distribution with mean zero, and we transform the margins of $\gamma$ to $\text{Unif}(10, 20)$. We then define the data generation model $\mbY \mid (\mbX, \mbH)$, specifying that for all $(\bs s,t) \in \bb{R}^2 \times \bb{N}$:
\begin{align*}
    H_{\bs s}^t &= \gamma_{\bs s},\qquad {\eps}_{\bs s}^t \overset{\text{IID}}{\sim} \text{GEV}(0, 1, 1.2), \qquad X_{\bs s}^t = 1 + 5\sin(2\pi t/100) + \xi_{\bs s}^t, \\
    Y_{\bs s}^t &\sim 50 + 2 \mc{N}(-0.5X_{\bs s}^t + H_{\bs s}^t, 0.5) + X_{\bs s}^t + 2H_{\bs s}^t + 2\eps_{\bs s}^t.
\end{align*} Here, GEV($\mu$, $\sigma$, $\xi$) denotes the generalized extreme-value distribution \citep{davison2015statistics} whose distribution function is given by $
F(x; \mu, \sigma, \xi) = 
\exp\left( -\left[1 + \xi (x - \mu)/{\sigma}\right]^{-{1/\xi}}\right)$, for $1 + \xi ({x - \mu})/{\sigma} > 0, \xi > 0.
\) When $\xi>0$, the heavy-tailedness of the GEV distribution may cause traditional regression methods to fail; note that the conditional expectation is infinite when $\xi \geq 1$. 

\begin{hyphenrules}{nohyphenation}
We compare two estimates of the marginal causal effect of $\mbX$ on $\mbY$: 1) $\beta_1^*$ from the classical LSCM (Equation~\eqref{eq:acenew}) and 2) $\beta^{\tau}_1$ from the spatial $0.5$-QPE using our proposed QLSCM framework (Equation~\eqref{eq:qpenew}). The underlying ground truth of the marginal causal effect of $Y_{\bs s}^t$ on $X_{\bs s}^t$ is the constant zero, as $2 \times (-0.5 X_{\bs s}^t) + X_{\bs s}^t = 0$. Thus, the causal influence of $\mbX$ on $\mbY$ is zero since there is no term relating to $\mbX$. 
\end{hyphenrules}
Given that the underlying error term $\eps_{\bs s}^t$ exhibits heavy-tailed behavior with tail index $\xi = 1.2$, the conditional expectation of $\mbY$ given $\mbX$ is infinite. Consequently, the linear regression estimator under the classical LSCM, represented by $\beta_1^*$, produces incorrect estimates, as shown by the blue dashed line in the left panel of Figure~\ref{simulation_1}. In contrast, our proposed quantile-based estimator $\beta_1^{\tau}$ successfully recovers the true marginal causal effect, evidenced by the slope of the red dashed line which perfectly coincides with the ground truth (zero; green dashed line). The slight difference in intercepts arises because the conditional quantile estimator does not converge to a constant value as $\tau$ varies. The right panel of Figure~\ref{simulation_1} presents results from $100$ simulations and displays boxplots of the logarithm of the absolute error between the estimated and the true marginal causal effects. We find that the spatial $\tau$-QPE achieves superior performance with much smaller absolute errors, compared to the classical ACE.

\begin{figure}[t!]
    \centering
    \begin{minipage}{0.49\textwidth}
        \centering
        \includegraphics[width=\textwidth]{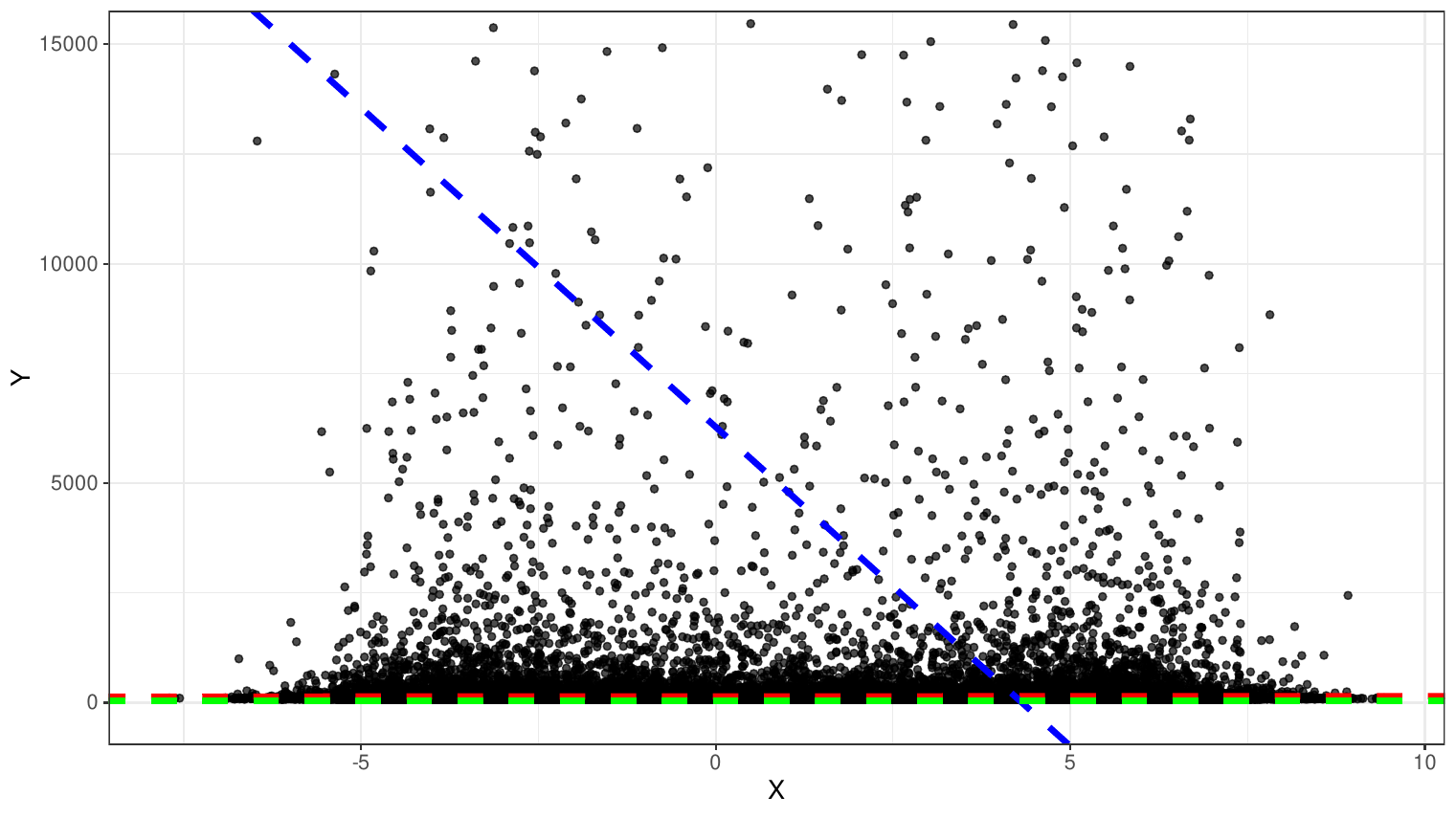}
    \end{minipage}
    \hfill
    \begin{minipage}{0.49\textwidth}
        \centering
        \includegraphics[width=\textwidth]{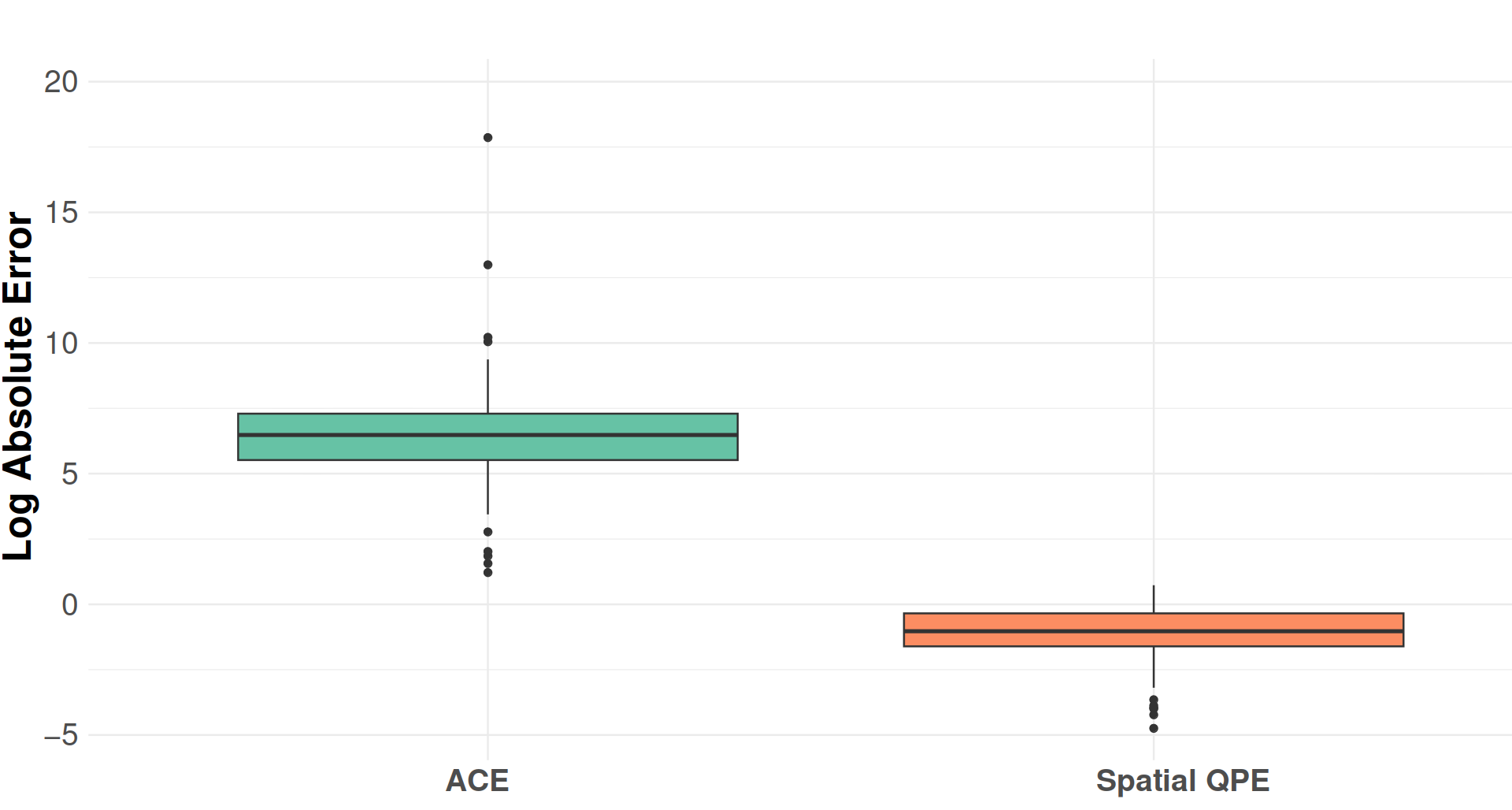}
    \end{minipage}
    \caption{Illustrative plots for Case~2. \textbf{Left:} Data (black dots) and true/estimated inferential target (marginal causal effect). The slope of the dashed lines represent the spatial $\tau$-QPE (red), the ground truth (green), and the marginal ACE derived from the LSCM (blue). The ground truth for ACE and spatial $\tau$-QPE are identical (coincident) in this setting, where that theoretical value is zero. \textbf{Right:} Boxplot of log absolute errors between the true marginal causal effect and the estimated effect using the ACE (LSCM; green) and the spatial $0.5$-QPE (QLSCM; orange).}
    \label{simulation_1}
\end{figure}

\paragraph{Case~3.}
\label{case3}
Recall $\bs \gamma$ from Case~2, except now with marginal $\gamma_{\bs s} \sim \text{Unif}(50, 100)$ for all sites $\bs s$. We define the data generation model $\mbY \mid (\mbX, \mbH)$ specifying that for all $(\bs s,t) \in \bb{R}^2 \times \bb{N}$ marginally:
\begin{align*}
    H_{\bs s}^t &= \gamma_{\bs s}, \qquad X_{\bs s}^t = 5 [\exp(-\|\bs s\|_2/1000) + (1 + 5\sin(2\pi t/100)) + \xi_{\bs s}^t], \\
    Y_{\bs s}^t &\sim 4 + 2 \mathcal{N}(-0.5X_{\bs s}^t + H_{\bs s}^t, 0.5) + 2 \Gamma(0.005(X_{\bs s}^t + H_{\bs s}^t), 0.01),
\end{align*} where $\Gamma(\alpha, \lambda)$ denotes the Gamma distribution with shape $\alpha>0$ and rate $\lambda>0$. We compare two estimates: 1) $\beta_1^*$ from Equation~\eqref{eq:acenew}; 2) $\beta^{\tau}_1$ from Equation~\eqref{eq:qpenew} with $\tau \in (0, 1)$. In this example, the Gamma distribution in the data generation of $Y_{\bs s}^t$ does not belong to a location-scale family and its changing shape will lead to heterogeneous marginal causal effects across different $\tau$ values. The true marginal causal effect targeted by the spatial $\tau$-QPE is defined as the slope $\beta_1^{\tau}$, obtained by quantile regression conditioned on the full latent confounders $\mbH$ and covariate $\mbX$. Since $\mbH$ is unobserved in practice, this serves as a theoretical benchmark. Notably, the ACE estimated via the LSCM yields a constant marginal effect, which is invariant with respect to $\tau$.

Figure~\ref{fig:case3} compares the estimated marginal causal effect $\hat{\beta}$ of $\mathbf{X}$ on $\mathbf{Y}$ against the ground truth across quantile levels $\tau$ for Case 3. 
\begin{figure}[t!]
    \centering
    \includegraphics[width=0.5\linewidth]{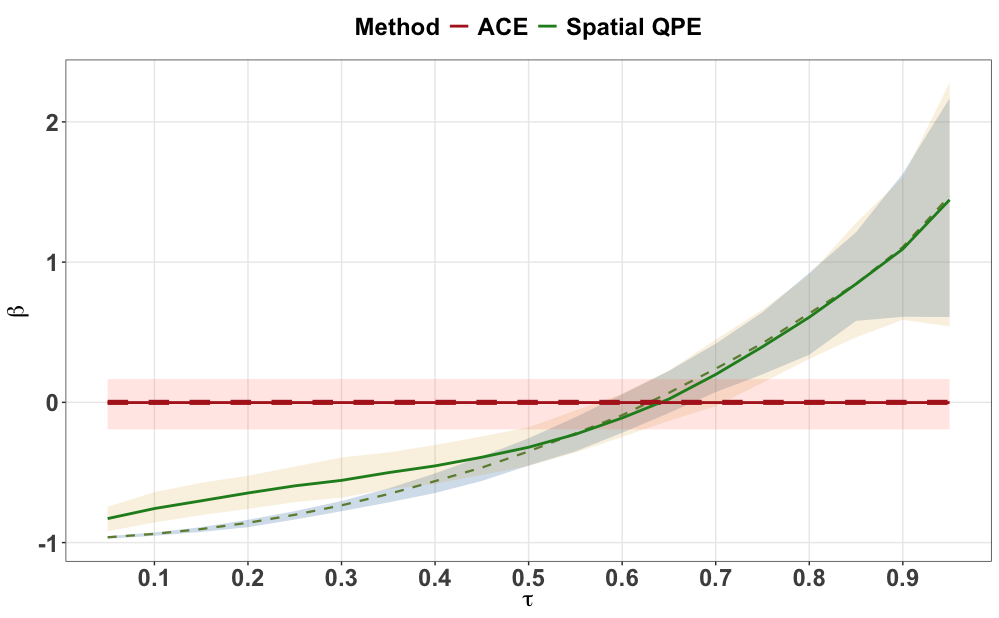}
    \caption{Estimated marginal causal effect $\hat\beta$ of $\mathbf{X}$ on $\mathbf{Y}$ across quantile levels $\tau$ for Case 3. The spatial $\tau$-QPEs ($\beta_1^\tau$; QLSCM) are shown with ground truth (green dashed line, 95\% confidence band in light blue) and estimates (green solid line, 95\% confidence band in gold). The ACEs ($\beta_1^*$; LSCM) are shown with ground truth (red dashed line) and estimates (red solid line, 95\% confidence band in light red). All confidence bands are constructed from 100 simulations.}
    \label{fig:case3}
\end{figure}
While the ACE is zero, the spatial $\tau$-QPE  successfully reveals the heterogeneous marginal causal effects.
A modest bias is observed at lower quantiles ($\tau < 0.35$), where the estimated effect slightly exceeds the ground truth. This bias stems from the heteroskedastic confounding induced by the Gamma-distributed component in the data-generating process. The Gamma distribution's bounded support at zero creates a floor effect, causing the conditional quantile function to exhibit higher curvature near the lower tail. This nonlinearity leads to biased estimates when fitting a linear quantile regression. Nevertheless, this bias is inconsequential for inference: the QLSCM correctly identifies both the sign and statistical significance of the marginal causal effect throughout the quantile range, as evidenced by the confidence bands. At higher quantiles, the unbounded tail exhibits approximately linear behavior, yielding homogeneous local effects that spatial aggregation successfully recovers. Overall, the QLSCM demonstrates robust performance in detecting heterogeneous causal effects that the constant ACE approach misses.

\section{Causal Analysis of US Wildfires on Air Quality}
\label{sec:6}
\subsection{Overview}

\begin{hyphenrules}{nohyphenation}
We investigate the causal relationship between fire radiative power (FRP) and aerosol optical depth (AOD) while controlling for observed confounders i) enhanced vegetation index (EVI), ii) precipitation (PRCP), iii) near-surface wind speed (WS), and iv) maximum temperature (TMAX) (recall Section~\ref{sec:data}), by employing the spatial $\tau$-QPE within the QLSCM framework. We compare estimates of the spatial $\tau$-QPE with the ACE, focusing on spatial aggregation over distinct climate regions (Section~\ref{sec:regional}) and individual states (Section~\ref{sec:state-wise}). We consider quantile levels $\tau = 0.1, 0.5, 0.9$ to examine the left tail, median, and right tail of the distribution, where $\tau = 0.5$ enables comparison with the LSCM-based ACE. Our approach reveals the structure of the underlying conditional distribution, indicating that heterogeneous treatment effects across quantiles exist. To provide additional details on heterogeneous treatment effects and conduct a more focused analysis, we select representative states for further examination (see Section~\ref{sec:rep_states}).
\end{hyphenrules}

\begin{hyphenrules}{nohyphenation}
Following, e.g., \citet{richards2022regression}, we focus on June to October (summer and fall) as they represent the peak wildfire season. An exploratory analysis, applying Hill's estimator \citep{resnick2007heavy} to the pooled AOD data, yields strictly positive estimates of the tail index, confirming that the AOD distribution is indeed heavy-tailed. A critical implication of this tail behavior is that standard statistical procedures relying on finite moment assumptions are likely to fail. This empirical observation highlights the necessity of our proposed approach and directly motivates the design of Case~2 in our simulation study, which evaluates model performance under these extreme conditions. To quantify uncertainty, we develop a hypothesis test to determine whether the marginal causal effect of $\mathbf{X}$ on $\mathbf{Y}$ is zero by considering the null hypothesis $H_0$ that $(\mathbf{X}, \mathbf{Y}, \mathbf{W})$ come from a QLSCM with a function $f$ that is constant with respect to $\mathbf{X}_{\boldsymbol{s}}^{t}$ for all $(\boldsymbol{s},t) \in \mathbb{R}^2 \times \{1, 2, \ldots\}$. To this end, we employ a stationary bootstrap \citep{politis1994stationary} with expected block size of $5$ days. Each bootstrap sample is constructed by repeatedly drawing uniformly-random starting times and random block sizes from a geometric distribution with expectation $5$ until obtaining sufficient temporal coverage. Using $250$ bootstrap samples, we build $99\%$ confidence intervals to test $H_0$. Choosing a level of $99$\% yields conservative results and reduces the chance of mistakenly claiming an nonexistent effect.
\end{hyphenrules}

\subsection{Daily Regional Analysis}
\label{sec:regional}
\begin{hyphenrules}{nohyphenation}
We follow \citet{chen2021effects} and divide the CONUS into nine climate regions: Northwest, West, Southwest, Northern Rockies and Plains, Upper Midwest, Ohio Valley, Northeast, South, and Southeast. We apply the (Q)LSCM approach within each region. Figure~\ref{figure:44} reveals that the marginal causal effect of FRP on AOD varies substantially across quantiles and regions. The most pronounced and statistically significant effects occur at the upper tail ($\tau = 0.9$), particularly in the West region (California area), which exhibits strong positive causal effects. Lower quantiles ($\tau = 0.1, 0.5$) show weaker and less significant effects across all regions. The LSCM-based ACE (Figure~\ref{fig:8-4}) only estimates the conditional mean of $\mbY \mid \mbX$. The observed pattern suggests that wildfire intensity primarily impacts air quality during extreme pollution events, with the western and northwestern United States being most susceptible to these causal relationships. While the ACE estimates a single average effect, which shows similar patterns to the spatial $0.5$-QPE, the spatial $\tau$-QPE at level $\tau = 0.1$ and $\tau = 0.9$ reveal that the impact of FRP on AOD is not constant, becoming dramatically stronger at higher pollution levels. Further, ACE estimates do not reveal significant causal partial effects in the Northwest region whereas the spatial $0.9$-QPE does.
\end{hyphenrules}

\begin{figure}[!t]
    \centering
    \setlength{\abovecaptionskip}{2pt}
    \setlength{\belowcaptionskip}{0pt}
    
    \begin{subfigure}{0.4\textwidth}  
        \centering
        \includegraphics[width=\linewidth]{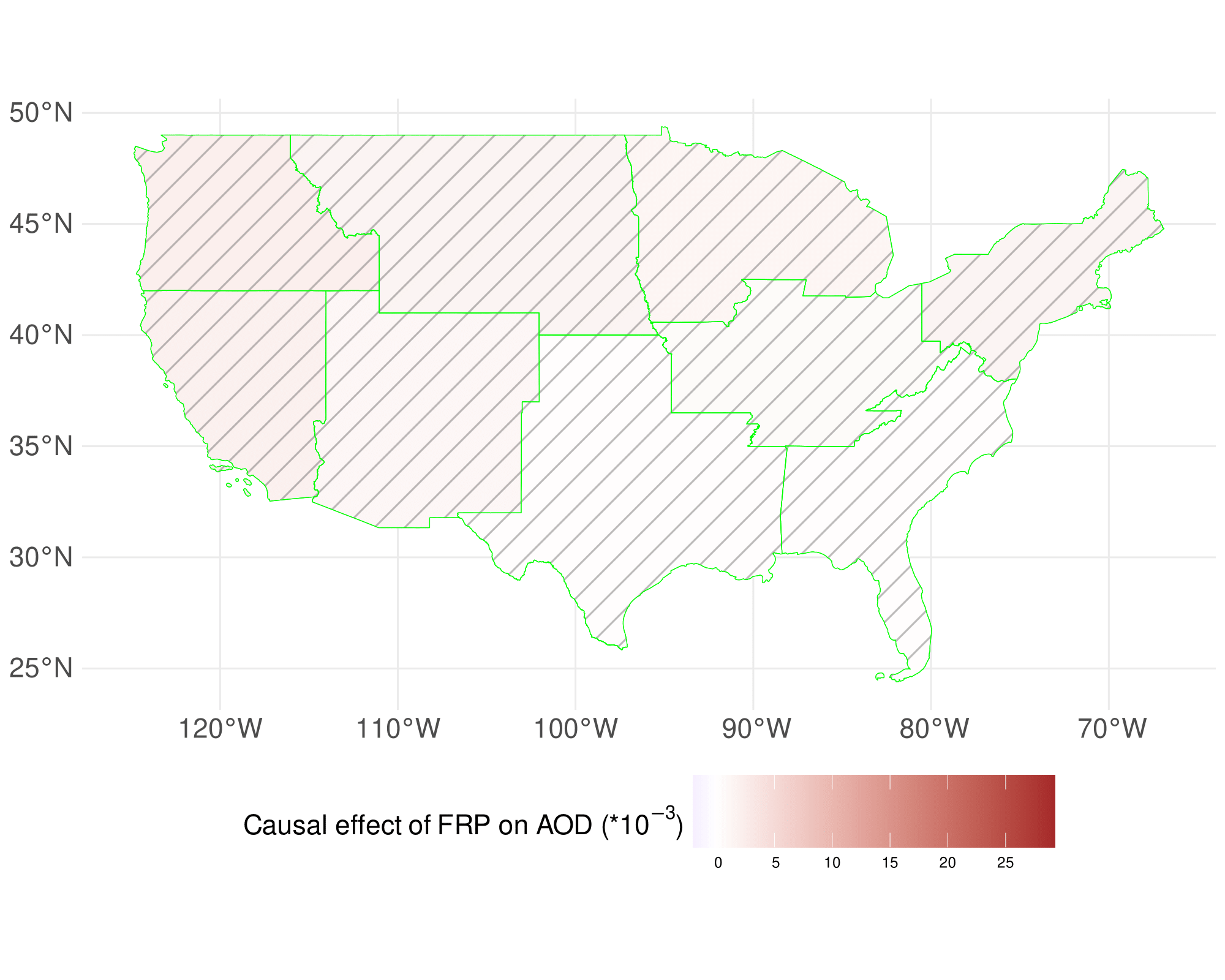}
        \vspace{-60pt}  
        \caption{$\tau = 0.1$}
        \label{fig:8-1}
    \end{subfigure}
    \hspace{1.5cm}
    \begin{subfigure}{0.4\textwidth}  
        \centering
        \includegraphics[width=\linewidth]{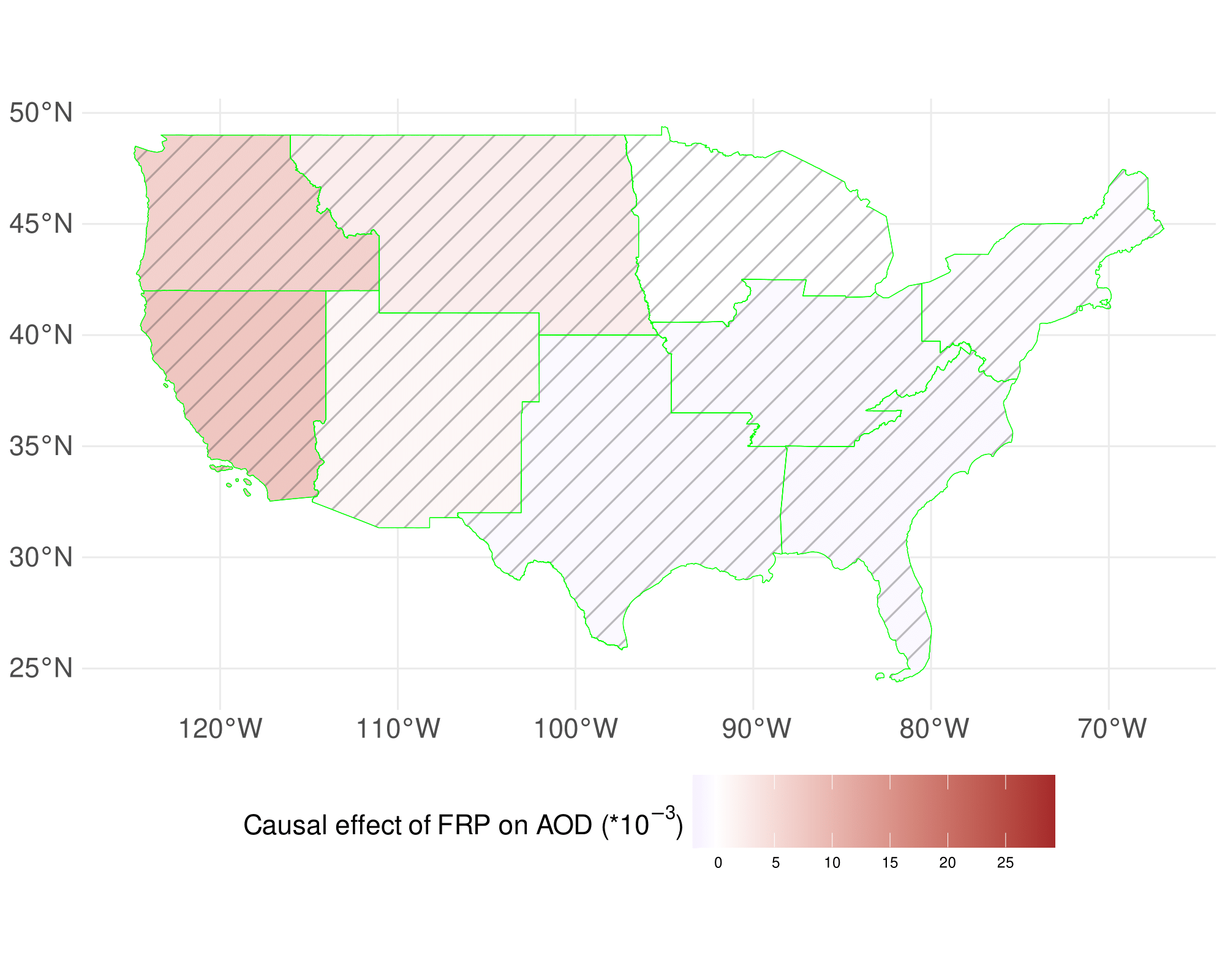}
        \vspace{-60pt}  
        \caption{$\tau = 0.5$}
        \label{fig:8-2}
    \end{subfigure}

    \vspace{-8pt}  
    
    \begin{subfigure}{0.4\textwidth}  
        \centering
        \includegraphics[width=\linewidth]{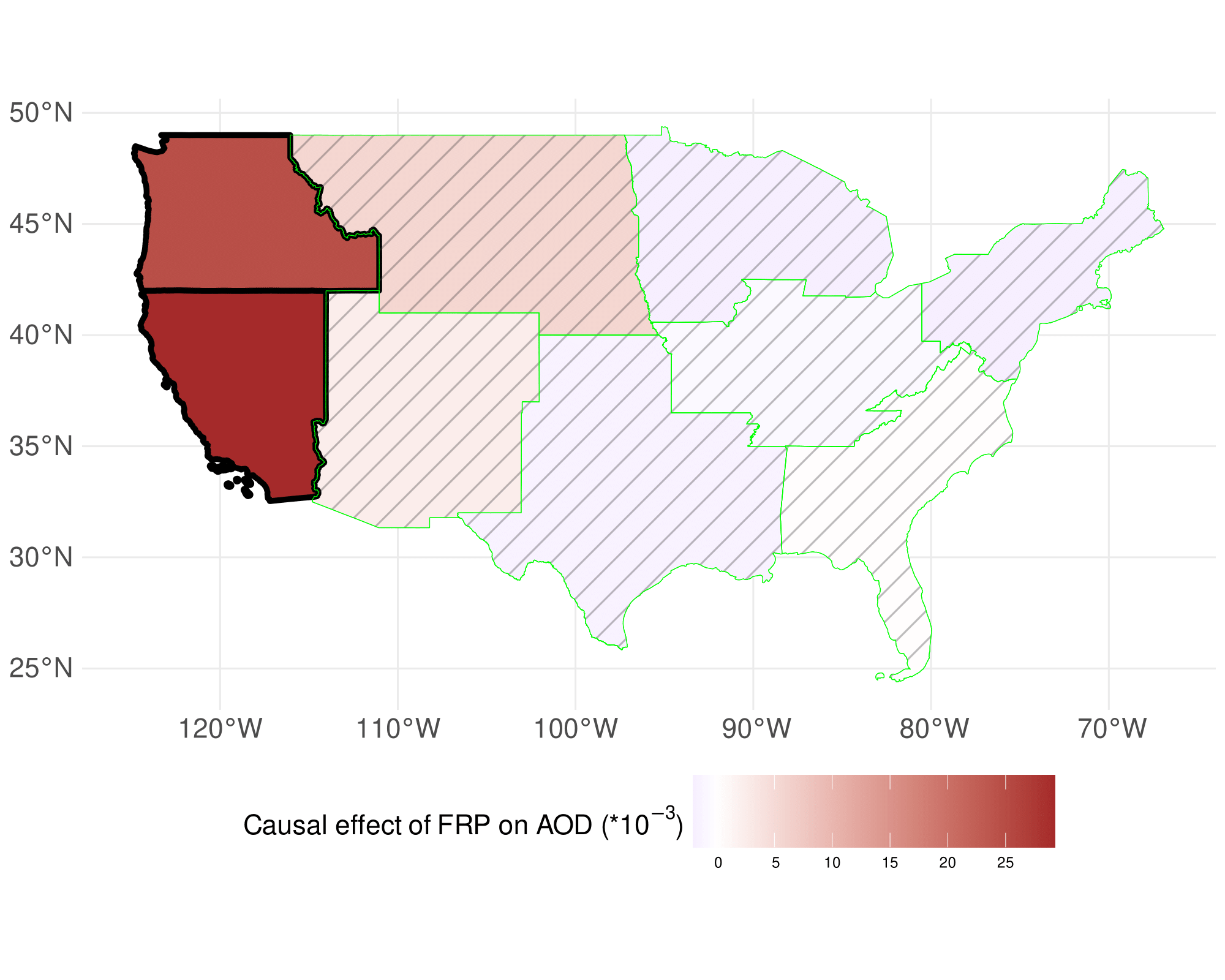}
        \vspace{-60pt}  
        \caption{$\tau = 0.9$}
        \label{fig:8-3}
    \end{subfigure}
    \hspace{1.5cm}
    \begin{subfigure}{0.4\textwidth}  
        \centering
        \includegraphics[width=\linewidth]{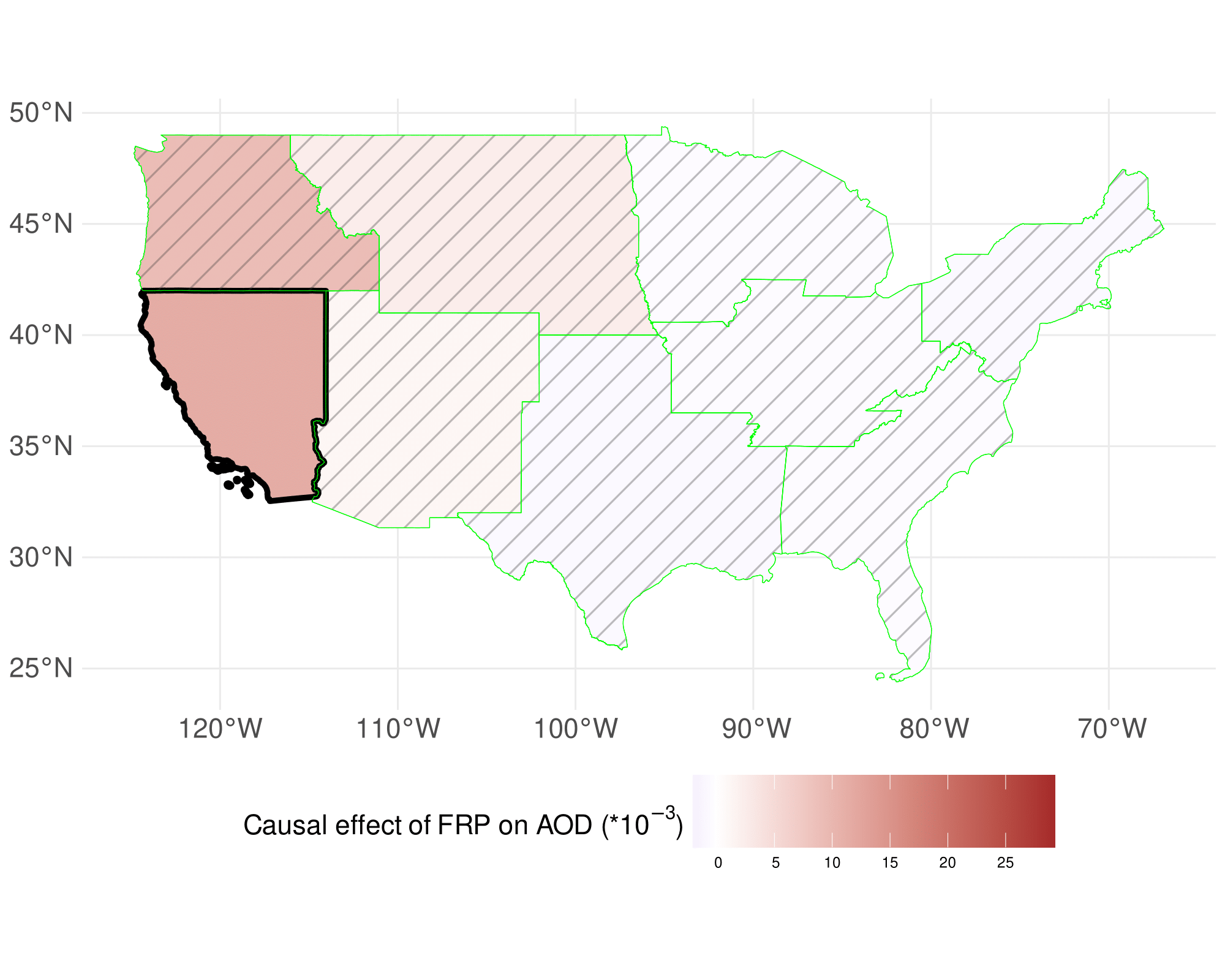}
        \vspace{-60pt}  
        \caption{Linear Regression}
        \label{fig:8-4}
    \end{subfigure}
    
    \vspace{-2pt}  
    \caption{Estimates of regional causal effect of FRP on AOD with the spatial $\tau$-QPE (QLSCM) for $\tau = 0.1$ (top left), $\tau = 0.5$ (top right), $\tau = 0.9$ (bottom left) and the ACE (LSCM; bottom right). The black and green outlines respectively indicate regions that exhibit significant and non-significant causal partial effects. Red regions represent positive causal partial effects, while blue regions represent negative causal partial effects.}
    \label{figure:44}
\end{figure}

\subsection{Daily State-wise Analysis}
\label{sec:state-wise}

\begin{hyphenrules}{nohyphenation}
We now investigate causal relationships across all states in the CONUS; see Figure~\ref{figure:11}. The state-level results reveal a more detailed pattern: Northwestern and Western states (California, Oregon, Idaho) exhibit statistically significant positive causal partial effects in the upper tail ($\tau = 0.9$) while being non-significant at lower quantiles ($\tau = 0.1, 0.5$).
This pattern indicates that wildfire intensity primarily affects air quality during extreme aerosol concentration events rather than typical conditions, which coincides with the previous regional analysis. The spatial $\tau$-QPE estimates reveal heterogeneous marginal effects missed by the ACE, providing crucial insights for understanding when and where wildfire smoke most severely impacts air quality.

The observed statistically significant negative effects in some northeastern states (shown in blue with black outline) may result from two factors: 1) spatial advection effects from nearby regions not captured in our local analysis, such as wildfire smoke from Quebec affecting the northeastern U.S. \citep{debell2004major}, and 2) different fire types contributing to distinct physical and chemical processes \citep{jaffe2020wildfire}. However, while statistically significant, these negative effects are much weaker than the positive effects and thus scientifically insignificant.
\end{hyphenrules}

\begin{figure}[t!]
    \centering
    \setlength{\intextsep}{0pt}
    \setlength{\abovecaptionskip}{1pt}
    \setlength{\belowcaptionskip}{0pt}

    \begin{subfigure}{0.4\textwidth}  
        \centering
        \includegraphics[width=\linewidth]{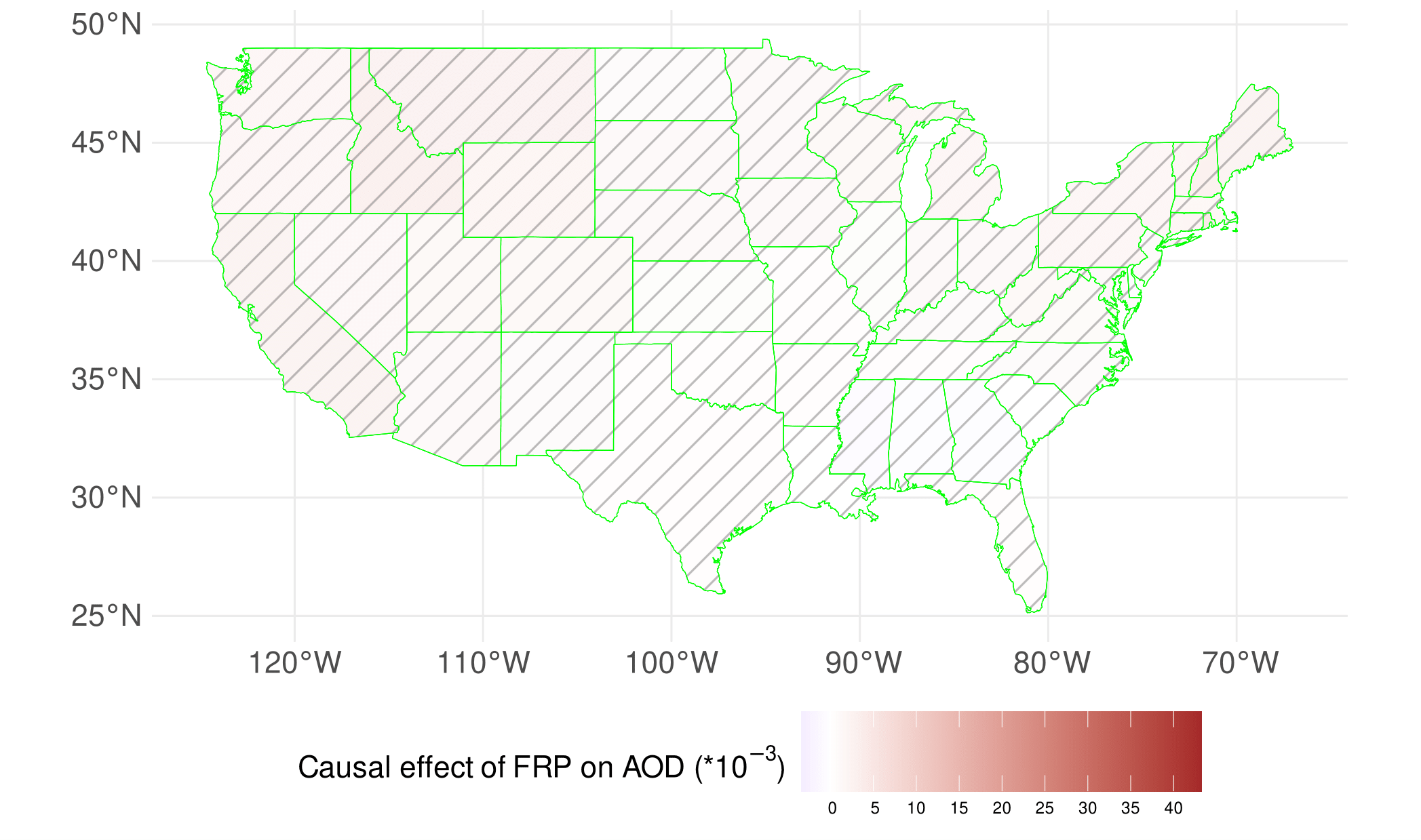}
        \vspace{-45pt} 
        \caption{$\tau = 0.1$}
        \label{fig:1-1}
    \end{subfigure}
    \hspace{1.5cm}
    \begin{subfigure}{0.4\textwidth}  
        \centering
        \includegraphics[width=\linewidth]{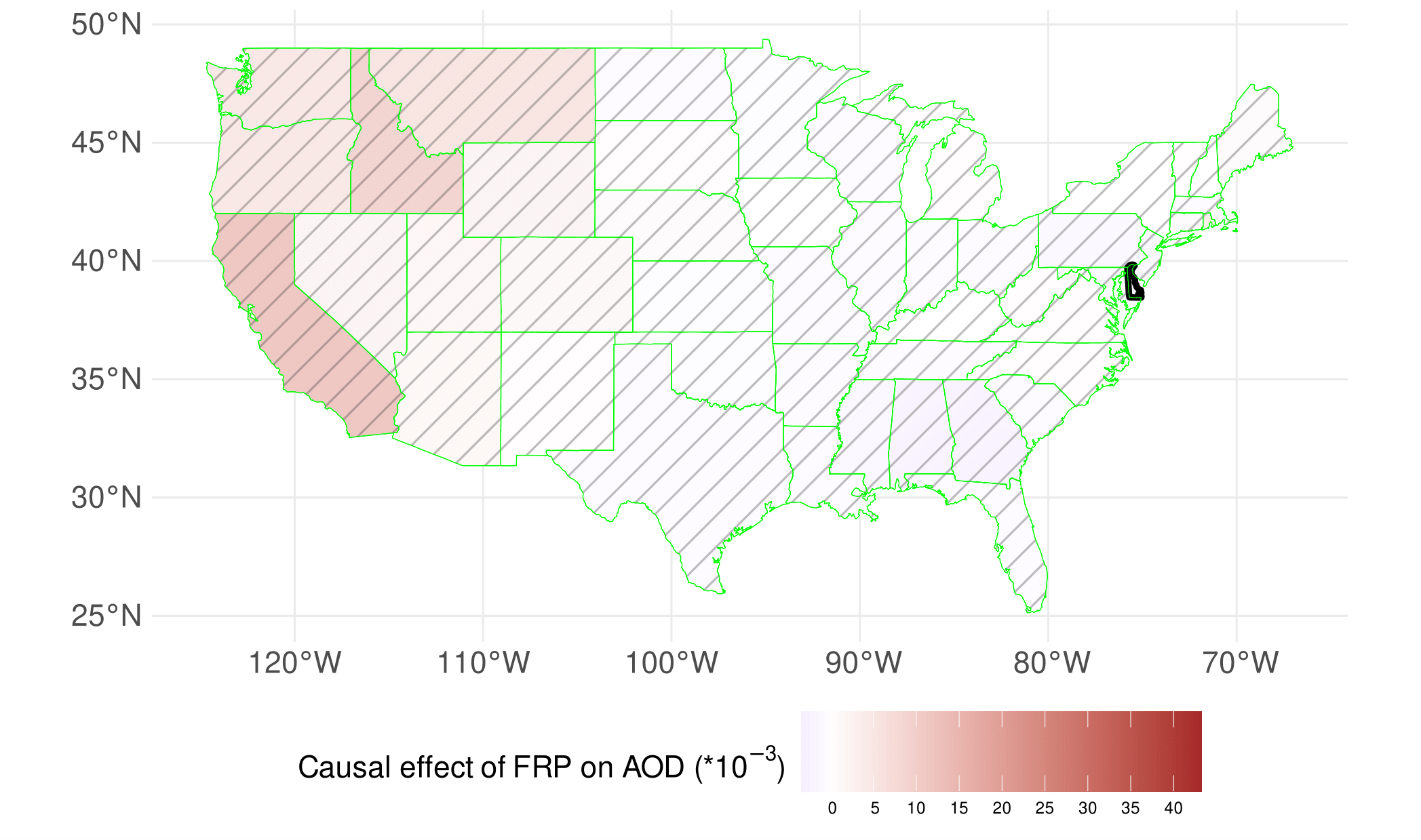}
        \vspace{-45pt} 
        \caption{$\tau = 0.5$}
        \label{fig:1-2}
    \end{subfigure}
    \vskip0.2\baselineskip  
    \begin{subfigure}{0.4\textwidth}  
        \centering
        \includegraphics[width=\linewidth]{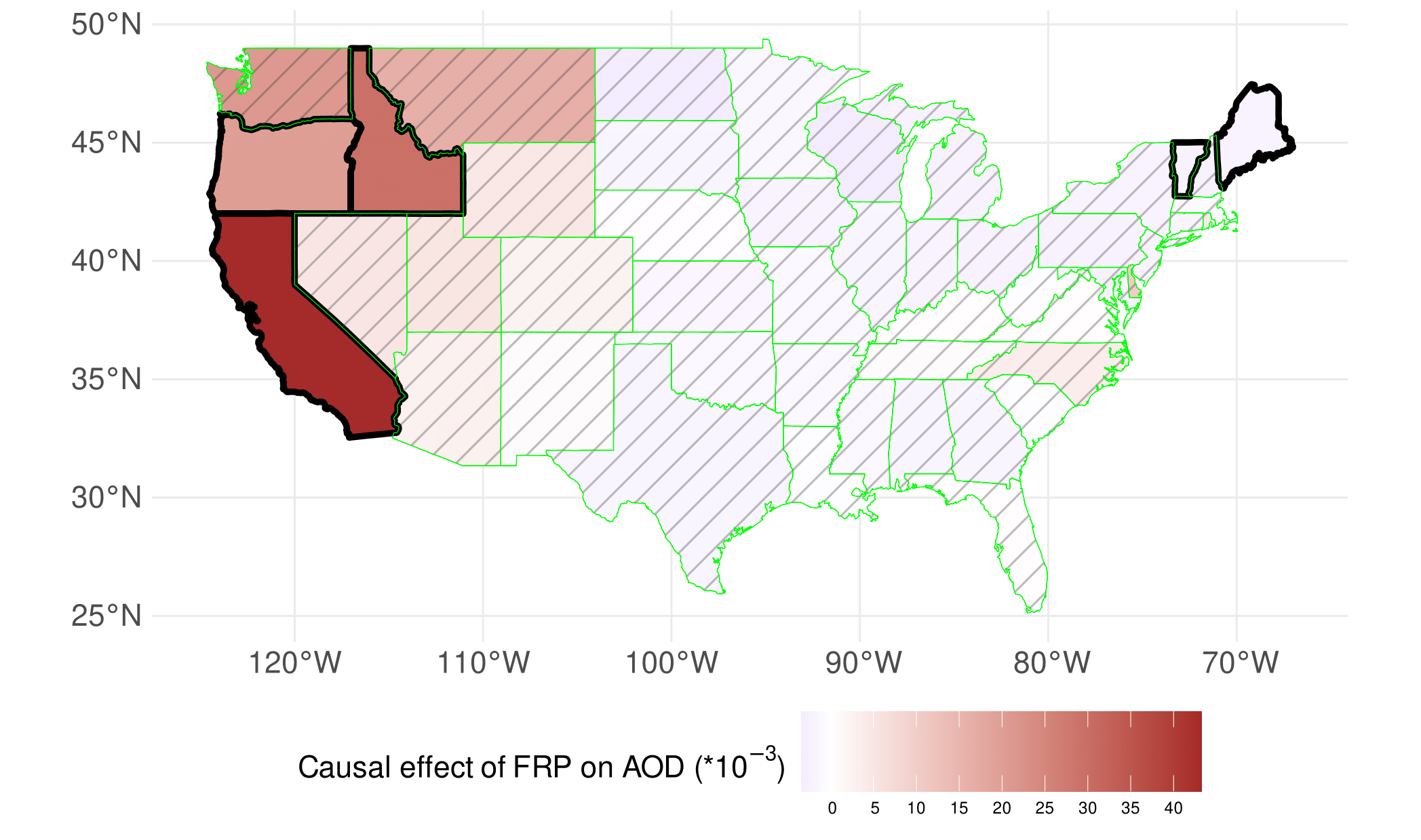}
        \vspace{-45pt} 
        \caption{$\tau = 0.9$}
        \label{fig:1-3}
    \end{subfigure}
    \hspace{1.5cm}
    \begin{subfigure}{0.4\textwidth}  
        \centering
        \includegraphics[width=\linewidth]{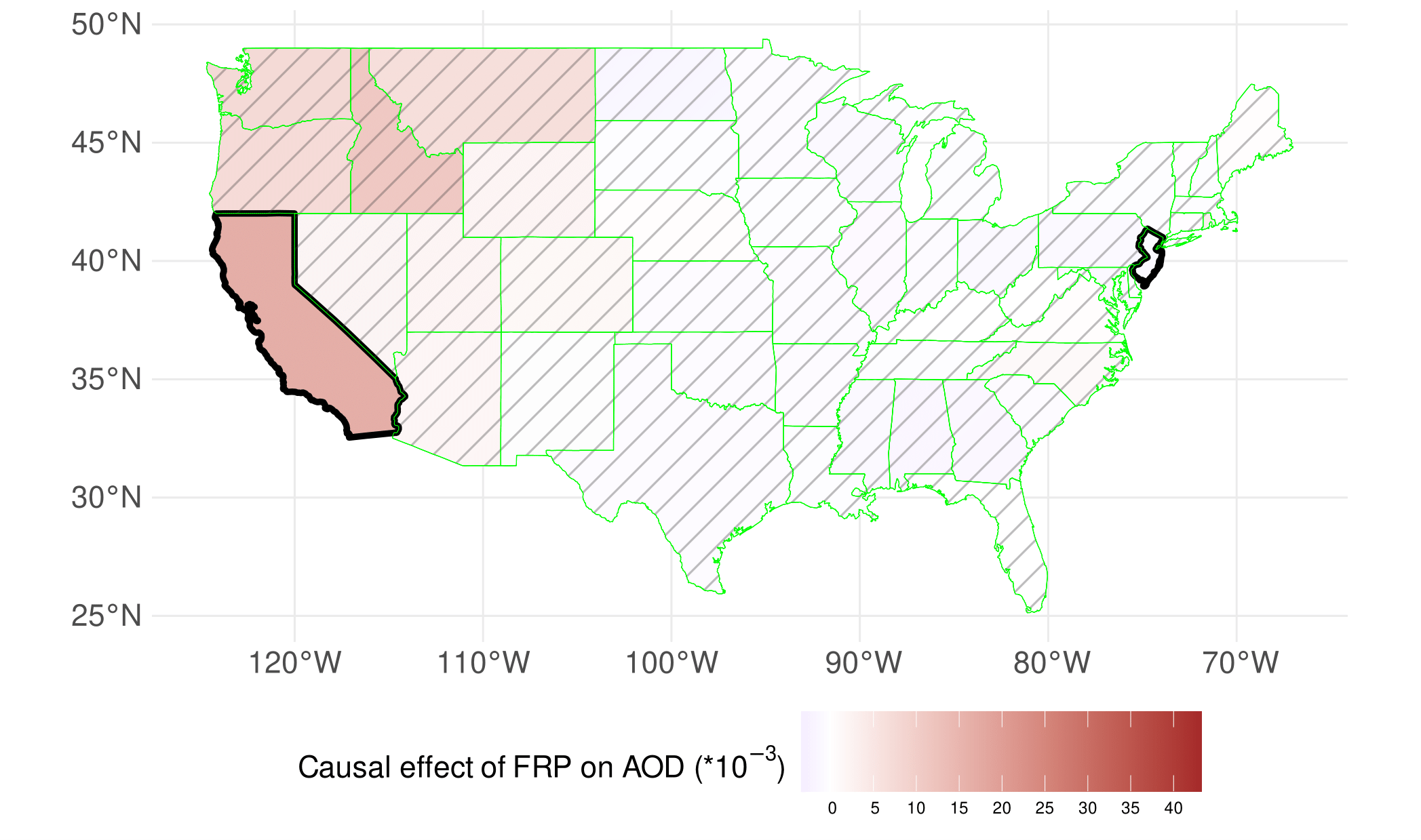}
        \vspace{-45pt} 
        \caption{Linear Regression}
        \label{fig:1-4}
    \end{subfigure}
    \vspace{-5pt}  
    \caption{Estimates of state-wise causal effect of FRP on AOD with spatial $\tau$-QPE (QLSCM) and ACE (LSCM) for $\tau = 0.1, 0.5, 0.9$. The black and green outlines respectively indicate regions that exhibit significant and non-significant causal partial effects. Red regions represent positive causal partial effects, while blue regions represent negative causal partial effects.}
    \label{figure:11}
\end{figure}
\subsection{Daily Analysis for Representative States}
\label{sec:rep_states}
\begin{hyphenrules}{nohyphenation}
We now focus on representative states chosen based on their wildfire exposure levels, as quantified by the Wildfire Risk Index (WRI) score and classification system \citep{NRI2024}, and their total AOD accumulation throughout the observation period. The WRI characterizes each community's wildfire susceptibility relative to other regions within the CONUS. The selected states are as follows: 1) Minnesota (MN), with relatively low wildfire risk and low AOD accumulation, 2) Oregon (OR), with relatively moderate wildfire risk and medium AOD accumulation, 3) California (CA), with relatively high wildfire risk and high AOD accumulation, and 4) Arizona (AZ), with relatively high wildfire risk but low AOD accumulation.

\begin{figure}[t!]
    \centering
    \includegraphics[width=0.65\linewidth]{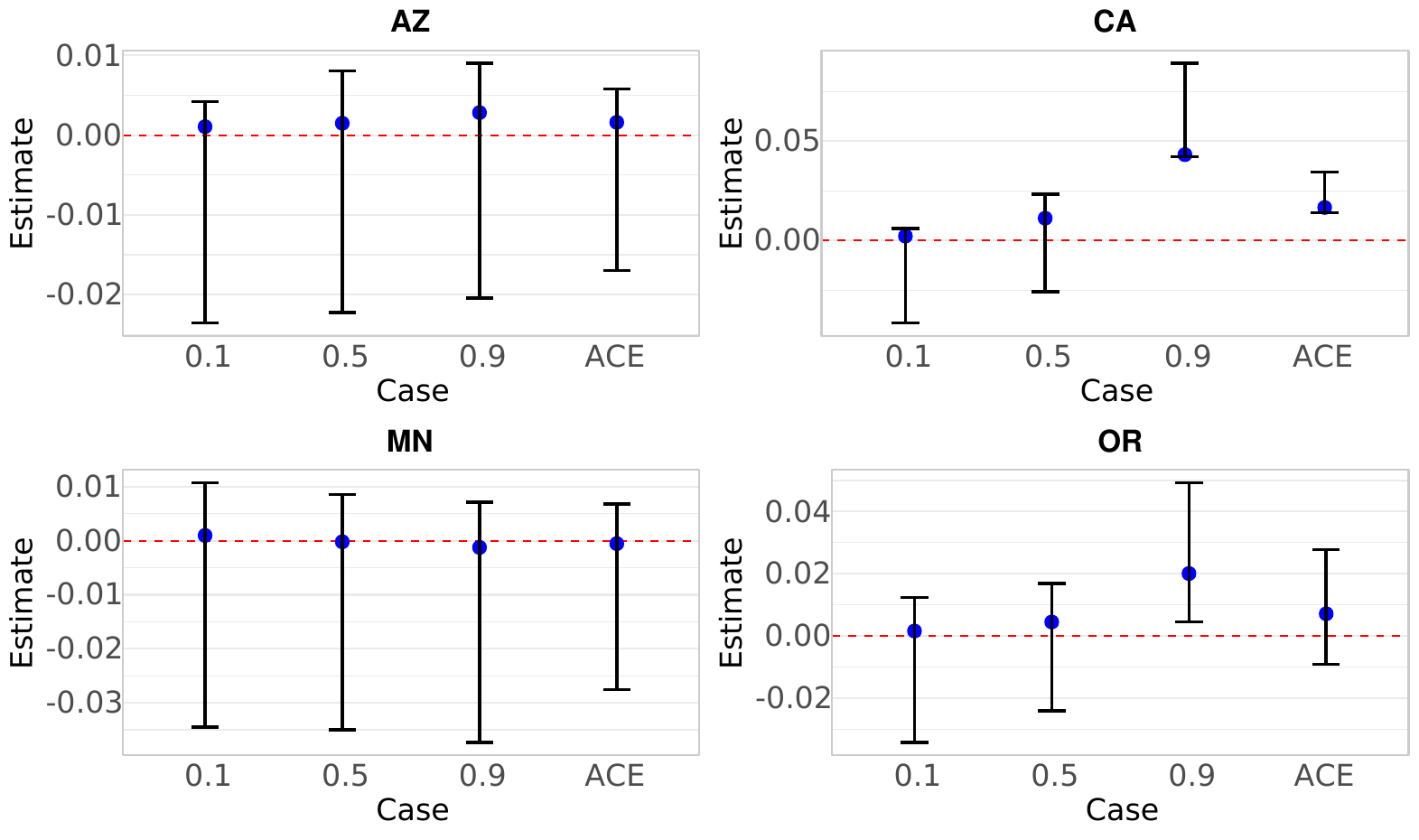}
    \caption{State-wise marginal causal effects of FRP on AOD based on the spatial $\tau$-QPE with $\tau = 0.1, 0.5, 0.9$ and the ACE (left to right) for Arizona (AZ), California (CA), Minnesota (MN), and Oregon (OR). Blue points are estimates, and the vertical segments are 99\% bootstrap confidence intervals. The red dashed line indicates zero effect.}
    \label{fig:daily_contemp}
\end{figure}

Figure~\ref{fig:daily_contemp} shows the spatial $\tau$-QPE and ACE estimates as well as corresponding 99\% confidence intervals. We observe distinct regional patterns in the causal relationship between wildfire intensity and air quality. CA and OR have a clear quantile-dependent effect, with non-significant effects at $\tau = 0.1, 0.5$ and significant positive effects at $\tau = 0.9$. Our results show that wildfire intensity predominantly affects air quality during extreme aerosol events in fire-prone western states (CA and OR), while AZ and MN exhibit negligible effects across all quantiles, with confidence intervals including zero. The ACE estimates fail to characterize these quantile-specific causal mechanisms, providing insufficient insight into the heterogeneous causal structure. Importantly, empirical evidence strongly favors QLSCM: for Oregon, QLSCM identifies statistically significant positive marginal causal effects, whereas LSCM yields insignificant estimates, highlighting the superiority of the quantile-based framework.
\end{hyphenrules}

\begin{hyphenrules}{nohyphenation}
To examine how the marginal causal effect of FRP on AOD varies across different quantile levels within California, we conduct a QLSCM analysis for $\tau = 0.1, 0.2, \ldots, 0.9$. \mbox{Figure} \ref{fig:CA_FRP9} reveals a clear heterogeneous relationship where the effect of FRP on AOD is not constant across the AOD distribution. For lower quantiles ($\tau \le 0.6$), the effect is statistically insignificant, 
but for larger quantiles ($\tau \ge 0.7$), the effect becomes statistically significant. The magnitude of this positive effect strengthens as the quantile level increases, peaking at $\tau=0.9$. This indicates that FRP has a substantially larger impact on AOD when AOD levels are already high, a critical detail not captured by the conditional mean, which only shows a modest positive ACE.
\end{hyphenrules}
\begin{figure}[t!]
    \centering
    \includegraphics[width=0.46\textwidth]{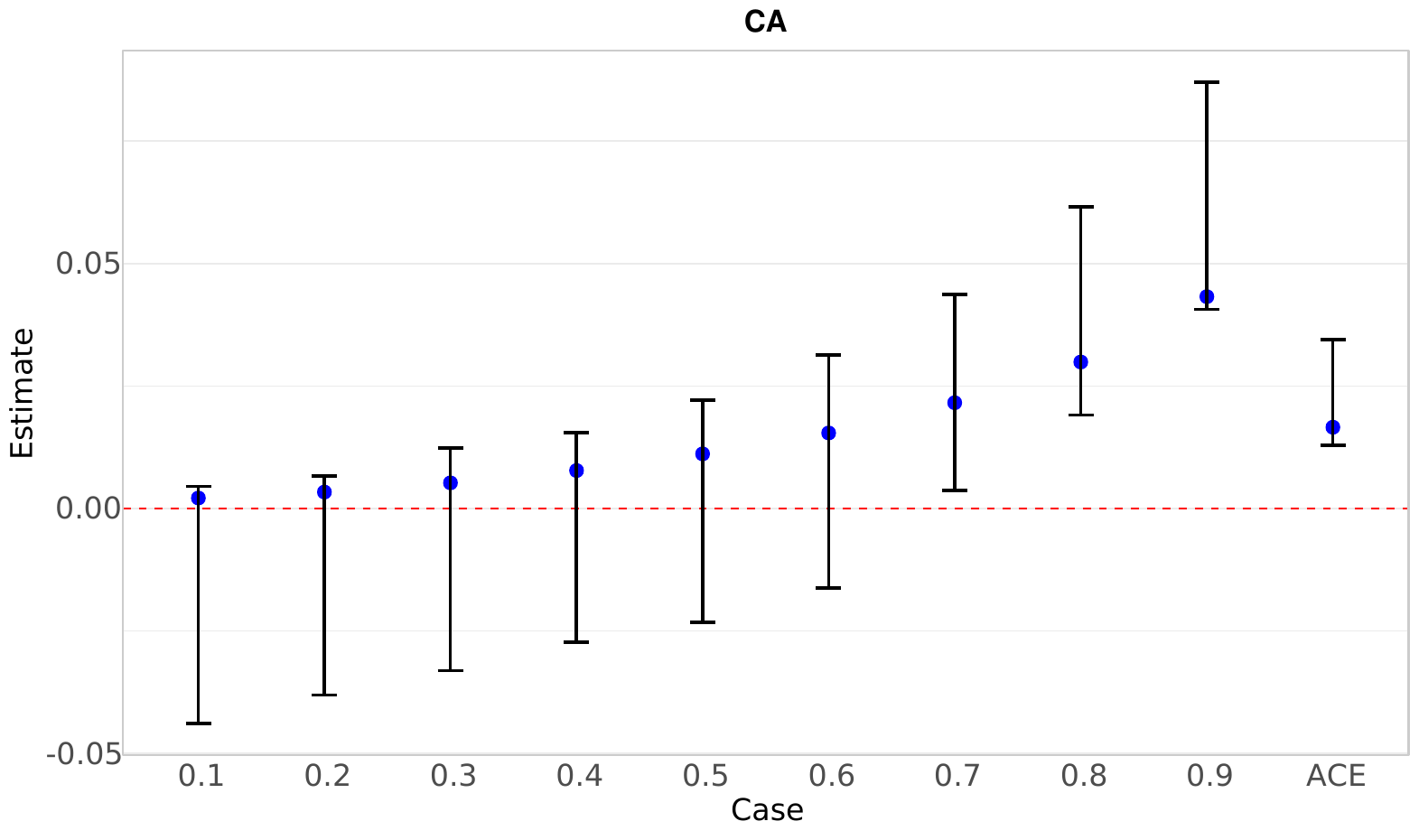}
    \caption{Estimates of spatial $\tau$-QPE of FRP on AOD in California with $\tau = 0.1, 0.2, \dots, 0.9$ and the ACE (left to right) for California (CA). Blue points are estimates, and vertical segments are 99\% bootstrap confidence intervals. The red dashed line indicates a zero effect.}
    \label{fig:CA_FRP9}
\end{figure}

\section{Discussion}
\label{sec:8}

\begin{hyphenrules}{nohyphenation}
In this paper, we introduced a quantile-based causal inference framework for spatio-temporal data. From a methodological standpoint, we proposed the QLSCM framework, which extends the LSCM approach beyond average causal effects to enable estimation of causal effects at specific quantiles of the outcome distribution. Our framework accommodates arbitrarily many observed and unobserved confounders, provided that unobserved confounders remain time-invariant. We established the asymptotic consistency of our estimator under assumptions that are less restrictive than those required by the LSCM framework, without imposing any parametric distributional requirements on the data-generating process. Additionally, we derived sufficient conditions guaranteeing consistent estimation of the spatial $\tau$-QPE at individual locations. Unlike methods that produce point estimates such as the conditional mean, the QLSCM framework can identify heterogeneous causal effects across the entire conditional distribution.

We applied the QLSCM framework to examine the causal relationship between wildfire intensity and air quality across the CONUS, utilizing daily observations during the peak fire season from 2003 to 2020. Regional analysis revealed that Western and Northwestern areas experience the most significant effects of wildfire intensity during extreme aerosol concentration episodes. Interestingly, Northeastern regions seem to exhibit an unexpected inverse relationship, where increased wildfire intensity appears associated with reduced AOD levels, though there negative effects are small and mostly insignificant. This counterintuitive pattern might be explained by the influence of wildfires originating in neighboring regions, particularly Canada, which are not explicitly incorporated into our localized analysis framework. State-level analysis revealed similar patterns in Oregon and California that support using our quantile-based approach. Wildfire intensity tends to significantly increase AOD only when AOD levels are already high, but shows no effect when AOD level is low or moderate. This suggests that wildfires matter most during already above-average air pollution. Results also reveal that this causal relationship differs across areas, indicating that the causal pathway between wildfire intensity and air quality may vary spatially. This suggests complex interactions between wildfire intensity and local environmental or meteorological conditions that deserve additional research. These results should be interpreted with caution, however, given our inability to adjust for hidden confounders that vary across both spatial and temporal dimensions. Additionally, achieving full control of confounders is impractical since they are difficult to quantify systematically, including fine-scale topographic variation and unobserved fire behavior characteristics~\citep{fernandes2003review}.

There are several areas of future work. First, the framework could incorporate tail index regression and other specialized regression models under an extremal setting~\citep[see, e.g.,][]{wang2009tail,Richards2024}. Second, future models should incorporate spatial advection effects~\citep{clarotto2024spde}—the movement of smoke and pollutants across regional boundaries—to provide more accurate causal estimates. Third, the quantile regression framework can be extended to characterize the entire outcome distribution, for instance through the recently proposed engression approach~\citep{shen2024engression}. Fourth, nonlinear quantile regression methods, e.g., B-splines, could also be exploited in the QLSCM framework.
\end{hyphenrules}

\begin{center}\textbf{Data Availability and Disclosure Statement}
\end{center}
\vspace{-0.3cm}
The data that support the findings of this study are available from the corresponding author, Z.G, upon reasonable request. The authors report there are no competing interests to declare.

\begin{center}\textbf{Ethics Declarations}
\end{center}
\vspace{-0.3cm}
During the preparation of this work, the author used Google Gemini (Pro, 2.0, 2.5 Pro) to assist with debugging the implementation code and paraphrasing portions of the text in order to ensure computational correctness and improve the clarity of the presentation.

\baselineskip=12pt
\bibliographystyle{agsm}
\bibliography{ref}

\newpage
\baselineskip=26pt
\begin{center}
{\large\bf Appendices}
\end{center}

\renewcommand{\theequation}{\thesection.\arabic{equation}}
\renewcommand{\thefigure}{\thesection.\arabic{figure}}
\renewcommand{\thetable}{\thesection.\arabic{table}}
\renewcommand{\thesection}{\Alph{section}}

\setcounter{figure}{0}
\setcounter{table}{0}
\setcounter{equation}{0}
\setcounter{theorem}{0}
\begin{appendix}





\section{Background on Causal Inference}
\label{app:bci}

\subsection{Definitions}
This compendium of fundamental definitions is incorporated to offer a theoretical groundwork for terminology used in our paper, primarily drawing from Section 6.1 of \citetapp{peters2017elements}.

\begin{definition}[graph terminology]\label{def:graph}
    Consider a random vector $\mathbf{X} = (X_1, \ldots, X_d)$ with index set $V := \{1, \ldots, d\}$, joint distribution $\PP_{\mathbf{X}}$, and density $p(\mathbf{x})$. A \textbf{graph} $\mathcal{G} = (V, E)$ consists of finitely many nodes $V$ and edges $E \subseteq V\times V$ with $(v,v) \not\in E$ for any $v \in \mathbf{V}$. We further have the following definitions:
    
    A node $i$ is called a \textbf{parent} of $j$ if $(i,j) \in E$ and $(j,i) \not\in E$ and a \textbf{child} if $(j,i) \in E$ and $(i,j) \not\in E$. The set of \textbf{parents} of $j$ is denoted by $\mathbf{PA}_j^{\mathcal{G}}$. Two nodes $i$ and $j$ are \textbf{adjacent} if either $(i,j) \in E$ or $(j,i) \in E$. We call $\mathcal{G}$ \textbf{fully connected} if all pairs of nodes are adjacent. We say that there is an \textbf{undirected} edge between two adjacent nodes $i$ and $j$ if $(i,j) \in E$ and $(j,i) \in E$. An edge between two adjacent nodes is \textbf{directed} if it is not undirected. We then write $i \rightarrow j$ for $(i,j) \in E$. We call graph $\mathcal{G}$ directed if all of its edges are directed.

    A \textbf{path} in $\mathcal{G}$ is a sequence of ($m\geq 2$) distinct vertices $i_1,\ldots,i_m$, such that there is an edge between $i_k$ and $i_{k+1}$ for all $k = 1,\ldots,m-1$. If $i_{k-1} \rightarrow i_k$ and $i_{k+1} \rightarrow i_k$, vertex $i_k$ is called a \textbf{collider relative to this path}. If $i_k \rightarrow i_{k+1}$ for all $k$, we speak of a \textbf{directed path} from $i_1$ to $i_m$ and call $i_1$ an \textbf{ancestor} of $i_m$ and $i_m$ a \textbf{descendant} of $i_1$.

    A graph $\mathcal{G}$ is called a \textbf{partially directed acyclic graph (PDAG)} if there is no directed cycle, that is, if there is no pair $(j, k)$ with directed paths from $j$ to $k$ and from $k$ to $j$. Grpah $\mathcal{G}$ is called a \textbf{directed acyclic graph (DAG)} if it is a PDAG and all edges are directed.
\end{definition}

Given the above graph terminologies, we introduce the definition of $d$-separation.

\begin{definition}[$d$-separation]\label{def:d-sep}
    A path between nodes $ i_1 $ and $ i_m $ is \textit{blocked by a set $\mathcal{S} \subset V$}, whenever there is a node $ i_k \in \mathcal{S} $ such that one of the following holds:
    \begin{enumerate}
        \item\, $ i_k\in \mathcal{S} $ and
        \begin{align*}
            i_{k-1}             & \to i_k \to i_{k+1}               \\
            \text{or}~~i_{k-1}  & \leftarrow i_k \leftarrow i_{k+1} \\
            \text{or}~~ i_{k-1} & \leftarrow i_k \to i_{k+1}
          \end{align*}
        \item \, neither $ i_k $ nor any of its descendants is in $ \mathcal{S} $ and
        \begin{align*}
            i_{k-1}\to i_k \leftarrow i_{k+1}, \;\;\;\text{e.g., $i_k$ is a collider relative to the path between $i_1$ and $i_m$.} 
          \end{align*}
      \end{enumerate}
    Two disjoint sets $ A,B\subset V $ are \textit{$d$-separated} by a disjoint set $ \mathcal{S} \subset V $ when each path between nodes in $ A $ and $ B $ is blocked by $ \mathcal{S} $. This is denoted as \(A\indep_{\mathcal{G}}B\mid S.\)
\end{definition}

\begin{definition}[global Markov property] 
Let $\mbX$ be a $d$-variate random vector and $\mathcal{G}=(V,E)$ some DAG with vertex set $V=[d]$ and edge set $E \subset V \times V$.
We say that $\mbX$ satisfies the global Markov property with respect to $\mathcal{G}$ when
\[A\indep_{\mathcal{G}}B\mid S \Rightarrow \mbX_A \indep \mbX_B \mid \mbX_S\]
for any disjoint subsets $A, B, S \subset V$.
The global Markov property connects $d$-separation with conditional independence, and we call any random vector $\mbX$ that satisfies the global Markov property with respect to $\mathcal{G}$ a directed graphical model with respect to $\mathcal{G}$.
\end{definition}

\begin{definition}[back-door criterion] A set of variables $\mbH$ satisfies the \textbf{back-door criterion} relative to a pair of variables $(\mbX, \mbY)$ in a DAG if:
\begin{enumerate}
    \item no variable in $\mbH$ is a descendant of $\mbX$.
    \item $\mbH$ blocks every path between $\mbX$ and $\mbY$.
\end{enumerate}
If a set $\mbH$ satisfies the backdoor criterion, the causal effect of $\mbX$ on $\mbY$ is identifiable.    
\end{definition}

\subsection{Explicit and Implicit Confounders}
\label{app:confounders}
We clarify the distinction between explicit and implicit forms of confounders. Let $X$ denote the exposure, $H$ the hidden confounder, and $Y$ the outcome. An explicit hidden confounder refers to cases where $H$ appears directly within the structural equation of $Y$, establishing causal pathways $H \rightarrow Y$ and $H \rightarrow X$. As an illustration, consider $Y = (1 + H) \cdot X + \varepsilon_Y$, with $\varepsilon_Y$ representing random error and $H$ explicitly entering the outcome's equation. We omit discussing how $H \rightarrow X$ since it is irrelevant. Conversely, an implicit hidden confounder describes situations where $H$ influences $Y$ through intermediate latent variables rather than appearing directly in the equation for $Y$. For example, consider $Y = V_1 + V_2 + \beta X$ with $V_1 \sim \mathcal{N}(\mu, H)$ and $V_2 \sim U(H, 2H)$, where $\mu,\beta \in \mathbb{R}$, $U(a, b)$ is the uniform distribution, and $V_1, V_2$ act as intermediate latent variables, establishing pathways $H \rightarrow V_i \rightarrow Y$ for $i = 1, 2$. The explicit form manifests as direct confounding bias via $H$'s direct participation in the outcome mechanism, whereas the implicit form exhibits confounding that operates through the parameters or distributions of intermediate variables.

\section{Background on Latent Spatial Confounder Model}
\label{app:lscm_defn}

\subsection{Classical Latent Spatial Confounder Model (LSCM)}
\label{subsec:lscm}

\begin{definition}[LSCM]
    Consider a spatio-temporal process $(\mb X, \mb Y, \mb H) := (\mbX_{\bs s}^t, Y_{\bs s}^t, \mbH_{\bs s}^t)_{({\bs s},t) \in \mathbb{R}^2 \times \mathbb{N}}$ over a univariate response $Y \in \mathbb{R}$, a vector of $d$ observed covariates $\mathbf{X} \in \mathbb{R}^d$, and a vector of $l$ hidden (unobserved) confounders $\mathbf{H} \in \mathbb{R}^\ell$. A causal graphical model over $(\mb X, \mb Y, \mb H)$ with causal structure $[\mb Y \mid \mb X, \mb H][\mb X \mid \mb H][\mb H]$ is an \textit{LSCM} if $\mb H$ is weak-sense stationary and time-invariant and there exists a measurable function $f : \mathbb{R}^{d+\ell+1} \rightarrow \mathbb{R}$ and an independent and identically distributed (IID) sequence $\bs\eps^1, \bs\eps^2, \dots$ of weak-sense stationary spatial error processes, independent of $(\mb X, \mb H)$, such that
    \begin{equation}
    Y_{\bs s}^t = f(\mbX_{\bs s}^t, \mbH_{\bs s}^t, \eps_{\bs s}^t) \quad \text{for all } ({\bs s}, t) \in \mathbb{R}^2 \times \mathbb{N}.
    \end{equation} Moreover, for all $\mathbf{x} \in \mathbb{R}^d$, $f({\mb x}, \mbH_{\bs 0}^1, \eps_{\bs 0}^1)$ must have finite expectation, i.e., $$\E_{\left(\mbH_{\bs 0}^1, \eps_{\bs 0}^1\right)}\left[f(\mbx, \mbH_{\bs 0}^1, \eps_{\bs 0}^1)\right] < \infty.$$
\end{definition}

\begin{remark}
\label{remark:1}
Given the assumptions of weak-sense stationarity and time-invariance of $\mb H$, and the IID conditions on $\bs\eps$, the terms $\mbH_{\bs s}^t$ and $\eps_{\bs s}^t$ are identical, respectively, across all \mbox{$(\bs s, t) \in \mathbb{R}^2 \times \mathbb{N}$}, which justifies the use of the notations $\mbH_{\bs 0}^1$ and $\eps_{\bs 0}^1$. Besides, as it is impractical to fully specify all variables in a spatio-temporal process, the LSCM framework focuses only on identifying specific causal relationships of interest, while remaining agnostic to other causal connections in the system.
    
The LSCM framework models the causal influence of exposure $\mathbf{X}$ on response variable $Y$ while leaving other causal relationships unspecified, rather than attempting to reconstruct the entire causal network. This targeted approach avoids the combinatorial complexity and estimation uncertainty inherent in methods that try to recover complete causal structures, such as maximum and partial ancestral graphs~\citepapp{zhang2008causal}.
\end{remark}

\subsection{Estimation of the Causal Estimand}
\label{subsec:estimation_classic}
The inferential target of the classical LSCM is the average causal effect (ACE) \(f_{\text{ACE}(\mb X \rightarrow Y)}(\mb x) := \E[f(\mb x, \mbH_{\bs 0}^1, \eps_{\bs 0}^1)].\) The justification of using the ACE for causal inference is given by Propositions~1 and 2 of \citetapp{christiansen_toward_2022}. Their proposed approach is to estimate \(f_{\text{ACE}(\mb X \rightarrow Y)}\) by first estimating the regression function \(f_{Y \mid (\mb X, \mb H)}: (\mb x, \mb h) \mapsto \E[Y_{\bs s}^t \mid \mbX_{\bs s}^t = \mb x, H_{\bs s}^t = \mb h]\), and then approximating the expectation:
\begin{equation}
    f_{\text{ACE}(\mb X \rightarrow Y)}(\mb x) := \E[f_{Y\mid (\mb X, \mb H)}(\mb x, \mbH_{\bs 0}^1)].
\end{equation}

The remaining challenge is to account for unobserved confounding variables. Assume the underlying dataset is $\left(\bs X_n^m, \bs Y_n^m\right) = (X_{\bs s}^t, Y_{\bs s}^t)_{(\bs s,t) \in \{\bs{s}_1, \dots, \bs{s}_n\} \times \{1, \dots, m\}}$ for $n$ spatial locations, $\{\bs s_1, \dots, \bs s_n\}$, and $m$ time points, $t \in 1, \dots, m$. Then, since \(\mb H\) is time-invariant, we can use the estimator
\begin{equation}
\hat{f}^{nm}_{\text{ACE}(\mb X \rightarrow Y)}(\bs X_n^m, \bs Y_n^m)(\mb x) := \frac{1}{n} \sum_{i=1}^{n} \hat{f}^m_{Y \mid \mb X}(\bs X^m_{\bs{s}_i}, \bs Y^m_{\bs{s}_i})(\mb x),\label{eq:ace_estimate}
\end{equation} where \(\hat{f}^m_{Y\mid X}(\bs X^m_{\bs s_i}, \bs Y^m_{\bs s_i})\) is any suitable estimator of \( x \mapsto f_{Y \mid (\mb X, \mb H)}(\mb x, \mb{h}_{\bs{s}_i}) \) computed from the observed data ($m$ replicates) at location \(\bs{s}_i\), and $\mb{h}_{\bs{s}_i}$ is the unobserved realization of $H_{\bs{s}_i}^1$.

\begin{remark}
    For every \( \bs s \in \{\bs{s}_1, \dots , \bs{s}_n\} \), we observe several time instances \( (\bs X^t_{\bs s}, Y^t_{\bs s}) \) for \( t \in \{1, . . . , m\} \), all with the same conditional distribution \( Y^t_{\bs s} \mid (\bs X^t_{\bs s}, \bs H^t_{\bs s}) \). Since \( \mb H \) is time-invariant, we can, for every \( \bs s \), estimate \( f_{Y\mid (\mb X, \mb H)}(\cdot, \mb h) \) for the realization \( \mb h_{\bs s} \) of \( \mbH_{\bs s}^1 \) using the data \( (\bs X^t_{\bs s}, Y^t_{\bs s}) \) for \( t \in \{1, . . . , m\} \), denoted by $\hat{f}^m_{Y \mid \mbX}(\bs X^m_{\bs s}, \bs Y^m_{\bs s})$. The ACE is then approximated by averaging estimates obtained from different spatial locations, which serves as a numerical approximation of the expectation. Under certain regularity assumptions, \citetapp{christiansen_toward_2022} prove that the estimator $\hat{f}^{nm}_{\text{ACE}(\mb X \rightarrow Y)}(\bs X_n^m, \bs Y_n^m)(\mb x)$ converges in probability to the true underlying ACE, $f_{\textrm{ACE}(\mb X \to Y)}(\mb x)$, for all $\mb x$.
\end{remark}

\section{Proofs}
\label{app:proof}
\subsection{Proof of Theorem~\ref{thm:1}}
\label{proof:1}

We first introduce a lemma before proving Theorem~\ref{thm:1}.
\begin{lemma}
If a quadruple $(\mb Y, \mb X, \mb H, \mb W)$ conforms to the QLSCM, we have that 
\begin{enumerate}
    \item[(i)] no node in $\mb H \cup \mb W$ is a descendant of $\mb X$,
    \item[(ii)] $\mb H \cup \mb W$ blocks every path between $\mb X$ and $\mb Y$ that contains an arrow into $\mb X$. 
\end{enumerate} In other words, $\mb H \cup \mb W$ satisfies the back-door criterion.
\end{lemma}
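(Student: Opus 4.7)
The plan is to verify the two defining conditions of the back-door criterion directly from the causal structure $[\mbY \mid \mbX, \mbH, \mbW][\mbX \mid \mbH, \mbW][\mbH, \mbW]$ encoded in the QLSCM. This structure pins down which group-level directed edges appear in the DAG $\mc{G}$: the only incoming edges to $\mbX$ come from $\mbH$ and $\mbW$; the incoming edges to $\mbY$ come from $\mbX$, $\mbH$, and $\mbW$; and $\mbY$ has no outgoing edges. In particular, $\mbX \to \mbY$ is the unique outgoing edge at $\mbX$.

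For part~(i), I would appeal to acyclicity of $\mc{G}$. Since the only child of $\mbX$ is $\mbY$ and $\mbY$ has no children, following directed paths out of $\mbX$ yields the descendant set $\{\mbX, \mbY\}$, which is disjoint from $\mbH \cup \mbW$. This immediately gives (i).

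For part~(ii), I would fix an arbitrary path in $\mc{G}$ from $\mbX$ to $\mbY$ whose first edge is an arrow into $\mbX$, i.e.\ of the form $V \to \mbX$. Since the only parents of $\mbX$ in $\mc{G}$ lie in $\mbH \cup \mbW$, necessarily $V \in \mbH \cup \mbW$. It then suffices to show that $V$ blocks the path. The edge connecting $V$ to $\mbX$ is outgoing from $V$, so whatever the orientation of the next edge along the path, the configuration at $V$ is either a chain ($\mbX \leftarrow V \leftarrow U$) or a fork ($\mbX \leftarrow V \to U$), but never a collider. By the first clause of Definition~\ref{def:d-sep}, since $V \in \mbH \cup \mbW$ and is a non-collider at its position along the path, the path is blocked at $V$.

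The argument is essentially a bookkeeping exercise on $\mc{G}$, so I do not foresee any serious obstacle. The only mildly subtle point is that $\mbH$ and $\mbW$ are permitted to carry arbitrary internal structure and cross-edges; these play no role, because blocking occurs at the very first node $V$ encountered after leaving $\mbX$, irrespective of what happens further along the path. I would also briefly note that the conclusion is insensitive to whether $\mbH$ and $\mbW$ are viewed as separate groups or as a single combined adjustment set, since only membership in the union $\mbH \cup \mbW$ enters the verification.
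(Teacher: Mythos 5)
Your proposal is correct and follows essentially the same route as the paper's proof, which likewise observes that $\mb H \cup \mb W$ consists of root nodes (hence contains no descendants of $\mb X$) and blocks the back-door path $\mb X \leftarrow \mb H \cup \mb W \rightarrow \mb Y$. Your version is simply a more carefully spelled-out bookkeeping of the same argument, in particular handling arbitrary paths whose first edge points into $\mb X$ rather than only the canonical confounding path.
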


\begin{proof}
    The proof is trivial by noticing that $\mb H \cup \mb W$ blocks the path $\mb X \leftarrow \mb H\cup \mb W \rightarrow \mb Y$ and $\mbH \cup \mbW$ is already the root node so that there is no node belonging to the descendant set of $\mbX$.
\end{proof}

We now prove Theorem~\ref{thm:1}.
\par\noindent 
\textbf{Theorem 4.6} 
\textit{ Let $(\bs s, t) \in S \times \mathbb{N}$, where $S \subset \mathbb{R}^2$ is defined in Definition~\ref{def:qpe}, $\mb x \in \mathbb{R}^{d}$, and $\mbw \in \mathbb{R}^k$ be fixed, and consider any intervention on $\mb X$ s.t. $\mbX_{\bs s}^t = \mb x$ holds almost surely in the induced interventional distribution $\mathbb{P}_{\mb x}$. Denote the $\tau$-quantile of $Y_{\bs s}^t$ under $\PP_{\mb x}$ as $Q_{\tau}^{\PP_{\mb x}}(Y_{\bs s}^t)$ for a given $\bs s$ and $t$ and the spatially-integrated $\tau$-quantile of $\mb Y$ under $\PP_{\mb x}$ over space as \[Q_{\tau}^{\PP_{\mb x}}[\mb Y] := \frac{1}{\vert S\vert}\int_{S}Q_{\tau}^{\PP_{\mb x}}(Y_{\bs s}^t)\mathrm{d}\bs s.\] We then have that \(Q_{\tau}^{\PP_{\mb x}}[\mb Y] = g^{\tau}_{\textrm{QCE}(\mb X \to \mb Y)}(\mb x,\mbw),\) that is, $g^{\tau}_{\textrm{QCE}(\mb X \to \mb Y)}(\mb x, \mbw)$ is the $\tau$-quantile of $Y_{\bs s}^t$ under any intervention that enforces $\mbX_{\bs s}^t = \mb x$.}
\par\medskip 

\begin{proof}
We wish to estimate the spatial quantile causal effect. To do this, we first justify why we can define the $\tau$-quantile causal effect at location $s$ as $Q_{\tau}(Y \mid \mbX=\mbx, \mbH=\mbh_{\bs s}^1, \mbW = \mbw)$. To show this, we need first to show that under intervention $\mbX=\mbx$, we can identify the causality from statistical expressions. 

Using Theorem 2 from \citepapp{shpitser2012validity} and the previous Lemma, we know that 
\begin{align*}
\mathbb{P}(Y = y \mid do(\mathbf{X} = \mathbf{x})) &= \sum_{\mathbf{h},\mathbf{w}} \mathbb{P}(Y = y  \mid\mathbf{H} = \mathbf{h}, \mathbf{X} = \mathbf{x}, \mathbf{W} = \mathbf{w}) \mathbb{P}(\mathbf{H} = \mathbf{h}, \mathbf{W} = \mathbf{w}) \\
&= \mathbb{P}(Y = y \mid \mathbf{H} = \mathbf{h}_s^1, \mathbf{X} = \mathbf{x}, \mathbf{W} = \mathbf{w}) \mathbb{P}(\mathbf{H} = \mathbf{h}_s^1, \mathbf{W} = \mathbf{w}) \\
&= \mathbb{P}(Y = y \mid \mathbf{H} = \mathbf{h}_s^1, \mathbf{X} = \mathbf{x}, \mathbf{W} = \mathbf{w}) \cdot 1 \\
&= \mathbb{P}(Y = y \mid \mathbf{H} = \mathbf{h}_s^1, \mathbf{X} = \mathbf{x}, \mathbf{W} = \mathbf{w}) \quad \text{almost surely}
\end{align*} using 
$$\mathbb{P}(\mathbf{H} = \mathbf{h}, \mathbf{W} = \mathbf{w}') = \begin{cases}
1 & \text{if } \mathbf{h} = \mathbf{h}_s^1 \text{ and } \mathbf{w}' = \mathbf{w}, \\
0 & \text{otherwise}.
\end{cases}$$

Thus, defining the quantile-based estimands based on the aforementioned conditional distribution $\mathbf{Y} \mid(\mathbf{X},\mathbf{H},\mathbf{W})$ only needs $\inf\left\{m: \int_{-\infty}^m f(y \mid \mbx, \mbh_{\bs{s}}^1, \mbw) dy = \tau\right\}$, assuming the density of $\mathbf{Y} \mid (\mathbf{X}, \mathbf{H}, \mathbf{W})$ exists. This expression is equivalent to the conditional quantile $Q_{\tau}(\mathbf{Y} \mid \mathbf{X} =\mbx, \mathbf{H} = \mbh_{\bs{s}}^1, \mathbf{W} = \mbw)$.

Providing the justification of the definition of quantile causal effect, we now focus on the proof of equality. We know that
\begin{align*}
    Q_{\tau}^{\mathbb{P}_x}(\mathbf{Y}_{\bs{s}}^t) &= Q_{\tau}(\mathbf{Y}_{\bs{s}}^t \mid do(\mathbf{X}_{\bs{s}}^t = \mbx)), \quad \text{by definition}\\
    &= Q_{\tau}(\mathbf{Y}_{\bs{s}}^t \mid \mathbf{X}_{\bs{s}}^t = \mbx, \mathbf{H}_{\bs{s}}^t = \mbh_{\bs{s}}^t, \mathbf{W}_{\bs{s}}^t = \mbw), \quad \text{by the back-door criterion} \\
    &= Q_{\tau}(\mathbf{Y}_{\bs{s}}^t \mid \mathbf{X}_{\bs{s}}^t = \mbx, \mathbf{H}_{\bs{s}}^t = \mbh_{\bs{s}}^1, \mathbf{W}_{\bs{s}}^t = \mbw), \quad \text{by the time-invariance of $\mathbf{H}$}\\
    &= Q_{\tau}(f(\mbx, \mbh_{\bs{s}}^1, \mbw, \boldsymbol{\epsilon}_{\bs{s}}^t)), \quad \text{by the definition of the QLSCM} \\
    &= Q_{\tau}(f(\mbx, \mbh_{\bs{s}}^1, \mbw, \boldsymbol{\epsilon}_0^1)), \quad \text{as $\boldsymbol{\epsilon}^1, \boldsymbol{\epsilon}^2, \dots$ is an identical sequence of weak-sense stationary processes.}
\end{align*}
Thus, $Q_{\tau}^{\mathbb{P}_x}(\mathbf{Y}_{\bs{s}}^t)$ coincides with the integrand in Equation~\eqref{eq:def}, which finishes the proof.

\end{proof}

\subsection{Proof of Theorem~\ref{thm:2}}
\label{proof:2}
\par\noindent
\textbf{Theorem 4.7}
\textit{Let $g^{\tau}_{Y \mid (\mb X, \mb H, \mb W)}$ denote the quantile regression function $(\mb x, \mb h, \mb w) \mapsto Q_{\tau}(Y_{\bs s}^t \mid \mbX_{\bs s}^t = \mb x, \mbH_{\bs s}^t = \mb h, \mbW_{\bs s}^t = \mb w)$. We have that, for all $\mb x \in \bb{R}^{d}$ and $\mb w \in \bb{R}^k$, \begin{equation*} g^{\tau}_{\textrm{QPE}(\mb X \to \mb Y)}(\mb x,\mbw) = \frac{1}{\vert S\vert }\int_{S} \frac{\partial g^{\tau}_{Y \mid (\mb X, \mb H, \mb W)}(\mb x, \mb h_{\bs s}^1, \mb w)}{\partial \mb x} \mathrm{d}\bs s. \end{equation*}}
\par\medskip
\begin{proof}
\begin{hyphenrules}{nohyphenation}
The proof follows directly from the definition of the spatial $\tau$-QPE. Since the spatial $\tau$-QPE is formulated as a spatial average, the assumption of ergodicity ensures that the average over the full observational data—comprising a single observation at each location—converges to the underlying marginal causal effect as the spatial domain approaches infinity. Consequently, the identification result holds as an immediate corollary of Theorem~\ref{thm:1}.
\end{hyphenrules}
\end{proof}

\subsection{Proof of Theorem~\ref{thm:3}}
\label{proof:3}
\begin{lemma}
\label{lemma:2}
    Given any function $f(\bs s) \in \mathcal{L}_1$, i.e., absolute integrable function, with $\bs s \in S \subseteq \mathbb{R}^2$ and $S$ a rectangular domain, the following equality holds:
    \[\lim_{n \to \infty}\frac{1}{\sum_{i=1}^n\vert S_i\vert} \sum_{i=1}^n \vert S_i\vert f(\bs s_i) = \frac{1}{\vert S\vert} \int_{S} f(\bs s)\mathrm{d}\bs s,\] where $\bs s_i \in S_i$, and $(S_i)_{i=1,\dots, n}$ are any partitions of the space $S$ that satisfy the condition that all subrectangles (grid box, in our case) are non-overlapping and their union is $S$, with $\vert S_i\vert\to 0$ when $n \to \infty, \forall i,$ given the limit exists.
\end{lemma}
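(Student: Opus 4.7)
The plan is to recognize the left-hand side as a tagged Riemann sum for $f$ and invoke classical Riemann integration theory. First, since $(S_i)_{i=1,\dots,n}$ partitions the rectangle $S$ into non-overlapping subrectangles whose union is $S$, we have $\sum_{i=1}^n |S_i| = |S|$, so the normalizing factor simplifies to $1/|S|$. The claim therefore reduces to showing
\[
\sum_{i=1}^n |S_i|\, f(\bs s_i) \to \int_S f(\bs s)\,\mathrm{d}\bs s \quad \text{as } n \to \infty,
\]
where the left-hand side is precisely a Riemann sum with tag points $\bs s_i \in S_i$. The hypothesis $|S_i| \to 0$ for every $i$, together with the rectangular shape of each $S_i$, forces the partition mesh (the maximum diameter over all cells) to vanish.

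I would first establish the continuous case. If $f \in C(S)$, then $f$ is uniformly continuous on the compact rectangle $S$, and for any $\varepsilon > 0$ the mesh can be taken small enough that $|f(\bs s_i) - f(\bs t)| < \varepsilon/|S|$ for every $\bs t \in S_i$ and every $i$. A direct estimate then gives
\[
\left| \sum_{i=1}^n |S_i|\, f(\bs s_i) - \int_S f \right| \le \sum_{i=1}^n \int_{S_i} |f(\bs s_i) - f(\bs t)|\,\mathrm{d}\bs t \le \varepsilon,
\]
yielding the claim for continuous $f$.

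To extend this to general $f \in \mc{L}_1(S)$, I would exploit the density of $C(S)$ in $\mc{L}_1(S)$: for any $\varepsilon > 0$, pick $g \in C(S)$ with $\|f - g\|_1 < \varepsilon$ and decompose the error as
\[
\sum_{i=1}^n |S_i|\, f(\bs s_i) - \int_S f = \sum_{i=1}^n |S_i|(f - g)(\bs s_i) + \left( \sum_{i=1}^n |S_i|\, g(\bs s_i) - \int_S g \right) + \int_S (g - f).
\]
The middle term tends to zero by the continuous case established above, and the last term is bounded by $\varepsilon$ in absolute value by construction.

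The main obstacle is the first term: the pointwise differences $f(\bs s_i) - g(\bs s_i)$ at tag points are not controlled by $\|f - g\|_1$, so this term need not be small for arbitrary tag points when $f$ is unbounded. This is precisely where the hypothesis that the limit exists becomes essential. Since the overall limit on the left-hand side is assumed to exist, it suffices to identify it along any convenient subsequence of partitions. By the Lebesgue differentiation theorem, almost every point of $S$ is a Lebesgue point of $f$; restricting to a refined sequence of partitions whose tag points are Lebesgue points of $f$ identifies the subsequential limit as $\int_S f$, which must then coincide with the assumed overall limit, completing the proof.
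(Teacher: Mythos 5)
Your reduction to a tagged Riemann sum and your treatment of the continuous case are correct, and they amount to a rigorous version of everything the paper itself does: the paper's entire proof is the one-sentence assertion that the equality ``holds directly from the definition of the Riemann integral by Riemann sum,'' which tacitly treats $f$ as Riemann integrable. You go further and correctly diagnose the real difficulty for general $f\in\mc{L}_1$, namely that $\sum_{i}|S_i|(f-g)(\bs s_i)$ is not controlled by $\|f-g\|_1$ because point evaluations at the tags are invisible to the $L_1$ norm. That diagnosis is more than the paper offers.

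However, your patch does not close the gap. First, the hypothesis is that the limit exists for the \emph{given} tagged partitions $(\bs s_i,S_i)$; replacing the tags by Lebesgue points of $f$ produces a different sequence of Riemann sums, and the existence of the limit for one sequence does not force it to agree with the limit of the other. Indeed, read literally for arbitrary $\mc{L}_1$ functions and arbitrary tags, the lemma is false: take $S=[0,1]^2$, let $f$ be the indicator function of $\mathbb{Q}^2\cap S$, and choose rational tag points; every Riemann sum equals $|S|$, so the limit exists, yet $\int_S f=0$. Second, even along a sequence whose tags are Lebesgue points, the per-cell estimate $\bigl|\,|S_i|f(\bs s_i)-\int_{S_i}f\,\bigr|=|S_i|\cdot o(1)$ holds pointwise in the tag but not uniformly over $i$, so summing over cells does not yield $o(1)$ without an additional maximal-function or martingale argument, and such arguments control cell averages rather than point evaluations. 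The statement is true under the extra hypothesis that $f$ is Riemann integrable (bounded and almost everywhere continuous), which is evidently what both your continuous-case argument and the paper's one-line proof implicitly assume; in the paper's application $f$ is the site-wise partial derivative of the conditional quantile function, so this regularity would need to be imposed there. Your write-up would be correct if you stated that hypothesis explicitly and stopped after the continuous (or Riemann-integrable) case.
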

\begin{proof}
    The equality holds directly from the definition of the Riemann integral by Riemann sum, where the partition of the space is in accordance with the Cartesian coordinate system.
\end{proof}
We now prove Theorem~\ref{thm:3}.
\par\noindent
\textbf{Theorem 4.11} 
\textit{Let \((\mb Y, \mb X, \mb H, \mb W)\) follow the QLSCM defined in Definition~\ref{def:qlscm}. Let \((\bs s_n)_{n \in \mathbb{N}}\) be a sequence of spatial coordinates such that the marginalized process \((H_{\bs s_n}^1)_{n \in \mathbb{N}}\) has realizations $(\mb h_{\bs s_n}^1)_{n \in \mathbb{N}}$, and the spatial coordinates $\bs s_i$ coincide with the centroids of the gridded blocks $S_i$, where $\bigcup_i S_i = S$. Let \(\hat{g}^{\tau}_{Y \mid (\mb X, \mb W)} = (\hat{g}^{m_i, \tau}_{Y\mid (\mb X, \mb W)})_{m_i \in \mathbb{N}}\) be an estimator satisfying Assumption~\ref{ass:1}. We then have the following consistency result: for all \(\mbx \in \mathcal{X}, \mbw \in \mathcal{W}\), \(\delta > 0\), and \(\alpha > 0\), there exists \(N \in \mathbb{N}\) such that, for all \(n \geq N\), there exists \(M_n \in \mathbb{N}\) such that for all \(\min_{j \in \{1,\dots,n\}}{m_j} \geq M_n\) we have that \(\PP\left(\left\|\hat{g}^{nm, \tau}_{\textrm{QPE}(\mb X \to \mb Y)}(\bs{X}_n^{m_n}, \bs{Y}_n^{m_n}, \bs{W}_n^{m_n})(\mb x,\mbw) - g^{\tau}_{\textrm{QPE}(\mb X \to \mb Y)}(\mb x,\mbw)\right\| > \delta \right) \leq \alpha.\)}
\par\medskip

\begin{proof}
    Consider fixed \( \mbx \in \mathcal{X}, \mbw \in \mathcal{W}\) and $\bs s_i \in S_i$, where $\bigcup_i S_i = S$. For every \( n,m_i \in \mathbb{N} \), we have that
\begin{align*}
    & v(\bs X^{m_n}_n, \bs Y^{m_n}_n, \bs W^{m_n}_n) := \qest - \qestimand \\ &= \frac{1}{\sum_{i=1}^n \vert S_i\vert}\sum_{i=1}^n \vert S_i\vert \frac{\partial \hat{g}^{m_i, \tau}_{Y \mid (\mbX,\mbW)}(\bs X_{\bs s_i}^{m_i}, \bs Y_{\bs s_i}^{m_i}, \bs W_{\bs s_i}^{m_i})(\mbx, \mbw)}{\partial \mbx} - \frac{1}{\vert S\vert} \int_{S} \frac{\partial g^{\tau}_{Y \mid (\mbX,\mbH,\mbW)}(\mbx, \mbh_{\bs s}^1, \mbw)}{\partial \mbx}\mathrm{d}\bs s, \\
    &= \frac{1}{\sum_{i=1}^n \vert S_i\vert} \sum_{i=1}^{n} \vert S_i\vert \frac{\partial \left( \hat{g}^{m_i, \tau}_{Y\mid (\mbX,\mbW)}(\bs X_{\bs s_i}^{m_i}, \bs Y_{\bs s_i}^{m_i}, \bs W_{\bs s_i}^{m_i})(\mbx,\mbw) - g^{\tau}_{Y \mid (\mbX,\mbH,\mbW)}(\mbx, \mbh^1_{\bs s_i}, \mbw) \right)}{\partial \mbx}, \\
    &\quad\quad\quad \text{by Lemma \ref{lemma:2} as $\{S_i\}$ is a valid division of $S$.}
\end{align*}

Let $\alpha > 0$ be arbitrary. By Assumption~\ref{ass:1}, there exists $N \in \bb{N}$, we can for any such \( n > N \) there exists \( M_n \in \mathbb{N} \), such that for all \( i = 1,\ldots,n \) and all \( m_i > M_n \), it holds that 
\[
\mathbb{P}\left(\left\|\frac{\partial \hat{g}^{m_i, \tau}_{Y\mid (\mbX,\mbW)}(\bs X_{\bs s_i}^{m_i}, \bs Y_{\bs s_i}^{m_i}, \bs W_{\bs s_i}^{m_i})(\mbx, \mbw)}{\partial \mbx} - \frac{\partial g^{\tau}_{Y\mid (\mbX,\mbH,\mbW)}(\mbx, \mbh^1_{\bs s_i},\mbw)}{\partial \mbx} \right\| > \delta \right) \leq \alpha / n.
\] Hence, for any $\min_i m_i > M_n$, we have that
\begin{align*}
   & \mathbb{P}\left( \left\| v(\bs X^{m_n}_n, \bs Y^{m_n}_n, \bs W^{m_n}_n) \right\| > \delta \right) \\ &\leq \PP \left( \frac{1}{\sum_{i=1}^n \vert S_i\vert} \sum_{i=1}^{n} \vert S_i\vert \left\| \frac{\partial \hat{g}^{m_i, \tau}_{Y\mid (\mbX,\mbW)}(\bs X_{\bs s_i}^{m_i}, \bs Y_{\bs s_i}^{m_i}, \bs W_{\bs s_i}^{m_i})(\mbx,\mbw)}{\partial \mbx} - \frac{\partial g^{\tau}_{Y \mid (\mbX,\mbH,\mbW)}(\mbx, \mbh^1_{\bs s_i},\mbw)}{\partial \mbx} \right\| > \delta \right) \\
    &\leq \PP \left( \bigcup_{i=1}^n \left\{\frac{\vert S_i\vert }{\sum_{i=1}^n \vert S_i\vert}\left\| \frac{\partial \hat{g}^{m_i, \tau}_{Y \mid (\mbX,\mbW)}(\bs X_{\bs s_i}^{m_i}, \bs Y_{\bs s_i}^{m_i}, \bs W_{\bs s_i}^{m_i})(\mbx,\mbw)}{\partial \mbx} - \frac{\partial g^{\tau}_{Y \mid (\mbX,\mbH,\mbW)}(\mbx, \mbh^1_{\bs s_i},\mbw)}{\partial \mbx} \right\| > \frac{\delta \vert S_i\vert}{\sum_{i=1}^n \vert S_i\vert}\right\} \right), \\ 
&\quad\quad\quad\quad\quad\quad\quad\quad\quad\quad\text{by the triangular inequality}\\
    &= \PP \left( \bigcup_{i=1}^n \left\{\left\| \frac{\partial \hat{g}^{m_i, \tau}_{Y \mid (\mbX,\mbW)}(\bs X_{\bs s_i}^{m_i}, \bs Y_{\bs s_i}^{m_i}, \bs W_{\bs s_i}^{m_i})(\mbx,\mbw)}{\partial \mbx} - \frac{\partial g^{\tau}_{Y \mid (\mbX,\mbH,\mbW)}(\mbx, \mbh^1_{\bs s_i},\mbw)}{\partial \mbx} \right\| > \delta \right\} \right) \\
    &\leq \sum_{i=1}^{n} \PP \left( \left\| \frac{\partial \hat{g}^{m_i, \tau}_{Y \mid (\mbX,\mbW)}(\bs X_{\bs s_i}^{m_i}, \bs Y_{\bs s_i}^{m_i}, \bs W_{\bs s_i}^{m_i})(\mbx,\mbw)}{\partial \mbx} - \frac{\partial g^{\tau}_{Y \mid (\mbX,\mbH,\mbW)}(\mbx, \mbh^1_{\bs s_i},\mbw)}{\partial \mbx}\right\| > \delta \right) \\
    &\leq \sum_{i=1}^{n} \frac{\alpha}{n} = \alpha,
\end{align*} and hence the statement has been proved.
\end{proof}

\subsection{Proof of Theorem~\ref{thm:4}}
\label{proof:4}

\begin{definition}
A time series $\{u_t\}_{t \in \mathbb{Z}}$ is $\alpha$-mixing if:
\begin{equation}
\label{eq:mixing}
\alpha(n) = \sup_{A \in \mathcal{F}_{-\infty}^t, B \in \mathcal{F}_{t+n}^{\infty}} |\PP(A \cap B) - \PP(A)\PP(B)| \rightarrow 0 \quad \text{as} \quad n \rightarrow \infty
\end{equation}
where $\mathcal{F}_j^k$ is the $\sigma$-algebra generated by $\{u_t : j \leq t \leq k\}$.
\end{definition}

We wish to prove the sufficient conditions to the Assumption~\ref{ass:1} which is as
follows:

\setcounter{assumption}{0}
\begin{assumption}
    There exists $\mathcal{X} \subset \mathbb{R}^d$, \textit{s.t.}
for all $\mathbf{x \in} \mathcal{X}, \mathbf{w} \in  \mathcal{W} \text{ and } \bs{s}_i \in S
\subseteq \mathbb{R}^2$, it holds that
\[ \frac{\partial \hat{g}^{m_i, \tau}_{Y \mid (\mathbf{X},\mbW)}
   (\bs{X}_{\bs{s}_i}^{m_i}, \bs{Y}_{\bs{s}_i}^{m_i}, \bs{W}_{\bs{s}_i}^{m_i}) (\mathbf{\mathbf{x}}, \mbw)}{\partial
   \mathbf{x}} - \dfrac{\partial g_{Y \mid (\mathbf{\mathbf{X}}, \mathbf{H}, \mbW)}^{\tau} (\mathbf{x}, \mathbf{h}^1_{\bs{s}_i},\mbw)}{\partial
   \mathbf{x}}  \; \overset{\PP}{\rightarrow} 0 \] as $m_i \rightarrow \infty.$
\end{assumption}

\begin{theorem}[Sufficient conditions for Assumption~\ref{ass:1}]
\label{thm:4}
Suppose there exist functions $f_{11}: \mathbb{R}^{d} \times \mathbf{\Omega} \rightarrow \mathbb{R}$, $f_{12}: \mathbb{R}^{\ell} \times \mathbb{R}^{k} \rightarrow \mathbb{R}$, and $f_2 : \mathbb{R}^{\ell + k + 1} \rightarrow \mathbb{R}$ such that the QLSCM admits the multiplicatively separable form
\begin{align}
  Y_{\bs{s}}^t & = g(\mathbf{X}_{\bs{s}}^t, \mathbf{H}_{\bs{s}}^t, \mathbf{W}_{\bs{s}}^t, \varepsilon_{\bs{s}}^t) \nonumber\\
  & = f_{11} (\mathbf{X}^t_{\bs{s}}, \bs\beta) \cdot f_{12} (\mathbf{H}_{\bs{s}}^t, \mathbf{W}_{\bs{s}}^t) + f_2 (\mathbf{H}_{\bs{s}}^t, \mathbf{W}_{\bs{s}}^t, \varepsilon_{\bs{s}}^t) \label{eq:qlscm_form}\\
  & = f_{11} (\mathbf{X}^t_{\bs{s}}, \bs\beta) \cdot f_{12} (\mathbf{H}_{\bs{s}}^t, \mathbf{W}_{\bs{s}}^t) + \vartheta_{\bs s}^t,
\end{align}
for all $(\bs{s}, t)$, where $\vartheta_{\bs s}^t := f_2 (\mathbf{H}_{\bs{s}}^t, \mathbf{W}_{\bs{s}}^t, \varepsilon_{\bs{s}}^t)$. Here, $\mathbf{\Omega}$ is an open subset of $\mathbb{R}^d$ that contains the unknown $d$-dimensional parameter vector $\bs\beta$. Note that $\bs\beta$ does not include an intercept term.

Following the notation in Assumption~\ref{ass:1}, for each location $\bs{s}_i \in S$, there are $m_i$ samples representing $m_i$ time instances that form a time series. For all $\bs{s}_i \in S$ and all $t \in \{1,\dots,m_i\}$, we impose the following assumptions on the error processes (we omit the location subscript for notational simplicity):

\textbf{Assumption C.1.} For the distribution $F_t(z)$ of $\vartheta_t$, we have $F_t(0) = P(\vartheta_t \leq 0) = \tau$, where $0 < \tau < 1$ for all $t$.

\textbf{Assumption C.2.} The error process $\{\vartheta_t\}$ is $\alpha$-mixing as defined in \eqref{eq:mixing}.

\textbf{Assumption C.3.} There exist positive constants $a_0$ and $b_0$ such that for all $|z| \leq b_0$ and all $t$, 
\[
\min[F_t(|z|) - F_t(0), F_t(0) - F_t(-|z|)] \geq a_0|z|.
\]

Additionally, we require the following assumptions on the covariates and regression function:

\textbf{Assumption C.4.} The function $f_{11}(\mathbf{x}, \bs\beta)$ is linear in $\mathbf{x}$ for all $\bs\beta \in \mathbf{\Omega}$. That is, $f_{11}(\mathbf{x}, \bs\beta) = \bs\beta^{\top} \mathbf{x}$.

\textbf{Assumption C.5.} (Identifiable Uniqueness) For every $\epsilon > 0$, there exists a positive constant $\delta_\epsilon$ such that for all $\bs\beta \in \bs\Omega$,
\begin{equation*}
\liminf_{m_i \to \infty} \inf_{\|\tilde{\bs\beta}-\bs\beta\| \geq \epsilon} {m_i}^{-1}\sum_{t=1}^{m_i}\left|f_{11}( \mathbf{x}_t, \tilde{\bs\beta}) f_{12}(\mathbf{h}_t, \mathbf{w}_t) - f_{11}( \mathbf{x}_t, \bs\beta) f_{12}(\mathbf{h}_t, \mathbf{w}_t)\right| > \delta_\epsilon.
\end{equation*}

\textbf{Assumption C.6.} (Moment Boundedness) For some $\epsilon > 0$ and all $\bs\beta \in \bs\Omega$,
\begin{equation*}
\limsup_{m_i \to \infty} {m_i}^{-1}\sum_{t=1}^{m_i}\left|f_{11}( \mathbf{x}_t, \bs\beta) f_{12}(\mathbf{h}_t, \mathbf{w}_t)\right|^{1+\epsilon} < \infty.
\end{equation*}

\textbf{Assumption C.7.} For every $\bs\beta \in \bs\Omega$ and every $\epsilon > 0$, there exists a $\delta > 0$ such that
\begin{equation*}
\limsup_{m_i \to \infty} \sup_{\|\tilde{\bs\beta}-\bs\beta\| \leq \delta} {m_i}^{-1}\sum_{t=1}^{m_i}\left|f_{11}( \mathbf{x}_t, \tilde{\bs\beta}) f_{12}(\mathbf{h}_t, \mathbf{w}_t) - f_{11}( \mathbf{x}_t, \bs\beta) f_{12}(\mathbf{h}_t, \mathbf{w}_t)\right| < \epsilon.
\end{equation*}

Under Assumptions C.1--C.7, Assumption~\ref{ass:1} is satisfied.
\end{theorem}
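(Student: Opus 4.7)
The plan is to recast Assumption~\ref{ass:1} as a classical M-estimator consistency problem for the parameter $\bs\beta$ and then transfer the result to the partial derivative via continuous mapping. Under Assumption C.1 the error $\vartheta_{\bs s}^t$ has conditional $\tau$-quantile zero, so the true conditional $\tau$-quantile at site $\bs s_i$ equals $f_{11}(\mathbf{x}, \bs\beta) \cdot f_{12}(\mathbf{h}_{\bs s_i}^1, \mathbf{w})$, and hence by C.4 the target partial derivative reduces to $\bs\beta \cdot f_{12}(\mathbf{h}_{\bs s_i}^1, \mathbf{w})$. Because $f_{12}$ does not depend on $\bs\beta$, it suffices to show that the site-wise quantile regression estimator $\hat{\bs\beta}_{m_i}$ of $\bs\beta$ satisfies $\hat{\bs\beta}_{m_i} \overset{\PP}{\to} \bs\beta$.

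The first step is to set up the empirical pinball criterion
\begin{equation*}
Q_{m_i}(\tilde{\bs\beta}) := m_i^{-1} \sum_{t=1}^{m_i} \rho_\tau\bigl(Y_{\bs s_i}^t - f_{11}(\mathbf{X}_{\bs s_i}^t, \tilde{\bs\beta}) f_{12}(\mathbf{h}_{\bs s_i}^1, \mathbf{W}_{\bs s_i}^t)\bigr),
\end{equation*}
and establish identification at the population level. By C.1, $\bs\beta$ minimizes $\E[Q_{m_i}(\tilde{\bs\beta})]$; by combining the $L_1$ identifiable uniqueness C.5 with the local density lower bound C.3, one then obtains the strict separation
\begin{equation*}
\liminf_{m_i \to \infty} \inf_{\|\tilde{\bs\beta} - \bs\beta\| \geq \epsilon} \bigl\{\E Q_{m_i}(\tilde{\bs\beta}) - \E Q_{m_i}(\bs\beta)\bigr\} \geq c(\epsilon, a_0, \delta_\epsilon) > 0,
\end{equation*}
since C.3 converts a lower bound on the $L_1$ signal distance into a lower bound on the expected excess pinball loss through the fact that the pinball loss grows at rate at least $a_0$ per unit deviation of the predicted quantile from zero, in a neighborhood of size $b_0$.

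The second step is to prove a uniform weak law of large numbers, $\sup_{\tilde{\bs\beta} \in \bs\Omega} |Q_{m_i}(\tilde{\bs\beta}) - \E Q_{m_i}(\tilde{\bs\beta})| \overset{\PP}{\to} 0$. Pointwise convergence follows from a standard LLN for $\alpha$-mixing sequences (e.g.\ Davidson's $L_1$-mixingale LLN) powered by C.2 and by the $(1+\epsilon)$-integrability implied by C.6, noting that $\rho_\tau$ is $1$-Lipschitz, so the loss inherits the moment bound from the signal. Uniformity across $\bs\Omega$ follows by a bracketing/chaining argument built on C.7: the $L_1$ modulus-of-continuity bound on $\tilde{\bs\beta} \mapsto f_{11}(\cdot, \tilde{\bs\beta}) f_{12}(\cdot, \cdot)$ yields stochastic equicontinuity of $Q_{m_i}$. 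Combining the identification step with the ULLN through the standard argmin theorem then delivers $\hat{\bs\beta}_{m_i} \overset{\PP}{\to} \bs\beta$, after which the continuous mapping theorem applied to $\tilde{\bs\beta} \mapsto \tilde{\bs\beta} \cdot f_{12}(\mathbf{h}_{\bs s_i}^1, \mathbf{w})$ produces Assumption~\ref{ass:1}.

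The main obstacle will be the uniform law of large numbers: the pinball loss is non-smooth, so the familiar smooth-function ULLN machinery does not apply. The delicate step is to convert the $L_1$ modulus condition C.7 on the parametric signal into genuine stochastic equicontinuity of $Q_{m_i}$ under $\alpha$-mixing dependence, and to combine this with C.3---which governs the local behavior of the error distribution and underpins the curvature of $\E Q_{m_i}$ near $\bs\beta$---so that the separation in expected loss can be concluded from separation in $L_1$ signal distance. Tying these mixing, non-smoothness, and quantile-identification ingredients together rigorously is by far the most technically demanding part of the argument.
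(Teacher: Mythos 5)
Your overall decomposition matches the paper's: use C.1 to identify the true conditional $\tau$-quantile as the signal $f_{11}(\mathbf{x},\bs\beta)\,f_{12}(\mathbf{h}_{\bs s_i}^1,\mathbf{w})$, use C.4 to reduce the target partial derivative to $f_{12}(\mathbf{h}_{\bs s_i}^1,\mathbf{w})\cdot\bs\beta$, prove consistency of the site-wise quantile regression coefficient, and finish with the continuous mapping theorem. The difference is in how the central consistency claim is discharged. The paper does not prove it: Assumptions C.1--C.3 and C.5--C.7 are imported as the hypotheses of the consistency theorem for nonlinear quantile regression under weak dependence in \citet{oberhofer2016asymptotic}, and the paper simply invokes that result. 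You instead propose to rebuild that theorem from scratch (population separation from C.1/C.3/C.5, a uniform law of large numbers from C.2/C.6/C.7 via bracketing and the Lipschitz property of the pinball loss, then the argmin theorem). That architecture is sound and is essentially what the cited reference does internally, so your route is more self-contained but leaves the two hardest steps --- the conversion of the $L_1$ signal separation into a strictly positive excess expected pinball loss via C.3, and stochastic equicontinuity of the non-smooth criterion under $\alpha$-mixing --- as acknowledged obstacles rather than completed arguments. Given that the assumptions were designed to plug into an existing theorem, citing it is the economical move; your version would need several additional pages to be rigorous.

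One point you should repair: your empirical criterion includes $f_{12}(\mathbf{h}_{\bs s_i}^1,\mathbf{W}_{\bs s_i}^t)$ as a known multiplier, but $\mathbf{h}_{\bs s_i}^1$ is latent and $f_{12}$ is unknown, so the estimator $\hat{g}^{m_i,\tau}_{Y\mid(\mathbf{X},\mathbf{W})}$ cannot use it; moreover $\bs\beta$ and the scalar $f_{12}(\mathbf{h}_{\bs s_i}^1,\mathbf{w})$ are not separately identifiable from data at a single site. The paper handles this by absorbing the unknown constant into the identifiable product parameter $\tilde{\bs\beta}_{\bs s_i,\mathbf{w}}:=f_{12}(\mathbf{h}_{\bs s_i}^1,\mathbf{w})\cdot\bs\beta$, which is exactly the partial derivative that Assumption~\ref{ass:1} asks you to estimate consistently; your argument should target that product directly rather than $\bs\beta$ alone, after which the final continuous-mapping step becomes unnecessary.
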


\begin{proof} 
By the time-invariance of $\mathbf{H}_{\bs s}^t$ and our assumption that we observe only a single replicate at each time instance, the probability statement $\PP(\mbH_{\bs{s}}^1 = \cdots = \mbH_{\bs{s}}^{m_i}) = 1$ implies that the realizations of the hidden confounder processes satisfy $\mb h_{\bs{s}}^1 = \cdots = \mb h_{\bs{s}}^{m_i} = \mb h_{\bs{s}}$, which represents a fixed (but unknown) value at location $\bs{s}$.

\textbf{Step 1: Establishing the conditional quantile function.}
Under Assumption C.1, we have $F_t(0) = P(\vartheta_t \leq 0) = \tau$, which implies that the $\tau$-th quantile of the error term satisfies $Q_\tau(\vartheta_{\bs{s}}^t \mid \mathbf{H}_{\bs{s}}^t, \mathbf{W}_{\bs{s}}^t) = 0$. From the structural form in \eqref{eq:qlscm_form}, the $\tau$-th conditional quantile function of $Y_{\bs{s}}^t$ given $(\mathbf{X}_{\bs{s}}^t, \mathbf{H}_{\bs{s}}^t, \mathbf{W}_{\bs{s}}^t)$ at evaluation point $(\mathbf{x}, \mathbf{h}_{\bs{s}}, \mathbf{w})$ is
\begin{equation*}
g_{Y \mid (\mathbf{X}, \mathbf{H}, \mbW)}^{\tau}(\mathbf{x}, \mathbf{h}_{\bs{s}}, \mathbf{w}) = Q_\tau(Y_{\bs{s}}^t \mid \mathbf{X}_{\bs{s}}^t = \mathbf{x}, \mathbf{H}_{\bs{s}}^t = \mathbf{h}_{\bs{s}}, \mathbf{W}_{\bs{s}}^t = \mathbf{w}) = f_{11}(\mathbf{x}, \bs\beta) \cdot f_{12}(\mathbf{h}_{\bs{s}}, \mathbf{w}).
\end{equation*}

Since $\mathbf{H}_{\bs{s}}^t \equiv \mb h_{\bs{s}}$ is constant at location $\bs{s}$, for any fixed evaluation point $\mathbf{w}$, the quantity $f_{12}(\mb h_{\bs{s}}, \mathbf{w})$ is a constant (though unknown due to the latent nature of $\mathbf{h}_{\bs{s}}$).

\textbf{Step 2: Computing partial derivatives via Assumption C.4.}
Under Assumption C.4, $f_{11}(\mathbf{x}, \bs\beta) = \bs\beta^{\top} \mathbf{x}$, which is linear in $\mathbf{x}$. Therefore, the partial derivative of the true conditional quantile function with respect to $\mathbf{x}$ at the fixed evaluation point $(\mathbf{x}, \mathbf{h}_{\bs{s}_i}, \mathbf{w})$ is
\begin{equation*}
\frac{\partial g_{Y \mid (\mathbf{X}, \mathbf{H}, \mbW)}^{\tau} (\mathbf{x}, \mathbf{h}_{\bs{s}_i}, \mbw)}{\partial \mathbf{x}} = \frac{\partial}{\partial \mathbf{x}} \left[ \bs\beta^{\top} \mathbf{x} \cdot f_{12}(\mathbf{h}_{\bs{s}_i}, \mathbf{w}) \right] = f_{12}(\mathbf{h}_{\bs{s}_i}, \mathbf{w}) \cdot \bs\beta,
\end{equation*}
where we used the fact that $f_{12}(\mathbf{h}_{\bs{s}_i}, \mathbf{w})$ is constant with respect to $\mathbf{x}$.

This establishes the target quantity for Assumption~\ref{ass:1}. The key observation is that the partial derivative is a scalar multiple (determined by $f_{12}(\mathbf{h}_{\bs{s}_i}, \mathbf{w})$) of the parameter vector $\bs\beta$. Although $f_{12}(\mathbf{h}_{\bs{s}_i}, \mathbf{w})$ is unknown due to the latent confounder $\mathbf{h}_{\bs{s}_i}$, it remains fixed at location $\bs{s}_i$ for a given evaluation point $\mathbf{w}$.

\textbf{Step 3: Quantile regression estimation at location $\bs{s}_i$.}
Given $m_i$ time series observations $\{( \mathbf{X}_{\bs{s}_i}^t, Y_{\bs{s}_i}^t,  \mathbf{W}_{\bs{s}_i}^t)\}_{t=1}^{m_i}$ at location $\bs{s}_i$, we estimate the $\tau$-th conditional quantile function $\hat{g}^{m_i, \tau}_{Y \mid (\mathbf{X}, \mbW)}$ via quantile regression.

At a fixed evaluation point $(\mathbf{x}, \mathbf{w})$, the estimated quantile function is $\hat{g}^{m_i, \tau}_{Y \mid (\mathbf{X}, \mbW)}(\mathbf{X}_{\bs{s}_i}^{m_i}, \bs{Y}_{\bs{s}_i}^{m_i}, \bs{W}_{\bs{s}_i}^{m_i}) (\mathbf{x}, \mbw)$, which depends on all observed data $\{( \mathbf{X}_{\bs{s}_i}^t,  Y_{\bs{s}_i}^t, \mathbf{W}_{\bs{s}_i}^t)\}_{t=1}^{m_i}$ at location $\bs{s}_i$.

Since $f_{11}(\mathbf{x}, \bs\beta) = \bs\beta^{\top} \mathbf{x}$ is linear in $\mathbf{x}$ by Assumption C.4, and $f_{12}(\mathbf{h}_{\bs{s}_i}, \mathbf{w})$ is constant with respect to $\mathbf{x}$ (as both $\mathbf{h}_{\bs{s}_i}$ and $\mathbf{w}$ are fixed), the partial derivative of the true quantile function is
\begin{equation*}
\frac{\partial g_{Y \mid (\mathbf{X}, \mathbf{H}, \mbW)}^{\tau} (\mathbf{x}, \mathbf{h}_{\bs{s}_i}, \mbw)}{\partial \mathbf{x}} = f_{12}(\mathbf{h}_{\bs{s}_i}, \mbw) \cdot \bs\beta.
\end{equation*}

Similarly, the partial derivative of the estimated quantile function takes the form
\begin{equation*}
\frac{\partial \hat{g}^{m_i, \tau}_{Y \mid (\mathbf{X}, \mbW)}(\mathbf{X}_{\bs{s}_i}^{m_i}, \bs{Y}_{\bs{s}_i}^{m_i}, \bs{W}_{\bs{s}_i}^{m_i}) (\mathbf{x}, \mbw)}{\partial \mathbf{x}} = \hat{f}_{12}^{m_i}(\mathbf{h}_{\bs{s}_i}, \mbw) \cdot \hat{\bs\beta}^{m_i},
\end{equation*}
where $\hat{\bs\beta}^{m_i}$ is the quantile regression estimator for $\bs\beta$ and $\hat{f}_{12}^{m_i}(\mathbf{h}_{\bs{s}_i}, \mbw)$ represents the estimated scaling factor.

\textbf{Step 4: Establishing convergence.}
To prove Assumption~\ref{ass:1}, we need to show that 
\begin{equation*}
\frac{\partial \hat{g}^{m_i, \tau}_{Y \mid (\mathbf{X}, \mbW)}(\mathbf{X}_{\bs{s}_i}^{m_i}, \bs{Y}_{\bs{s}_i}^{m_i}, \bs{W}_{\bs{s}_i}^{m_i}) (\mathbf{x}, \mbw)}{\partial \mathbf{x}} - \frac{\partial g_{Y \mid (\mathbf{X}, \mathbf{H}, \mbW)}^{\tau} (\mathbf{x}, \mathbf{h}_{\bs{s}_i}, \mbw)}{\partial \mathbf{x}} \overset{\PP}{\rightarrow} \mathbf{0}
\end{equation*}
as $m_i \rightarrow \infty$.

Since $\mathbf{h}_{\bs{s}_i}$ is fixed (though unobserved) at location $\bs{s}_i$ and $\mathbf{w}$ is a fixed evaluation point, $f_{12}(\mathbf{h}_{\bs{s}_i}, \mbw)$ is a constant. Denote this constant as $c_{\bs{s}_i, \mathbf{w}} := f_{12}(\mathbf{h}_{\bs{s}_i}, \mbw)$. Then the convergence statement becomes
\begin{equation*}
\hat{f}_{12}^{m_i}(\mathbf{h}_{\bs{s}_i}, \mbw) \cdot \hat{\bs\beta}^{m_i} - c_{\bs{s}_i, \mathbf{w}} \cdot \bs\beta \overset{\PP}{\rightarrow} \mathbf{0}.
\end{equation*}

By the multiplicative separability and linearity in Assumption C.4, the quantile regression model at location $\bs{s}_i$ can be written as
\begin{equation*}
Q_\tau(Y \mid \mathbf{X}, \mathbf{W} = \mathbf{w}) = \hat{c}_{\bs{s}_i, \mathbf{w}}^{m_i} \cdot (\hat{\bs\beta}^{m_i})^{\top} \mathbf{X},
\end{equation*}
where $\hat{c}_{\bs{s}_i, \mathbf{w}}^{m_i} := \hat{f}_{12}^{m_i}(\mathbf{h}_{\bs{s}_i}, \mbw)$. Since $c_{\bs{s}_i, \mathbf{w}}$ and $\bs\beta$ appear multiplicatively, and $c_{\bs{s}_i, \mathbf{w}}$ depends on the unobserved $\mathbf{h}_{\bs{s}_i}$, we absorb the constant into the parameter estimate. Define $\tilde{\bs\beta}_{\bs{s}_i, \mathbf{w}} := c_{\bs{s}_i, \mathbf{w}} \cdot \bs\beta$, which is the identifiable quantity from the data.

The quantile regression estimator $\hat{\tilde{\bs\beta}}_{\bs{s}_i, \mathbf{w}}^{m_i}$ satisfies
\begin{equation*}
\hat{\tilde{\bs\beta}}_{\bs{s}_i, \mathbf{w}}^{m_i} = \argmin_{\tilde{\bs\beta}} \sum_{t=1}^{m_i} \rho_\tau\left(Y_{\bs{s}_i}^t - \tilde{\bs\beta}^{\top} \mathbf{X}_{\bs{s}_i}^t\right),
\end{equation*}
where $\rho_\tau(u) = u(\tau - \mathbb{I}\{u < 0\})$ is the pinball loss function. 

The consistency $\hat{\tilde{\bs\beta}}_{\bs{s}_i, \mathbf{w}}^{m_i} \overset{\PP}{\rightarrow} \tilde{\bs\beta}_{\bs{s}_i, \mathbf{w}} = c_{\bs{s}_i, \mathbf{w}} \cdot \bs\beta$ follows from Assumptions C.1--C.7 via the arguments in \citet{oberhofer2016asymptotic} (Appendix: Consistency, p.~701).

By the continuous mapping theorem, since
\begin{equation*}
\frac{\partial \hat{g}^{m_i, \tau}_{Y \mid (\mathbf{X}, \mbW)}(\mathbf{X}_{\bs{s}_i}^{m_i}, \bs{Y}_{\bs{s}_i}^{m_i}, \bs{W}_{\bs{s}_i}^{m_i}) (\mathbf{x}, \mbw)}{\partial \mathbf{x}} = \hat{\tilde{\bs\beta}}_{\bs{s}_i, \mathbf{w}}^{m_i}
\end{equation*}
and 
\begin{equation*}
\frac{\partial g_{Y \mid (\mathbf{X}, \mathbf{H}, \mbW)}^{\tau} (\mathbf{x}, \mathbf{h}_{\bs{s}_i}, \mbw)}{\partial \mathbf{x}} = \tilde{\bs\beta}_{\bs{s}_i, \mathbf{w}},
\end{equation*}
we have
\begin{equation*}
\frac{\partial \hat{g}^{m_i, \tau}_{Y \mid (\mathbf{X}, \mbW)}(\mathbf{X}_{\bs{s}_i}^{m_i}, \bs{Y}_{\bs{s}_i}^{m_i}, \bs{W}_{\bs{s}_i}^{m_i}) (\mathbf{x}, \mbw)}{\partial \mathbf{x}} - \frac{\partial g_{Y \mid (\mathbf{X}, \mathbf{H}, \mbW)}^{\tau} (\mathbf{x}, \mathbf{h}_{\bs{s}_i}, \mbw)}{\partial \mathbf{x}} \overset{\PP}{\rightarrow} \mathbf{0}
\end{equation*}
as $m_i \rightarrow \infty$, thereby verifying Assumption~\ref{ass:1}.
\end{proof}

\subsection{Discussion of Assumptions}
\begin{hyphenrules}{nohyphenation}
\begin{itemize}
\item \textbf{Assumption C.1} ensures the quantile condition $Q_\tau(\vartheta_t) = 0$ holds, which is critical for identifying the quantile function from the structural equation.
\item \textbf{Assumptions C.2--C.3} provide the necessary temporal dependence and regularity conditions on the error process $\{\vartheta_t\}$ to apply asymptotic theory for dependent data.
\item \textbf{Assumption C.5} (Identifiable Uniqueness) ensures that $\tilde{\bs\beta}_{\bs{s}_i, \mathbf{w}}$ is uniquely identified by requiring sufficient separation between different parameter values in terms of their fitted values, even when $f_{12}(\mathbf{h}_{\bs{s}_i}, \mbw)$ scales the relationship.
\item \textbf{Assumption C.6} (Moment Boundedness) prevents the regression function $f_{11}(\mathbf{x}_t, \bs\beta) \cdot f_{12}(\mathbf{h}_t, \mathbf{w}_t)$ from growing too rapidly, ensuring that sample averages converge.
\item \textbf{Assumption C.7} provides the necessary smoothness to apply uniform laws of large numbers over $\bs\Omega$.
\end{itemize}
\end{hyphenrules}

\baselineskip=12pt
\bibliographystyleapp{agsm}
\bibliographyapp{ref}

\end{appendix}

\end{document}